 \gdef\xxxmark{%
   \expandafter\ifx\csname @mpargs\endcsname\relax 
     \expandafter\ifx\csname @captype\endcsname\relax 
       \marginpar{xxx}
     \else
       xxx 
     \fi
   \else
     xxx 
   \fi}
 \gdef\xxx{\@ifnextchar[\xxx@lab\xxx@nolab}
 \long\gdef\xxx@lab[#1]#2{{\bf [\xxxmark #2 ---{\sc #1}]}}
 \long\gdef\xxx@nolab#1{{\bf [\xxxmark #1]}}
\let\realbfseries=\bfseries
\def\bfseries{\realbfseries\boldmath}
\newif\ifabstract
\newif\iffull
\let\epsilon=\varepsilon
\newtheorem{theorem}{Theorem}
\newtheorem{definition}{Definition}
\newtheorem{lemma}{Lemma}
\newtheorem{corollary}{Corollary}
\newtheorem{fact}{Fact}
\begin{document}

\bibliographystyle{ieeetr}
\newcommand{\X}{\mathcal{X}}
\newcommand{\Y}{\mathcal{Y}}
\newcommand{\Z}{\mathcal{Z}}
\renewcommand{\S}{\mathcal{S}}
\renewcommand{\P}{\mathcal{P}}
\newcommand{\Q}{\mathcal{Q}}
\newcommand{\T}{\mathcal{T}}
\newcommand{\A}{\mathcal{A}}
\newcommand{\E}{\mathcal{E}}
\newcommand{\V}{\mathcal{V}}
\newcommand{\W}{\mathcal{W}}
\newcommand{\U}{\mathcal{U}}
\newcommand{\C}{\mathcal{C}}

\title{\LARGE \bf
Error Exponent for Multiple-Access Channels:\\Lower Bounds}

\author{Ali Nazari, Achilleas Anastasopoulos and S. Sandeep Pradhan
\thanks{This work was supported by NSF grants CCF 0427385 and CCF
  0448115. The material in this paper was presented in part at the
  Information Theory and Applications conference, San Diego, and also
  at the Conference on Information Sciences and Systems, the John
  Hopkins University, Baltimore, 2009.}, \\
Department of Electrical Engineering and Computer Science, \\
University of Michigan, Ann Arbor, MI 48109, USA \\
email: {\tt\small anazari@umich.edu, anastas@umich.edu,
pradhanv@eecs.umich.edu}}

\maketitle \thispagestyle{empty} \pagestyle{plain}


\vspace{-0.3in}
\begin{abstract}
A unified framework to obtain all known lower bounds (random coding,
typical random coding and expurgated bound) on the reliability
function of a point-to-point discrete memoryless channel (DMC) is
presented. By using a similar idea for a two-user discrete
memoryless (DM) multiple-access channel (MAC), three lower bounds on
the reliability function are derived. The first one (random coding)
is identical to the best known lower bound on the reliability
function of DM-MAC. It is shown that the random coding bound is the
performance of the average code in the constant composition code
ensemble. The second bound (Typical random coding) is the typical
performance of the constant composition code ensemble. To derive the
third bound (expurgated), we eliminate some of the codewords from
the codebook with larger rate. This is the first bound of this type
that explicitly uses the method of expurgation for MACs. It is shown
that the exponent of the typical random coding and the expurgated
bounds are greater than or equal to the exponent of the known random
coding bounds for all rate pairs. Moreover, an example is given
where the exponent of the expurgated bound is strictly larger. All
these bounds can be universally obtained for all discrete memoryless
MACs with given input and output alphabets.
\end{abstract}



\section{Introduction} \label{intro}

In this paper, we consider the problem of communication over a
multiple-access channel (MAC) without feedback in the discrete
memoryless setting.  In particular, we consider the  error exponents
for this channel model.  In this model, two transmitters wish to
communicate reliably two independent messages  to a single decoder.
A schematic is depicted in Figure \ref{fig:mac_prob}.
%

\begin{figure}[htp]
\center \epsfig{figure=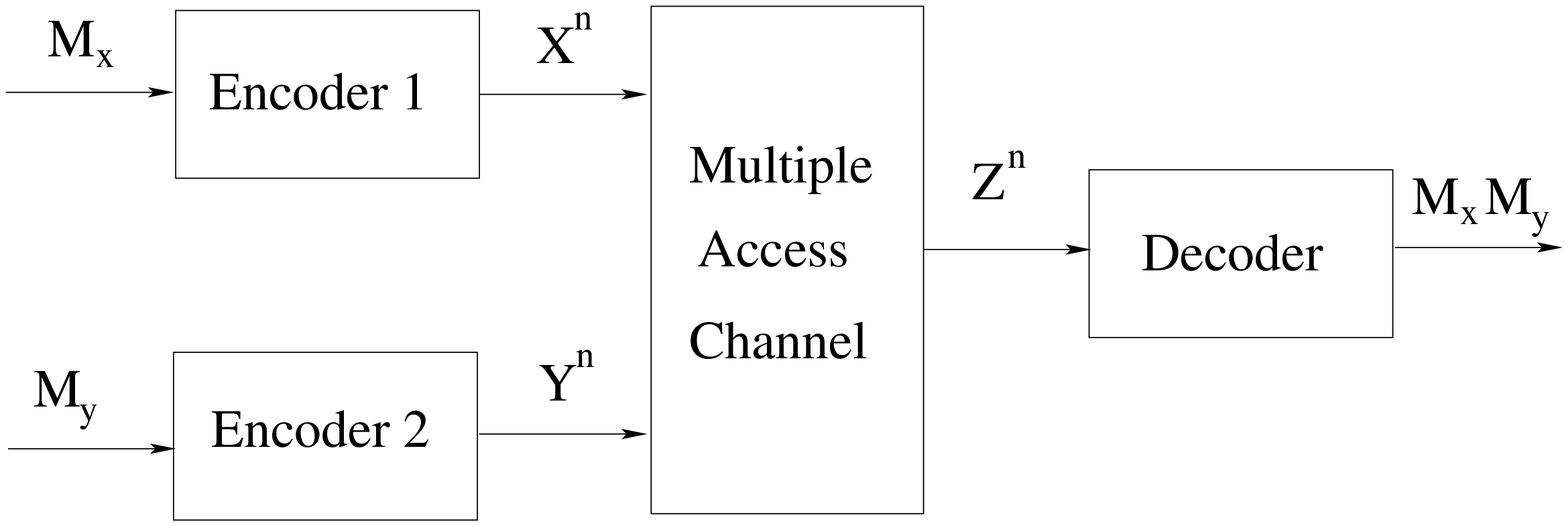, width = 0.6 \textwidth}
 \caption{{\small A schematic of
two-user multiple-access channel} } \label {fig:mac_prob}
\end{figure}

Error exponents have been meticulously studied for point to point
discrete memoryless channels (DMCs) in the
literature~\cite{Elias2,Dobrusin,FanoBook,Shan-Gall-Bere1,Shan-Gall-Bere2,Augustin,Feinstein}.
The optimum error exponent $E(R)$  at some fixed transmission rate
$R$ (also known as the channel reliability function) gives the
decoding error probability exponential rate of decay as a function
of block-length for the best sequence of codes. Lower and upper
bounds on the channel reliability function for the DMC are known. A
lower bound, known as the random coding exponent, was developed by
Fano~\cite{FanoBook} by upper-bounding the average error probability
over an ensemble of codes. This bound is loose at low rates.
Gallager~\cite{Gallager-Tightness} demonstrated that the random
coding bound is the true average error exponent for the random code
ensemble. This result illustrates that the weakness of the random
coding bound, at low rates, is not due to upper-bounding the
ensemble average. Rather, this weakness is due to the fact that the
best codes perform much better than the average, especially at low
rates. The random coding exponent is further improved at low rates
by the process of ``expurgation'' \cite{GalBook,GlBook,Cover}. The
expurgated bound coincides with the upper bound on the reliability
function at $R=0$~\cite[pg.~189]{Csiszarbook}.
Barg and Forney~\cite{Barg-RandomCode} investigated another lower
bound for the binary symmetric channel (BSC),  called the
``typical'' random coding bound. The authors showed that almost all
codes in the standard random coding ensemble exhibit a performance
that is as good as the one described by the typical random coding
bound. In addition, they showed that the typical error exponent is
larger than the random coding exponent and smaller than the
expurgated exponent at low rates.
Regarding discrete memoryless multiple-access channels (DM-MACs),
stronger versions of Ahlswede and Liao's coding
theorem~\cite{Ahlswede71,Liao}, giving exponential upper and lower
bounds for the error probability, were derived by several authors.
Slepian and Wolf~\cite{SlWo73}, Dyachkov~\cite{Dyachkov},
Gallager~\cite{Gallager-Multiaccess}, Pokorny and
Wallmeier~\cite{Pokorney}, and Liu and
Hughes~\cite{Liu-RandomCoding} studied upper bounds on the error
probability. Haroutunian~\cite{Haroutunian} and
Nazari~\cite{nazari08,nazari09,nazari09-Arx} studied lower bounds on
the error probability.

Comparing the state of the art in the study of error exponents for
DMCs and DM-MACs, we observe that the latter is much less advanced.
We believe the main difficulty in the study of error exponents for
DM-MACs is the fact that error performance in a DM-MAC depends on
the pair of codebooks (in the case of a two-user MAC) used by the
two transmitters, while at the same time, each transmitter can only
control its own codebook. This simple fact has important
consequences. For instance, expurgation has not been studied in MAC,
since by eliminating some of the ``bad'' codeword pairs, we may end
up with a set of correlated input sequences, which is hard to
analyze. In this paper, we develop two new lower bounds for the
reliability function of DM-MACs. These bound outperform the bounds
of~\cite{Pokorney,Liu-RandomCoding}.

Toward this goal, we first revisit  the point-to-point case and look
at the techniques that are used for obtaining the lower bounds on
the optimum error exponents. The techniques can be broadly
classified into three categories. The first is the Gallager
technique~\cite{Gallager-Tightness}. Although this yields
expressions for the error exponents that are computationally easier
to evaluate than others, the expressions themselves are harder to
interpret. The second is the Csiszar-Korner technique
\cite{Csiszarbook}. This technique gives more intuitive expressions
for the error exponents in terms of optimization of an objective
function involving information quantities over probability
distributions. This approach is more amenable to generalization to
multi-user channels. The third is the graph decomposition technique
using $\alpha$-decoding \cite{Csiszar-Graph}. $\alpha$-decoding is a
class of decoding procedures that includes maximum likelihood
decoding and minimum entropy decoding. Although this technique gives
a simpler derivation of the exponents, we believe that it is harder
to generalize this to multi-user channels. All three classes of
techniques give expressions for the random coding and expurgated
exponents. The expressions obtained by the three techniques appear
in different forms.

\begin{figure}[htp]\label{fig2}
\center \epsfig{figure=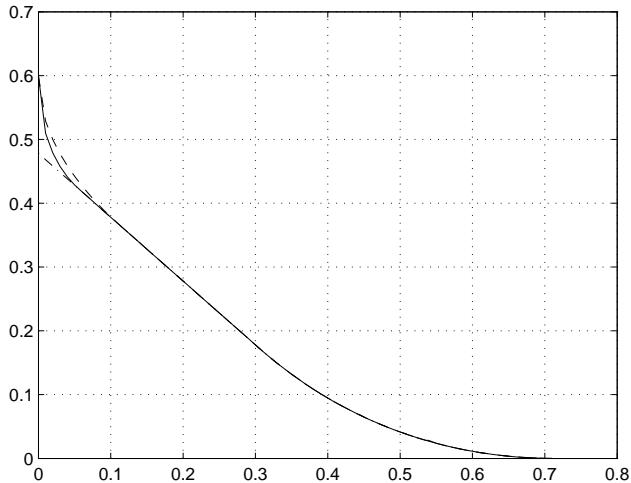, width = 0.6 \textwidth}
\caption{{\small Lower bounds on the reliability function for
point-to-point channel               (random coding~$- \cdot$,
typical random coding~$-$,         expurgated~$- -$)} }
\end{figure}

In developing our main result, we first develop a new simpler
technique for deriving the random coding and expurgated exponents
for the point-to-point channel using a constant composition code
ensemble with $\alpha$-decoding.  We present our results in the
format given in~\cite{Csiszar-Graph}. This technique also gives
upper bounds on the ensemble averages. As a bonus, we obtain the
typical random coding exponent for this channel. This gives an exact
characterization (lower and upper bounds that meet) of the error
exponent of almost all codes in the ensemble. When specialized to
the BSC, this reduces to the typical random coding bound of Barg and
Forney~\cite{Barg-RandomCode}\footnote{Barg and Forney gave only a
lower bound in~\cite{Barg-RandomCode}.}. Fig.~2 shows the random
coding, the typical random coding, and the expurgated bounds
 for a BSC with crossover probability $p =
0.05$, which is representative of the general case. All the three
lower bounds are expressed as minimizations of a single objective
function under different constraint sets. The reasons for looking at
typical performance are two-fold. The first is that the average
error exponent is in general smaller than the typical error exponent
at low rates, hence the latter gives a tighter characterization of
the optimum error exponent of the channel. For example, for the BSC,
although the average performance of the linear code ensemble is
given by the random coding exponent of the Gallager ensemble,  the
typical performance is given by the expurgated exponent of the
Gallager ensemble. In this direction, it was also noted recently
in~\cite{Como} that for the $8$-PSK Gaussian channel, the typical
performance of the ensemble of group codes over $\mathbb{Z}_8$
equals the expurgated exponent of the Gallager ensemble, whereas the
typical performance of the ensemble of binary coset codes (under any
mapping) is bounded away from the same. The second is that in some
cases, expurgation may not be possible or may not be desirable. For
example, (a) in the MAC, the standard expurgation is not possible,
and (b) if one is looking at the performance of the best linear code
for a  channel, then expurgation destroys the linear structure which
is not desirable. In the proposed technique we provide a unified way
to derive all the three lower bounds on the optimum error exponents,
and upper bounds on the ensemble average and the typical
performance. We wish to note that the bounds derived in this paper
are universal in nature. The proposed  approach appears to be more
amenable to generalization to multi-user channels.

A brief outline of the technique is given as follows. First, for a
given constant composition code, we define a pair of packing
functions that are independent of the channel. For an arbitrary
channel,  we relate the probability of error of a code with
$\alpha$-decoding   to  its packing functions. Packing functions
give pair-wise and triple-wise joint-type distributions of the code.
This is similar in spirit to the concept of distance distribution of
the code. Then we do random coding and obtain lower and upper bounds
on the expected value of the packing functions of the ensemble
without interfacing it with the channel.  That is, these bounds do
not depend on the channel. Finally, using the above relation between
the packing function and the probability of error, we get
single-letter expressions for the bounds on the optimum error
exponents for an arbitrary channel.

Toward extending  this technique to MACs, we follow a three-step
approach. We start with a constant conditional composition  ensemble
identical to~\cite{Liu-RandomCoding}. Then, we provide a new packing
lemma in which the resulting code has better properties in
comparison to the packing lemmas in~\cite{Pokorney}
and~\cite{Liu-RandomCoding}. This packing lemma is similar to
Pokorny's packing lemma, in the sense that the channel conditional
distribution does not appear in the inequalities. One of the
advantages of our methodology
is that it enables us to partially expurgate some of the codewords
and end up with a new code with stronger properties. In particular,
we do not eliminate pairs of codewords. Rather, we expurgate
codewords from only one of the codebooks and analyze the performance
of the expurgated code.

\noindent \emph{Contributions:} In summary the key contributions of
this work are
\begin{itemize}
\item  An exact characterization of the typical error exponent for the
  constant composition code ensemble for the DMC.

  \item A new lower bound on the optimum error exponent for the
  MAC.
\item An upper bound on the average error exponent of the constant
  composition code ensemble for the MAC.
\item  A characterization of the typical error exponent for the
  constant composition code ensemble for the MAC.
\end{itemize}


This paper is organized as follows: Section~\ref{sec:prelim}
introduces terminology, and Section~\ref{sec:p2p} unifies the
derivation of all lower bounds on the reliability function for a
point-to-point DMC. Our main results for the DM-MAC are introduced
in Section~\ref{sec:MACsection}. Some numerical results are
presented in Section~\ref{sec:num}, and Section~\ref{sec:conclusion}
concludes the paper. The proofs of some of these results are given
in the Appendix.

\section{Preliminaries}\label{sec:prelim}

We will follow the notation of \cite{Csiszarbook}. For any finite
alphabet $\mathcal{X}$, let $\mathcal{P(X)}$ denote the set of all
probability distributions on $\mathcal{X}$. For any sequence
$\mathbf{x} \in \mathcal{X}^n$, let $P_{\mathbf{x}}$ denote its
type. Let $T_P$ denote the type class of type $P$. Let
$\mathcal{P}_n(\mathcal{X})$ denote the set of all types on
$\mathcal{X}$. Let $T_V$ denote a V-shell, and $D(V \|W|P)$ denote
conditional I-divergence. In this paper, we consider channels
without feedback.

\begin{definition}
A discrete memoryless channel (DMC) is defined by a stochastic
matrix $W:\mathcal{X \rightarrow Y}$, where $\mathcal{X}$, the input
alphabet, and $\mathcal{Y}$, the output alphabet, are finite sets.
The channel transition probability for n-sequences is given by
\begin{eqnarray*}
W^n(\mathbf{y}|\mathbf{x}) \triangleq \prod_{i=1}^n W(y_i|x_i),
\end{eqnarray*}
where $\mathbf{x}\triangleq (x_1,...,x_n) \in \mathcal{X}^n$,
$\mathbf{y}\triangleq(y_1,...,y_n) \in \mathcal{Y}^n$. An $(n,M)$
code for a given DMC, $W$, is a set $C=\{(\mathbf{x}_i,D_{i}): 1
\leq i \leq M\}$ with (a) $\mathbf{x}_i \in \mathcal{X}^n$, $D_{i}
\subset \mathcal{Y}^n$ and (b) $D_{i} \cap D_{i'}=\varnothing$ for
$i \neq i'$.
\end{definition}
When message $i$ is transmitted, the conditional probability of
error of a code $C$ is given by
\begin{equation*}
e_{i}(C,W) \triangleq W^n(D^c_{i}|\mathbf{x}_i).
\end{equation*}
The average probability of error for this code is defined as
\begin{align}
e(C,W) \triangleq \frac{1}{M }\sum_{i=1}^{M}e_{i}(C,W).
\end{align}

\begin{definition}\label{def:typical}
For the DMC, $W:\X  \rightarrow \Y$, the average error exponent, at
rate $R$, is defined as:
\begin{align}
&E^*_{av}(R) \triangleq \limsup_{n\rightarrow \infty} \max_{
\substack{C \in \C}} -\frac{1}{n} \log{e(C,W)},
\end{align}
where $\C$ is the set of all codes of length $n$ and rate $R$. The
typical average error exponent of an ensemble $\mathcal{C}$, at rate
$R$, is defined as:
\begin{align}
&E^T_{av}(R) \triangleq \liminf_{\delta \rightarrow 0}
\limsup_{n\rightarrow \infty} \max_{ \tilde{\C}:
\mathbb{P}(\tilde{\C} ) > 1- \delta}\;\; \min_{ \substack{ C \in
\tilde{\C}}} -\frac{1}{n} \log{e(C,W)}.
\end{align}
where $\mathbb{P}$ is the uniform distribution over $\mathcal{C}$.
\end{definition}
The typical error exponent is basically the exponent of the average
error probability of the worst code belonging to the best high
probable collection of the ensemble.

\begin{definition}
A two-user DM-MAC is defined by a stochastic matrix $W:\mathcal{X
\times Y \; \rightarrow Z}$, where $\mathcal{X}$, $\mathcal{Y}$, the
input alphabets, and $\mathcal{Z}$, the output alphabet, are finite
sets. The channel transition probability for n-sequences is given by
\begin{eqnarray}
W^n(\mathbf{z}|\mathbf{x},\mathbf{y}) \triangleq \prod_{i=1}^n
W(z_i|x_i,y_i),
\end{eqnarray}
where $\mathbf{x}\triangleq (x_1,...,x_n) \in \mathcal{X}^n$,
$\mathbf{y}\triangleq(y_1,...,y_n) \in \mathcal{Y}^n$, and
$\mathbf{z}\triangleq (z_1,...,z_n) \in \mathcal{Z}^n$.\\An
$(n,M,N)$ multi-user code for a given MAC, $W$, is a set
$C=\{(\mathbf{x}_i,\mathbf{y}_j,D_{ij}): 1 \leq i \leq M, 1 \leq j
\leq N\}$ with
\begin{itemize}
\item $\mathbf{x}_i \in \mathcal{X}^n$, $\mathbf{y}_j \in
\mathcal{Y}^n$, $D_{ij} \subset \mathcal{Z}^n$
\item $D_{ij} \cap D_{i'j'}=\varnothing$ for  $(i,j) \neq
(i',j')$.
\end{itemize}
\end{definition}
When message $(i,j)$ is transmitted, the conditional probability of
error of the two-user code $C$ is given by
\begin{equation}
e_{ij}(C,W) \triangleq W^n(D^c_{ij}|\mathbf{x}_i,\mathbf{y}_j).
\end{equation}
The average probability of error for the two-user code, $C$, is
defined as
\begin{align}
e(C,W) \triangleq \frac{1}{M
N}\sum_{i=1}^{M}\sum_{j=1}^{N}e_{ij}(C,W).
\end{align}

\begin{definition}
For the MAC, $W:\X \times \Y \rightarrow \Z$, the average error
exponent at rate pair $(R_X,R_Y)$, is defined as:
\begin{align}
&E^*_{av}(R_X,R_Y) \triangleq \limsup_{n\rightarrow \infty}
\max_{\substack{C \in \C_{M}}} -\frac{1}{n}
\log{e(C,W)},
\end{align}
where $\C_{M}$ is the set of all codes of length $n$ and rate pair
$(R_X,R_Y)$. The typical average error exponent of an ensemble
$\mathcal{C}$, at rate pair $(R_X,R_Y)$, is defined as:
\begin{align}
&E^T_{av}(R_X,R_Y) \triangleq \liminf_{\delta \rightarrow 0}
\limsup_{n\rightarrow \infty} \max_{ \tilde{\C}\subset \mathcal{C}:
\mathbb{P}(\tilde{\C} ) > 1- \delta}\;\; \min_{ \substack{C \in
\tilde{\C}}} -\frac{1}{n}
\log{e(C,W)},
\end{align}
where $\mathbb{P}$ is the uniform distribution over $\mathcal{C}$.
\end{definition}


\section{Point to Point: Lower Bounds on reliability function}
\label{sec:p2p}
\subsection{Packing functions}
Consider the class of DMCs with input alphabet $\X$ and output
alphabet $\Y$.  In the following, we introduce a unified way to
derive all known lower bounds on the reliability function of such a
channel. We will follow the random coding approach. First, we choose
a constant composition code ensemble. Then, we define a packing
function, $\pi: \C \times \P(\X \times \X) \rightarrow \mathbb{R}$,
on all codebooks in the ensemble.  The packing function that we use
is the average number of codeword pairs sharing a particular joint
type, $V_{X\tilde{X}}$. Specifically, for $P \in \P_n(\X)$,
$V_{X\tilde{X}} \in \P_n(\X \times \X)$, and any code
$C=\{\mathbf{x}_1, \mathbf{x}_2,...,\mathbf{x}_M\} \subset T_P$, the
packing function is defined as:
\begin{equation}
\pi(C,V_{X\tilde{X}}) = \frac{1}{M} \sum_{i=1}^{M} \sum_{j\neq i}
1_{T_{V_{X\tilde{X}}}}(\mathbf{x}_i,\mathbf{x}_j).
\end{equation}
We call this the first order packing function. Using this packing
function, we prove three different packing lemmas, each of which
shows the existence of a code with some desired properties.

In the first packing lemma, tight upper and lower bounds on the
expectation of the packing function over the ensemble are derived.
By using this packing lemma, upper and lower bounds on the
expectation of the average probability of error over the ensemble
are derived. These bounds meet  for all transmission rates below the
critical rate\footnote{This
  is essentially a re-derivation of the upper and lower bounds on the
  average probability of error obtained by Gallager in a different
  form. The present results are for constant composition codes.}. In the
second packing lemma, by using the expectation and the variance of
the packing function, we prove that for almost all codes in the
constant composition code ensemble, the bounds in the first packing
lemma are still valid. By using this tight bound on the performance
of almost every code in the ensemble, we provide a tighter bound on
the error exponent which we call the ``typical'' random coding
bound. As we see later in the paper, the typical random coding bound
is indeed the typical performance of the constant composition code
ensemble. In the third packing lemma, we use one of the typical
codes and eliminate some of its ``bad'' codewords. The resulting
code satisfies some stronger constraints in addition to  all the
previous properties. By using this packing lemma and an efficient
decoding rule, we re-derive the well-known expurgated bound.

To provide upper bounds on the average error exponents, such as
those given below in  Fact~\ref{fc:p2p_random_coding} and
Theorem~\ref{th:p2p_typical}, for every $V_{X\tilde{X}\hat{X}} \in
\P_n\left(\X \times \X \times \X \right)$,  we define a second
packing function $\lambda : \C  \times \P(\X \times \X \times
\X)\rightarrow \mathbb{R}$ on all codes in the constant composition
code ensemble as follows:
\begin{align}
\lambda (C, V_{X\tilde{X}\hat{X}}) \triangleq
\frac{1}{M}\sum_{i=1}^{M} \sum_{j \neq i} \sum_{k \neq i,j}
1_{T_{V_{X\tilde{X}\hat{X}}}}(\mathbf{x}_i,\mathbf{x}_j,\mathbf{x}_k
).
\end{align}
We call this the second order packing function. As it is clear from
the definition, this quantity is the average number of codeword
triplets
 sharing a common joint distribution in code $C$.

\subsection{Relation between packing function and probability of
  error}

First, we consider the decoding rule at the receiver, and secondly
we relate the average probability of error to the packing function.

\noindent \textbf{Decoding Rule:} In our derivation, error
probability bounds using maximum-likelihood and minimum-entropy
decoding rules will be obtained in a unified way. The reason is that
both can be given in terms of a real-valued function on the set of
distributions on $\X \times \Y$. This type of decoding rule was
introduced in~\cite{Csiszar-Graph} as the $\alpha-decoding$ rule.
For a given real-valued  function $\alpha$, a given code $C$, and
for a received sequence $\mathbf{y} \in \Y^n$, the $\alpha-decoder$
accepts the codeword $\hat{\mathbf{x}} \in C$ for which the joint
type of $\hat{\mathbf{x}}$ and $\mathbf{y}$ minimizes the function
$\alpha$, i.e., the decoder accepts $\hat{\mathbf{x}}$ if
\begin{equation}
\hat{\mathbf{x}} = \arg\min_{\mathbf{x} \in C} \alpha(P \cdot
V_{\mathbf{y}|\mathbf{x}}).
\end{equation}
It was shown in~\cite{Csiszar-Graph} that for  fixed composition
codes, maximum-likelihood and minimum-entropy are special cases of
this decoding rule. In particular, for maximum-likelihood decoding,
\begin{equation}
\alpha(P \cdot V) = D(V||W|P) + H(V|P),
\end{equation}
and for minimum entropy decoding,
\begin{equation}
\alpha(P \cdot V) = H(V|P),
\end{equation}
where $P$ is the fixed composition of the codebook, and $V$ is the
conditional type of $\mathbf{y}$ given $\mathbf{x}$.

\noindent \textbf{Relation between probability of error and packing
function:} Next, for a given channel, we derive an upper bound and a
lower bound on the average probability of error of an arbitrary
constant composition code in terms of its first order and second
order packing functions. The rest of the paper is built on this
crucial derivation.  Consider the following argument about the
average probability of error of a code $C$ used on a channel $W$.
\begin{align}
e(C,W)  &=\frac{1}{M} \sum_{i=1}^{M}
W^n(D_i^c|\mathbf{x}_i)\nonumber = \frac{1}{M} \sum_{i=1}^{M}
W^n\left(\left\{\mathbf{y}: \alpha(P \cdot
V_{\mathbf{y}|\mathbf{x}_i}) \geq \alpha(P \cdot
V_{\mathbf{y}|\mathbf{x}_j}) \text{ for some } j \neq i
\right\}|\mathbf{x}_i\right)\nonumber\\
 &= \sum_{\substack{V_{X\tilde{X}Y} \in \P_n^r}}\left(
2^{-n[D(V_{Y|X}||W|P)+H_V(Y|X)]} \left[\frac{1}{M}\sum_{i=1}^{M}
A_i\left(V_{X\tilde{X}Y},C\right)\right]\right)\label{P2P-thm-Main-Formula1},
\end{align}
where $\P_n^r$ and $A_i\left(V_{X\tilde{X}Y},C\right)$ are defined
as follows
\begin{align}
&\P_n^r \triangleq \left\{V_{X\tilde{X}Y} \in \P_n(\X \times \X
\times \Y): V_X=V_{\tilde{X}}=P \;,\;\alpha(P \cdot V_{Y|\tilde{X}})
\leq
\alpha(P ,V_{Y|X})\right\},\\
&A_i\left(V_{X\tilde{X}Y},C\right) \triangleq
\left|\left\{\mathbf{y}: (\mathbf{x}_i,\mathbf{x}_j,\mathbf{y}) \in
T_{V_{X\tilde{X}Y}} \text{ for some } j \neq i \right\}\right|.
\end{align}
From the inclusion-exclusion principle, it follows that
 $A_i(V_{X\tilde{X}Y},C)$ satisfies
\begin{align}
B_i(V_{X\tilde{X}Y},C) - C_i(V_{X\tilde{X}Y},C) \leq
A_i(V_{X\tilde{X}Y},C) \leq B_i(V_{X\tilde{X}Y},C),\label{ABC}
\end{align}
where
\begin{align}
& B_i(V_{X\tilde{X}Y},C) \triangleq \sum_{j \neq i}
1_{T_{V_{X\tilde{X}}}}(\mathbf{x}_i,\mathbf{x}_j)\left|\left\{\mathbf{y}:
\mathbf{y} \in T_{V_{Y|X\tilde{X}}}(\mathbf{x}_i,\mathbf{x}_j)
\right\}\right|, \\
& C_i(V_{X\tilde{X}Y},C) \triangleq \sum_{j \neq i} \sum_{k \neq
i,j} 1_{T_{V_{X\tilde{X}}}}(\mathbf{x}_i,\mathbf{x}_j)
1_{T_{V_{X\tilde{X}}}}(\mathbf{x}_i,\mathbf{x}_k )
\left|\left\{\mathbf{y}: \mathbf{y} \in
T_{V_{Y|X\tilde{X}}}(\mathbf{x}_i,\mathbf{x}_j) \cap
T_{V_{Y|X\tilde{X}}}(\mathbf{x}_i,\mathbf{x}_k) \right\}\right|.
\end{align}

Next, we provide an upper bound on the second term on the right hand
side of (\ref{P2P-thm-Main-Formula1}) as follows.
\begin{subequations}
\begin{align}
\frac{1}{M}\sum_{i=1}^{M} A_i\left(V_{X\tilde{X}Y},C\right) &\leq
\frac{1}{M}\sum_{i=1}^{M}
B_i\left(V_{X\tilde{X}Y},C\right) \\
&= \frac{1}{M}\sum_{i=1}^{M}  \sum_{j \neq i}
1_{T_{V_{X\tilde{X}}}}(\mathbf{x}_i,\mathbf{x}_j)\left|\left\{\mathbf{y}:
\mathbf{y} \in T_{V_{Y|X\tilde{X}}}(\mathbf{x}_i,\mathbf{x}_j)
\right\}\right|  \\
&\leq \frac{1}{M}\sum_{i=1}^{M}  \sum_{j \neq i}
1_{T_{V_{X\tilde{X}}}}(\mathbf{x}_i,\mathbf{x}_j)  2^{nH(Y|X\tilde{X})} \\
&= \pi(C,V_{X\tilde{X}}) 2^{nH(Y|X\tilde{X})}
\end{align}
\end{subequations}
On the other hand
\begin{align}
\left\{\mathbf{y}: (\mathbf{x}_i,\mathbf{x}_j,\mathbf{y}) \in
T_{V_{X\tilde{X}Y}} \text{ for some } j \neq i \right\} \subset
T_{V_{Y|X}}(\mathbf{x}_i),
\end{align}
so we can conclude that
\begin{align}
\frac{1}{M}\sum_{i=1}^{M} A_i(V_{X\tilde{X}Y}, C)  \leq
2^{nH_V(Y|X)}.
\end{align}
Combining the above with (\ref{P2P-thm-Main-Formula1}), we have an
upper bound on the probability of error in terms of the first order
packing function as follows.
\begin{equation}
e(C,W) \leq \sum_{\substack{V_{X\tilde{X}Y} \in \P_n^r}}
2^{-n[D(V_{Y|X}||W|P)]} \min\left\{2^{-nI_V(\tilde{X} \wedge Y|X)}
\pi(C,V_{X\tilde{X}}),1\right\}  \label{upperboundA}
\end{equation}

Next, we consider the lower bound. For that, we provide a lower
bound on $B_i$ and upper bound on $C_i$ as follows.
\begin{align}
\frac{1}{M} \sum_{i=1}^{M} B_i(V_{X\tilde{X}Y},C) &= \frac{1}{M}
\sum_{i=1}^{M}\sum_{j \neq i}
1_{T_{V_{X\tilde{X}}}}(\mathbf{x}_i,\mathbf{x}_j)\left|\{\mathbf{y}:
\mathbf{y} \in T_{V_{Y|X\tilde{X}}}(\mathbf{x}_i,\mathbf{x}_j)
\}\right| \nonumber\\
&\geq  \pi(C,V_{X\tilde{X}}) 2^{n[ H(Y|X\tilde{X})-\delta]},
\label{intermmediate1}
\end{align}
and
\begin{align}
&\frac{1}{M} \sum_{i=1}^{M}
C_i(V_{X\tilde{X}Y},C) \nonumber\\
&= \frac{1}{M} \sum_{i=1}^{M} \sum_{j \neq i} \sum_{k \neq i,j}
1_{T_{V_{X\tilde{X}}}}(\mathbf{x}_i,\mathbf{x}_j)
1_{T_{V_{X\tilde{X}}}}(\mathbf{x}_i,\mathbf{x}_k )
\left|\left\{\mathbf{y}: \mathbf{y} \in
T_{V_{Y|X\tilde{X}}}(\mathbf{x}_i,\mathbf{x}_j) \cap
T_{V_{Y|X\tilde{X}}}(\mathbf{x}_i,\mathbf{x}_k)
\right\}\right|\nonumber\\
&= \sum_{\substack{V_{X\tilde{X}\hat{X}Y}:\\V_{X\hat{X}Y}
=V_{X\tilde{X}Y}}} \frac{1}{M} \sum_{i=1}^{M}\sum_{j \neq i} \sum_{k
\neq i,j}
1_{T_{V_{X\tilde{X}\hat{X}}}}(\mathbf{x}_i,\mathbf{x}_j,\mathbf{x}_k
) \left|\left\{\mathbf{y}: \mathbf{y} \in
T_{V_{Y|X\tilde{X}\hat{X}}}(\mathbf{x}_i,\mathbf{x}_j, \mathbf{x}_k)
\right\}\right|\nonumber\\
&\leq  \sum_{\substack{V_{X\tilde{X}\hat{X}Y}:\\V_{X\hat{X}Y}
=V_{X\tilde{X}Y}}} 2^{nH(Y|X\tilde{X}\hat{X})}
\lambda(C,V_{X\tilde{X}\hat{X}}) \label{intermmediate2}
\end{align}
Combining (\ref{P2P-thm-Main-Formula1}), (\ref{intermmediate1}), and
(\ref{intermmediate2}) we have the following lower bound on the
average probability of error.
\begin{equation}
e(C,W) \geq \sum_{\substack{V_{X\tilde{X}Y} \in \P_n^r}}
2^{-n[D(V_{Y|X}||W|P)+I_V(\tilde{X} \wedge Y|X)+\delta]} \left|
\pi(C,V_{X\tilde{X}}) -
\sum_{\substack{V_{X\tilde{X}\hat{X}Y}:\\V_{X\hat{X}Y}
=V_{X\tilde{X}Y}}} 2^{-n[I_V(\hat{X} \wedge Y|X \tilde{X})]}
 \lambda(C,V_{X\tilde{X}\hat{X}}) \right|^+ \label{lowerboundA}
\end{equation}
Observe that these upper and lower bounds apply for every code $C$.
We have accomplished the task of relating the average probability of
error to the two packing functions. The key results of this
subsection are given by (\ref{upperboundA}) and (\ref{lowerboundA}).
Next we use the packing lemmas to derive the bounds on the error
exponents.

\subsection{Random Coding Packing Lemmas}

\begin{lemma}\textbf{(Random Coding Packing
Lemma)}\label{P2P-random-Packing} Fix $R > 0$, $\delta > 0$, a
sufficient large $n$ and any type $P$ of sequences in
$\mathcal{X}^n$ satisfying $H(P)
> R$. For any $V_{X\tilde{X}} \in \P_n(\X \times \X)$, the expectation of the first order packing function over the constant
composition code ensemble is bounded by
\begin{equation}
2^{n(R-I_V(X \wedge \tilde{X})-\delta)} \leq \mathbb{E}
\left(\pi(X^M,V_{X\tilde{X}})\right) \leq 2^{n(R-I_V(X \wedge
\tilde{X})+\delta)},
\end{equation}
where $X^M\triangleq(X_1,X_2,...,X_M) \subset T_P$ are independent
and $X_i$s are uniformly distributed on $T_{P}$, and
$2^{n(R-\delta)} \leq M \leq 2^{nR}$. Moreover, the following
inequality holds for the second order packing function:
\begin{equation}
\mathbb{E} \left(\lambda (X^M, V_{X\tilde{X}\hat{X}})\right) \leq
2^{n[2R - I_V(X \wedge \tilde{X})  - I_V(\hat{X} \wedge X\tilde{X})
+4
 \delta]}\quad \quad \quad \text{for all } V_{X\tilde{X}\hat{X}}
\in \P_n(\X \times \X \times \X).
\end{equation}
\end{lemma}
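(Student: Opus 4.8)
The plan is to compute both bounds via direct first- and second-moment calculations over the i.i.d. uniform ensemble on $T_P$, relying on the standard type-counting estimates from \cite{Csiszarbook}. The key facts I will use repeatedly are: (i) $|T_P| \doteq 2^{nH(P)}$; (ii) for a joint type $V_{X\tilde X}$ with marginals equal to $P$, the number of sequences $\tilde{\mathbf{x}} \in T_{V_{\tilde X|X}}(\mathbf{x})$ in the conditional $V$-shell of a fixed $\mathbf{x} \in T_P$ is $\doteq 2^{nH_V(\tilde X|X)}$; and (iii) all polynomial factors in $n$ coming from the number of types can be absorbed into the $\pm\delta$ slack in the exponent for $n$ large enough.

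\textbf{First moment of $\pi$.}
For the first-order bound I would write
\begin{equation*}
\mathbb{E}\left(\pi(X^M,V_{X\tilde X})\right)
= \frac{1}{M}\sum_{i=1}^{M}\sum_{j\neq i}
\mathbb{P}\left((X_i,X_j)\in T_{V_{X\tilde X}}\right),
\end{equation*}
and exploit that the $X_i$ are i.i.d.\ uniform on $T_P$. For a fixed pair $i\neq j$, conditioning on $X_i=\mathbf{x}\in T_P$ gives
$\mathbb{P}((X_i,X_j)\in T_{V_{X\tilde X}}) = |T_{V_{\tilde X|X}}(\mathbf{x})|/|T_P| \doteq 2^{nH_V(\tilde X|X)}/2^{nH(P)} = 2^{-nI_V(X\wedge\tilde X)}$, using $H(P)-H_V(\tilde X|X)=I_V(X\wedge\tilde X)$ since both marginals equal $P$. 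Summing over the $M(M-1)$ ordered pairs yields $\mathbb{E}(\pi)\doteq (M-1)\,2^{-nI_V(X\wedge\tilde X)}$. Inserting $2^{n(R-\delta)}\leq M\leq 2^{nR}$ and absorbing the type-counting polynomial factors and the $M$-versus-$(M-1)$ discrepancy into the exponent slack $\delta$ gives the two-sided bound exactly as stated. The lower bound needs the mild hypothesis $H(P)>R$ to ensure the $V$-shell is nonempty and that $M$ can indeed be chosen in the stated range inside $T_P$.

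\textbf{Second moment of $\lambda$.}
For the second-order bound I only need an upper bound, so I would expand
\begin{equation*}
\mathbb{E}\left(\lambda(X^M,V_{X\tilde X\hat X})\right)
= \frac{1}{M}\sum_{i}\sum_{j\neq i}\sum_{k\neq i,j}
\mathbb{P}\left((X_i,X_j,X_k)\in T_{V_{X\tilde X\hat X}}\right),
\end{equation*}
where the three indices are distinct, so $X_i,X_j,X_k$ are mutually independent and uniform on $T_P$. Conditioning on $X_i=\mathbf{x}$, the events $\{X_j\in T_{V_{\tilde X|X}}(\mathbf{x})\}$ and $\{X_k\in T_{V_{\hat X|X\tilde X}}(\mathbf{x},X_j)\}$ factor, giving a probability $\doteq 2^{-nI_V(X\wedge\tilde X)}\cdot 2^{-nI_V(\hat X\wedge X\tilde X)}$ after the same type-counting estimate applied twice. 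Multiplying by the $\doteq M^3 \le 2^{3nR}$ triples would overshoot; the correct count is $M(M-1)(M-2)\doteq 2^{nR}\cdot 2^{2nR}$ ordered triples divided by the leading $1/M$, i.e.\ $\doteq 2^{2nR}$, which produces the claimed $2^{n[2R-I_V(X\wedge\tilde X)-I_V(\hat X\wedge X\tilde X)]}$; the constant $4\delta$ absorbs the three separate polynomial type-counting factors plus the combinatorial slack.

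\textbf{Main obstacle.}
The arithmetic is routine; the one point requiring care is the factorization in the $\lambda$ computation. Because the decoder-relevant sum defining $\lambda$ runs over triples with all indices distinct, the three codewords are genuinely independent, so the joint-type probability cleanly splits into the two conditional-information factors $I_V(X\wedge\tilde X)$ and $I_V(\hat X\wedge X\tilde X)$ via the chain rule for the mutual-information decomposition. If instead one tried to bound $\mathbb{E}(\lambda)$ by naively using $I_V(X\wedge\tilde X\wedge\hat X)$-type triple-wise quantities the exponent would not match; the correct grouping is $I_V(X\wedge\tilde X)+I_V(\hat X\wedge X\tilde X)$, which is exactly what the conditioning order $(i\to j\to k)$ delivers. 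I would therefore present the $\lambda$ estimate as two nested conditional type-counting steps to make this decomposition transparent, and verify that the nonemptiness of each conditional shell is guaranteed by $H(P)>R$ for sufficiently large $n$.
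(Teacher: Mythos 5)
Your proposal is correct and follows essentially the same route as the paper's proof: a direct first-moment computation of $\pi$ via $\mathbb{P}\big(X_j \in T_{V_{\tilde X|X}}(X_i)\big) \doteq 2^{-nI_V(X\wedge\tilde X)}$, and an upper bound on $\mathbb{E}(\lambda)$ using the mutual independence of the three codewords together with conditional type counting. The paper evaluates the triple probability by counting the joint shell, $\mathbb{P}\big((X_i,X_j,X_k)\in T_{V_{X\tilde X\hat X}}\big) \doteq 2^{nH_V(\tilde X\hat X|X)}/\big(2^{nH(\tilde X)}2^{nH(\hat X)}\big)$, which is identical to your nested-conditioning factorization $2^{-nI_V(X\wedge\tilde X)}\cdot 2^{-nI_V(\hat X\wedge X\tilde X)}$ by the chain rule, so the two write-ups coincide.
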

\begin{proof}
The proof follows directly from the fact that two words drawn
independently from $T_P$ have a joint type $V_{X\tilde{X}}$ with
probability close to $2^{-nI(X \wedge \hat{X})}$.  The details are
provided in the Appendix.
\end{proof}

\begin{lemma}\textbf{(Typical Random Code Packing Lemma)}\label{lm:p2p_typical}
Fix $R > 0$, $\delta > 0$, a sufficient large $n$ and any type $P$
of sequences in $\mathcal{X}^n$ satisfying $H(P) > R$. Almost every
code, $C^t$, with $2^{n(R-\delta)} \leq M \leq 2^{nR}$ codewords, in
the constant composition code ensemble satisfies the following
inequalities
\begin{equation}\label{condition2-1}
2^{n[R-I_V(X \wedge \tilde{X})-2\delta]} \leq
\pi(C^t,V_{X\tilde{X}}) \leq 2^{n[R-I_V(X \wedge
\tilde{X})+2\delta]}\quad \quad \quad \text{for all } V_{X\tilde{X}}
\in \P_n(\X \times \X),
\end{equation}
and
\begin{equation}\label{condition2-2}
 \lambda (C^t, V_{X\tilde{X}\hat{X}}) \leq 2^{n[2R - I_V(X \wedge
\tilde{X})  - I_V(\hat{X} \wedge X\tilde{X}) +4
 \delta]}\quad \quad \quad \text{for all } V_{X\tilde{X}\hat{X}}
\in \P_n(\X \times \X \times \X).
\end{equation}

\end{lemma}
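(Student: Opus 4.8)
The plan is to show that the collection of codes violating \emph{any} of the stated inequalities, for \emph{any} joint type, has probability tending to zero; its complement is then the desired family of typical codes $C^t$, which carries probability arbitrarily close to $1$. Since the number of joint types is only polynomial in $n$---namely $|\P_n(\X\times\X)|\le(n+1)^{|\X|^2}$ and $|\P_n(\X\times\X\times\X)|\le(n+1)^{|\X|^3}$---it suffices to bound, for each \emph{fixed} type, the failure probability of each single inequality by an exponentially small quantity, and then take a union bound over all types and over the three conditions. I would treat the three bounds separately, because two of them are upper bounds (handled by Markov's inequality) while one is a lower bound (requiring a second-moment estimate).

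For the two upper bounds---the right-hand inequality of \eqref{condition2-1} and the inequality \eqref{condition2-2}---I would apply Markov's inequality to the nonnegative random variables $\pi(X^M,V_{X\tilde X})$ and $\lambda(X^M,V_{X\tilde X\hat X})$, whose expectations are controlled by Lemma~\ref{P2P-random-Packing}. The one point that needs care is that in \eqref{condition2-2} the target exponent coincides exactly with the expectation exponent of Lemma~\ref{P2P-random-Packing}, so a naive Markov step is vacuous; the fix is to invoke Lemma~\ref{P2P-random-Packing} with a slightly smaller slack parameter, so that the target exponents in \eqref{condition2-1}--\eqref{condition2-2} exceed the corresponding expectation exponents by some fixed $\delta''>0$. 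Markov's inequality then bounds each per-type failure probability by $2^{-n\delta''}$, which dominates the polynomial type count, so the union bound leaves a total failure probability that vanishes as $n\to\infty$.

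The crux is the lower bound in \eqref{condition2-1}, where Markov is useless and a Chebyshev argument is required. Writing $N\triangleq M\,\pi(X^M,V_{X\tilde X})=\sum_i\sum_{j\neq i}1_{T_{V_{X\tilde X}}}(X_i,X_j)$, I would expand $\mathrm{Var}(N)=\sum\sum\mathrm{Cov}\big(1_{T_V}(X_i,X_j),1_{T_V}(X_k,X_l)\big)$ over ordered index pairs and exploit the constant-composition structure. Because $X_1,\dots,X_M$ are i.i.d.\ uniform on the single type class $T_P$, every conditional $V$-shell has a size depending only on $P$, hence constant over $T_P$; consequently terms with disjoint index sets are independent, and---this is the key cancellation---terms sharing exactly one index are \emph{uncorrelated}, since conditioning on the shared coordinate makes the two events independent with the same shell-ratio probability $p\approx 2^{-nI_V(X\wedge\tilde X)}$ irrespective of the conditioning sequence. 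Only the diagonal and the ``transposed-pair'' terms $(i,j)$ versus $(j,i)$ survive, and each of these $O(M^2)$ covariances is at most $p$, yielding $\mathrm{Var}(N)\le 2\,\mathbb{E}[N]$.

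Given this variance bound the remainder is routine. Since $\mathbb{E}[N]=M(M-1)p\approx 2^{n(2R-I_V(X\wedge\tilde X))}$, the relative variance obeys $\mathrm{Var}(N)/(\mathbb{E}[N])^2\le 2/\mathbb{E}[N]$, which is exponentially small precisely when $2R>I_V(X\wedge\tilde X)$, i.e.\ when $\mathbb{E}[N]$ grows exponentially; this is exactly the regime in which the asserted lower bound is operative (for $2R\le I_V$ the first-order packing function is typically zero and the bound plays no role in the subsequent error-exponent optimization built on \eqref{lowerboundA}). Chebyshev's inequality then bounds the probability that $\pi$ drops below $2^{n[R-I_V(X\wedge\tilde X)-2\delta]}$---that is, below a $2^{-n\delta}$ fraction of its mean---by $\mathrm{Var}(\pi)/\big((1-2^{-n\delta})\,\mathbb{E}\pi\big)^2$, again exponentially small, and a union bound over the polynomially many types $V_{X\tilde X}$ finishes the proof. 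The main obstacle I anticipate is precisely the variance computation, and in particular the careful verification that the single-shared-index covariances cancel under the constant-composition assumption; the rest is bookkeeping with the $\delta$-slack and the polynomial type count.
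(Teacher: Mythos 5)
Your proposal is correct and follows essentially the same route as the paper's proof: a second-moment (Chebyshev) argument for the concentration of $\pi$ --- whose crux, the vanishing of covariances between index pairs sharing exactly one coordinate under the constant-composition ensemble, is precisely the paper's observation that the variables $Y_{ij}=U_{ij}+U_{ji}$ are identically distributed and pairwise independent --- together with Markov's inequality for $\lambda$ and a union bound over the polynomially many joint types. The only substantive difference is one of bookkeeping that works in your favor: you use Markov for the upper bound on $\pi$ and a relative (fraction-of-the-mean) deviation in Chebyshev, whereas the paper uses a two-sided Chebyshev with the absolute threshold $2^{n\delta}$, which becomes vacuous for types with $I_V(X\wedge\tilde X)>R$ (where $\mathbb{E}\pi$ is exponentially smaller than the threshold); your relative-deviation version, restricted as you note to the operative regime $I_V(X\wedge\tilde X)<2R$, is the cleaner way to actually reach the stated conclusion.
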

\begin{proof}
The proof is provided in the Appendix. In the proof, we evaluate the
variance of the packing function and use Chebyshev's inequality to
show that with high probability the packing function is close to its
expected value.
\end{proof}

\begin{lemma}\textbf{(Expurgated Packing Lemma)}\label{lm:p2p_expurgated}
For every sufficiently large $n$, every $R > 0$, $\delta > 0$ and
every type $P$ of sequences in $\mathcal{X}^n$ satisfying $H(P) > R$
, there exists a set of codewords $C^{ex}=\{\mathbf{x}_1,
\mathbf{x}_2,...,\mathbf{x}_{M^{*}}\} \subset T_P$ with $M^{*} \geq
\frac{2^{n(R-\delta)}}{2}$, such that for any $V_{X\tilde{X}} \in
\P_n(\X \times \X)$,
\begin{equation}\label{condition3-1}
\pi(C^{ex},V_{X\tilde{X}}) \leq 2^{n(R-I_V(X \wedge
\tilde{X})+2\delta)},
\end{equation}
and for every sequence $\mathbf{x}_i \in C^{ex}$,
\begin{equation}\label{condition3-2}
|T_{V_{\tilde{X}|X}}(\mathbf{x}_i) \cap C^{ex}| \leq  2^{n(R-I_V(X
\wedge \tilde{X})+2\delta)}.
\end{equation}
\end{lemma}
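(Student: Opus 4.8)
The plan is to obtain $C^{ex}$ by expurgating a single typical code, exactly as the discussion preceding the lemmas anticipates. I would begin with a code $C^t=\{\mathbf{x}_1,\dots,\mathbf{x}_M\}\subset T_P$ supplied by Lemma~\ref{lm:p2p_typical}, so that $2^{n(R-\delta)}\le M\le 2^{nR}$ and, by (\ref{condition2-1}), $\pi(C^t,V_{X\tilde{X}})\le 2^{n[R-I_V(X\wedge\tilde{X})+2\delta]}$ for every $V_{X\tilde{X}}\in\P_n(\X\times\X)$. The crucial observation is that the per-codeword quantity in (\ref{condition3-2}) is precisely the summand of the packing function: setting $N_i(V_{X\tilde{X}})\triangleq|T_{V_{\tilde{X}|X}}(\mathbf{x}_i)\cap C^t|$, we have $M\,\pi(C^t,V_{X\tilde{X}})=\sum_{i=1}^{M}\sum_{j\neq i}1_{T_{V_{X\tilde{X}}}}(\mathbf{x}_i,\mathbf{x}_j)$, which equals $\sum_i N_i(V_{X\tilde{X}})$ up to a negligible diagonal correction of at most $M$. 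Hence (\ref{condition2-1}) is exactly the assertion that the \emph{average} over codewords of $N_i(V_{X\tilde{X}})$ meets the target bound, and the whole task is to promote this average bound into one holding for \emph{every} retained codeword and \emph{all} joint types simultaneously.

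For the expurgation itself, I would aggregate all types into a single statistic before thresholding. Define $h(i)\triangleq\sum_{V_{X\tilde{X}}}2^{-n[R-I_V(X\wedge\tilde{X})+2\delta]}N_i(V_{X\tilde{X}})$. Summing (\ref{condition2-1}) over the at most $(n+1)^{|\X|^2}$ joint types (and absorbing the diagonal correction) gives $\frac1M\sum_i h(i)\le 2(n+1)^{|\X|^2}$, so by Markov's inequality at most $M/2$ indices have $h(i)>4(n+1)^{|\X|^2}$. Discarding these, the survivors form $C^{ex}$ with $M^*\ge M/2\ge 2^{n(R-\delta)}/2$. For each survivor and every $V_{X\tilde{X}}$ we then have $N_i(V_{X\tilde{X}})\le 4(n+1)^{|\X|^2}\,2^{n[R-I_V(X\wedge\tilde{X})+2\delta]}$, and since deleting codewords only shrinks shell counts, $|T_{V_{\tilde{X}|X}}(\mathbf{x}_i)\cap C^{ex}|\le N_i(V_{X\tilde{X}})$ obeys the same bound within $C^{ex}$. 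The overhead $4(n+1)^{|\X|^2}=2^{o(n)}$ is absorbed into the exponent for $n$ large, which yields (\ref{condition3-2}).

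Finally, (\ref{condition3-1}) drops out by averaging (\ref{condition3-2}) over the $M^*$ codewords of $C^{ex}$, so no separate argument is needed for it. The main obstacle is the uniformity over all joint types in (\ref{condition3-2}): a per-type Markov argument would cost a factor of two in the codeword count for each of the polynomially many types and could exhaust the code, so bundling the types into the single weighted statistic $h$ before applying Markov's inequality is the essential move, with the resulting $2^{o(n)}$ slack harmless at the exponential scale. I would also emphasize that, in contrast to Lemmas~\ref{P2P-random-Packing} and~\ref{lm:p2p_typical}, no variance or second-order packing estimate enters here: the expurgation rests entirely on the first-order bound (\ref{condition2-1}) already guaranteed for the typical code.
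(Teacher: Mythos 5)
Your proposal is correct and follows essentially the same route as the paper's proof: aggregate the per-codeword shell counts over all joint types into a single weighted statistic, apply Markov's inequality to retain at least half the codewords, use the fact that expurgation can only shrink shell counts, note that nonnegativity lets the aggregate bound split back into per-type bounds, and recover (\ref{condition3-1}) at the end by re-averaging over the surviving codewords. The only cosmetic differences are that the paper starts from the code $C^r$ of Lemma~\ref{P2P-random-Packing} rather than the typical code of Lemma~\ref{lm:p2p_typical} (a strictly stronger input than needed, which slightly undercuts your closing remark about second-order estimates, since Lemma~\ref{lm:p2p_typical}'s own proof requires the variance computation), and that the paper builds an extra $2^{-n\delta}$ slack into the weights so that the Markov threshold is $1$ instead of a polynomial factor that must later be absorbed into the exponent.
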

\begin{proof}
The Proof is provided in the Appendix. The basic idea of the proof
is simple. From Lemma 1 we know that for every $V_{X\tilde{X}}$,
there exists a code whose packing function is  upper bounded by a
number that is close to $2^{n(R-I_V(X \wedge
  \tilde{X}))}$. Since the packing function is an average over all
codewords in the code, we infer that  for at least half of the
codewords, the corresponding property~\eqref{condition3-2} is
satisfied. In the Appendix, we show that there exists a single code
that works for every joint type.
\end{proof}

\subsection{Error Exponent Bounds}

Now, we obtain the bounds on the error exponents using the results
from the previous three subsections. We present three lower bounds
and two upper bounds. The lower bounds are the random coding
exponent, typical random coding exponent and expurgated exponent.
All the three lower bounds are expressed as minimization of the same
objective function under different constraint sets. Similar
structure is manifested in the case of upper bounds. For
completeness, we first rederive the well-known result of random
coding exponent.

\begin{fact}\textbf{(Random Coding Bound)}\label{fc:p2p_random_coding}
For every type $P$ of sequences in $\X^n$ and $0 \leq R \leq H(P)$,
$\delta >0$, every DMC, $W: \X \rightarrow \Y$, and $2^{n(R-\delta)}
\leq M \leq 2^{nR}$, the expectation of the average error
probability over the constant composition code ensemble with $M$
codewords of type $P$, can be bounded by
\begin{align}
2^{-n[E_{rL}(R,P,W) + 3\delta]} \leq \bar{P_e} \leq
2^{-n[E_r(R,P,W)- 2\delta]},
\end{align}
whenever $n \geq n_1(|\X|,|\Y|,\delta)$, where
\begin{eqnarray}
E_r(R,P,W) &\triangleq& \min_{\substack{V_{X\tilde{X}Y} \in \P^r}}
D(V_{Y|X} || W |P) + |I_V(\tilde{X} \wedge XY)-R|^+,
\label{P2P-Er-Def}\\
E_{rL}(R,P,W) &\triangleq& \min_{\substack{V_{X\tilde{X}Y} \in
\P^r:\\I_V(\tilde{X} \wedge XY) \geq R}} D(V_{Y|X} || W |P) +
I_V(\tilde{X} \wedge XY)-R,
\end{eqnarray}
and
\begin{eqnarray}
\P^r \triangleq \big\{V_{X\tilde{X}Y} \in \P(\X \times \X \times
\Y): V_X=V_{\tilde{X}}=P \;,\;\alpha(P ,V_{Y|\tilde{X}}) \leq
\alpha(P ,V_{Y|X})\big\}.\label{P^r-def}
\end{eqnarray}
In particular,  there exists a set of codewords
$C^{r}=\{\mathbf{x}_1, \mathbf{x}_2,...,\mathbf{x}_{M}\} \subset
T_P$, with $M \geq 2^{n(R-\delta)}$, such that for every DMC, $W: \X
\rightarrow \Y$,
\begin{equation}
e(C^r,W) \leq 2^{-n[E_r(R,P,W)- 3\delta]}.
\end{equation}
\end{fact}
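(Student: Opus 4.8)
The plan is to derive both the ensemble-average bounds and the existence of the universal code $C^r$ directly from the two channel-independent inequalities~(\ref{upperboundA}) and~(\ref{lowerboundA}), combined with the moment estimates of Lemma~\ref{P2P-random-Packing}. For the upper bound on $\bar{P_e}=\mathbb{E}[e(C,W)]$, I would take expectations on both sides of~(\ref{upperboundA}). Since $x \mapsto \min\{ax,1\}$ is concave on $x \geq 0$ for $a>0$, Jensen's inequality lets me pull the expectation inside the minimum, after which I apply the upper bound $\mathbb{E}(\pi) \leq 2^{n(R-I_V(X \wedge \tilde{X})+\delta)}$ of Lemma~\ref{P2P-random-Packing}. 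The chain rule $I_V(\tilde{X} \wedge XY) = I_V(X \wedge \tilde{X}) + I_V(\tilde{X} \wedge Y|X)$ collapses the resulting exponent to $|I_V(\tilde{X} \wedge XY) - R - \delta|^+$, and since $\P_n^r$ contains only polynomially many joint types, the sum is at most $\mathrm{poly}(n)$ times its largest term. Using $|t-\delta|^+ \geq |t|^+-\delta$ and absorbing the polynomial factor into the exponent for $n$ large (this is where $n_1$ enters) yields $\bar{P_e} \leq 2^{-n[E_r(R,P,W)-2\delta]}$.

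For the lower bound I would start from~(\ref{lowerboundA}). Here the key observation is that $|\cdot|^+$ is convex, so Jensen gives $\mathbb{E}[|A-B|^+] \geq |\mathbb{E}[A]-\mathbb{E}[B]|^+$; applying Lemma~\ref{P2P-random-Packing} then bounds $\mathbb{E}(\pi)$ from below and each $\mathbb{E}(\lambda)$ from above. The crucial step, and the one I expect to be the main obstacle, is showing that the subtracted second-order term is exponentially negligible compared with the first-order term, so that the $|\cdot|^+$ retains the full order of $\mathbb{E}(\pi)$. Comparing exponents, domination amounts to $R < I_V(\hat{X} \wedge X\tilde{X}Y)$ for every $\hat{X}$ in the inner sum. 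Because that sum is constrained by $V_{X\hat{X}Y}=V_{X\tilde{X}Y}$, the marginal constraint forces $I_V(\hat{X} \wedge XY) = I_V(\tilde{X} \wedge XY)$, and on the range $I_V(\tilde{X} \wedge XY) \geq R$ defining $E_{rL}$ we obtain $R \leq I_V(\tilde{X} \wedge XY) = I_V(\hat{X} \wedge XY) \leq I_V(\hat{X} \wedge X\tilde{X}Y)$, exactly as needed. Keeping the dominant type-term then reconstructs $E_{rL}(R,P,W)$, giving $\bar{P_e} \geq 2^{-n[E_{rL}(R,P,W)+3\delta]}$.

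Finally, for the single universal code $C^r$, I would exploit that~(\ref{upperboundA}) depends on $W$ only through the factors $2^{-nD(V_{Y|X}||W|P)}$, while its code-dependence is entirely through the channel-independent packing function $\pi(C,V_{X\tilde{X}})$. By Markov's inequality, for each fixed $V_{X\tilde{X}}$ the fraction of codes with $\pi(C,V_{X\tilde{X}}) > 2^{n\delta}\,\mathbb{E}(\pi)$ is at most $2^{-n\delta}$; a union bound over the polynomially many joint types produces a code $C^r \subset T_P$ with $M \geq 2^{n(R-\delta)}$ satisfying $\pi(C^r,V_{X\tilde{X}}) \leq 2^{n(R-I_V(X \wedge \tilde{X})+2\delta)}$ for all $V_{X\tilde{X}}$. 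Substituting this single bound into~(\ref{upperboundA}) and repeating the type-counting argument of the first paragraph bounds $e(C^r,W)$ by $2^{-n[E_r(R,P,W)-3\delta]}$ simultaneously for every DMC $W$, which establishes the claimed universality.
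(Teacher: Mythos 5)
Your proposal is correct, and for the two ensemble-average bounds it is essentially the paper's own proof: take expectations of (\ref{upperboundA}) and (\ref{lowerboundA}), insert the moment bounds of Lemma~\ref{P2P-random-Packing}, collapse the exponents via the chain rule, and absorb the polynomial number of joint types into the $\delta$-slack. Your inline domination step, $I_V(\hat{X} \wedge X\tilde{X}Y) \geq I_V(\hat{X}\wedge XY) = I_V(\tilde{X}\wedge XY)$ under the constraint $V_{X\hat{X}Y}=V_{X\tilde{X}Y}$, is exactly the paper's Lemma~\ref{intermmediatelemma}. The one imprecision is that domination needs a margin, $I_V(\tilde{X}\wedge XY) > R + c\delta$ rather than $\geq R$ (at equality the subtracted second-order sum is not exponentially negligible); both you and the paper dispose of this by restricting the outer sum to types with such a margin and invoking continuity, which is what the $+3\delta$ in the final exponent is for, so this is not a gap.

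Where you genuinely diverge is the universal-code step, and your route is the more faithful one. The paper's appendix produces $C^r$ by applying Markov's inequality to $e(X^M,W)$ itself; since that random variable depends on $W$, the resulting high-probability set of good codes depends on the channel, so as literally written that argument yields a code good for one fixed $W$, not ``for every DMC $W$'' as the Fact asserts. Your argument --- Markov plus a union bound over the polynomially many joint types applied to the channel-independent quantity $\pi(C,V_{X\tilde{X}})$, followed by substitution of the resulting code into (\ref{upperboundA}), which holds for every code and every channel --- is precisely the argument that closes the paper's proof of Lemma~\ref{P2P-random-Packing}, and it is what actually delivers the claimed universality.
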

\begin{proof}
The proof is straightforward and is outlined  in the Appendix.
\end{proof}
It is well known that for $R\geq R_{crit}$, the random coding error
exponent is equal to the sphere packing error exponent, and as a
result the random coding bound is a tight bound. In addition, the
following is true.
\begin{corollary}\label{cor:erl_eq_er}
For any $R \leq R_{crit}$,
\begin{align}
\max_{P \in \P(\X)} E_{rL}(R,P,W) = \max_{P \in \P(\X)} E_r(R,P,W).
\end{align}
\end{corollary}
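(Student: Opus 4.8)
The plan is to prove the two inequalities separately, writing $F(R) \triangleq \max_{P} E_r(R,P,W)$ and $G(R) \triangleq \max_{P} E_{rL}(R,P,W)$ for brevity. The inequality $G(R) \ge F(R)$ is immediate and holds for every $R$: for each fixed $P$ the feasible set $\{V_{X\tilde X Y} \in \P^r : I_V(\tilde X \wedge XY) \ge R\}$ used in $E_{rL}$ is a subset of $\P^r$, and on this subset $|I_V(\tilde X \wedge XY)-R|^+ = I_V(\tilde X \wedge XY)-R$, so the two objective functions coincide there. Hence $E_r(R,P,W)$, being the minimum of the same objective over the larger set $\P^r$, satisfies $E_r(R,P,W) \le E_{rL}(R,P,W)$, and maximizing over $P$ gives $F(R)\le G(R)$. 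The work is therefore entirely in the reverse inequality $G(R)\le F(R)$ for $R\le R_{crit}$, which I would obtain by combining a soft slope argument with one structural fact.

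First I would record a monotonicity property. Writing $E_{rL}(R,P,W)=a(R,P)-R$ with $a(R,P)\triangleq\min_{V\in\P^r,\,I_V(\tilde X\wedge XY)\ge R}[D(V_{Y|X}\|W|P)+I_V(\tilde X\wedge XY)]$, the quantity $a(R,P)$ is non-decreasing in $R$ because the feasible set shrinks as $R$ grows. Consequently $E_{rL}(R_2,P,W)\ge E_{rL}(R_1,P,W)-(R_2-R_1)$ for every $P$ whenever $R_1<R_2$, and this difference-quotient bound passes to the maximum: if $P_1$ attains $G(R_1)$ then $G(R_2)\ge E_{rL}(R_2,P_1,W)\ge G(R_1)-(R_2-R_1)$, i.e. $G(R_1)\le G(R_2)+(R_2-R_1)$. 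In words, $G$ cannot increase faster than slope $-1$ as $R$ decreases.

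Second, I would establish the sphere-packing upper bound $E_{rL}(R,P,W)\le E_{sp}(R,P,W)$ for every $P$ and $R$, where $E_{sp}$ denotes the sphere-packing exponent; maximizing gives $G(R)\le \max_P E_{sp}(R,P,W)=E_{sp}(R)$. This is the one step that uses the geometry of the single-letter expressions rather than soft arguments: for $R$ at or above the input-specific critical rate the constrained minimizer in $E_{rL}$ already sits at $I_V(\tilde X\wedge XY)=R$, so $E_{rL}(R,P,W)$ collapses to the sphere-packing expression, while for smaller $R$ the function $E_{rL}(\cdot,P,W)$ is the supporting line (of slope $-1$) of the convex curve $E_{sp}(\cdot,P,W)$ at that critical rate, which by convexity lies below the curve. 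Combining this with the classical fact quoted before the statement, namely $F(R)=E_{sp}(R)$ for $R\ge R_{crit}$, and with $G(R)\ge F(R)$, I get $F(R)\le G(R)\le E_{sp}(R)=F(R)$, hence $G(R)=F(R)$ for all $R\ge R_{crit}$; in particular $G(R_{crit})=F(R_{crit})$, which is the anchor point.

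Finally I would propagate this anchor leftward. For $R\le R_{crit}$ the classical straight-line characterization of the random coding exponent gives $F(R)=F(R_{crit})+(R_{crit}-R)$, while the slope bound from the first step gives $G(R)\le G(R_{crit})+(R_{crit}-R)=F(R_{crit})+(R_{crit}-R)=F(R)$. Together with $G(R)\ge F(R)$ this yields $G(R)=F(R)$ for all $R\le R_{crit}$, which is the claim. The main obstacle is the sphere-packing upper bound $E_{rL}(R,P,W)\le E_{sp}(R,P,W)$ of the third paragraph: the monotonicity and maximization arguments only relate $G$ to itself and to $F$, so it is precisely the identification of $E_{rL}(\cdot,P,W)$ below the critical rate with the supporting line of the sphere-packing curve — which rests on the convexity and the KKT structure of the optimizing $V_{X\tilde X Y}$ — that does the real work and uniformly (in $P$) controls $G$ from above.
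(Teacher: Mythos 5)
Your proposal takes a genuinely different route from the paper's. The paper's proof is short and local: it fixes $P^*=\arg\max_P E_r(R,P,W)$, observes that for $R\le R_{crit}$ the optimized random coding bound is a straight line of slope $-1$, so that the argument of $|\cdot|^+$ is nonnegative at the minimizing $V^*$ (i.e.\ $I_{V^*}(\tilde{X}\wedge XY)\ge R$); hence $V^*$ is feasible for the constrained minimization defining $E_{rL}$, where the two objectives coincide, giving $E_{rL}(R,P^*,W)\le E_r(R,P^*,W)$ and thus equality at $P^*$. Your proof is instead global: the trivial direction $G\ge F$, a slope-$(-1)$ bound on $G$, a pointwise sphere-packing dominance $E_{rL}(R,P,W)\le E_{sp}(R,P,W)$, the anchor $G(R_{crit})=F(R_{crit})$, and propagation down the straight line. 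A real advantage of your scheme is that it controls $\max_P E_{rL}$ uniformly in $P$, which is what the equality of maxima literally demands: since $E_{rL}\ge E_r$ pointwise, the content of the corollary is $\max_P E_{rL}\le\max_P E_r$, and the paper's displayed conclusion (equality at the $E_r$-optimizer only) by itself yields just the trivial direction $\max_P E_{rL}\ge\max_P E_r$; to finish from the paper's lemma one must additionally know that the $E_{rL}$-maximizer enjoys the same pointwise equality, and your uniform bound is exactly the sort of argument that closes this. The price is a longer chain and an appeal to sphere-packing facts the paper never needs.

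The weak point is your justification of the crux inequality $E_{rL}(R,P,W)\le E_{sp}(R,P,W)$, which as written argues in the wrong direction. Knowing that the constrained minimizer of $E_{rL}$ sits on the boundary $I_V(\tilde{X}\wedge XY)=R$ only identifies $E_{rL}$ as a minimum of $D(V_{Y|X}\|W|P)$ over a certain family of joint types; under, say, minimum-entropy decoding every member of that family satisfies $I_V(X\wedge Y)\le I_V(\tilde{X}\wedge Y)\le I_V(\tilde{X}\wedge XY)=R$, so the family projects \emph{into} the sphere-packing feasible set, and what follows is $E_{rL}\ge E_{sp}$, not $\le$. The direction you need requires lifting the sphere-packing optimizer to a feasible joint type: let $V_{Y|X}$ achieve $E_{sp}(R,P,W)$ with active constraint $I(P,V_{Y|X})=R$, and define $V_{X\tilde{X}Y}$ by making $\tilde{X}-Y-X$ a Markov chain with $V_{\tilde{X}|Y}=V_{X|Y}$. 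Then $V_{\tilde{X}}=V_X=P$, the pairs $(\tilde{X},Y)$ and $(X,Y)$ have the same joint law, so $V_{Y|\tilde{X}}=V_{Y|X}$ and the $\alpha$-decoding constraint holds with equality for \emph{every} $\alpha$, and $I_V(\tilde{X}\wedge XY)=I_V(\tilde{X}\wedge Y)=R$; this point is feasible for $E_{rL}$ with objective exactly $E_{sp}(R,P,W)$. (When the sphere-packing constraint is slack, $E_{sp}=0$ with $V_{Y|X}=W$, and one reaches $I_V=R$ by interpolating between this coupling and the one with $\tilde{X}=X$, both of which preserve $V_{Y|\tilde{X}}=W$.) With this lemma in place --- and modulo the minor normalization that the maxima over $P$ are restricted to compositions with $H(P)>R$, so that the composition attaining $G(R)$ remains admissible at $R_{crit}$ in your slope step --- your anchoring and propagation are correct and the corollary follows.
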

\begin{proof}
The proof is provided in the Appendix.
\end{proof}

Next we have an exact characterization of the typical performance of
the constant composition code ensemble.
\begin{theorem}\textbf{(Typical random Coding Bound)}\label{th:p2p_typical}
For every type $P$ of sequences in $\X^n$, $\delta > 0$, and every
transmission rate satisfying $0 \leq R \leq H(P)$, almost all codes,
$C^t=\{\mathbf{x}_1, \mathbf{x}_2,...,\mathbf{x}_{M}\}$ with
$\mathbf{x}_i \in T_P$ for all $i$, $M \geq 2^{n(R-\delta)}$,
satisfy
\begin{equation}
2^{-n[E_{TL}(R,P,W)+ 4\delta]} \leq e(C^t,W) \leq 2^{-n[E_T(R,P,W)-
3\delta]},\label{p2p_typical-bounds}
\end{equation}
for every DMC, $W: \X \rightarrow \Y$, whenever $n \geq
n_1(|\X|,|\Y|,\delta)$. Here,
\begin{eqnarray}
E_T(R,P,W) &\triangleq& \min_{\substack{V_{X\tilde{X}Y}} \in \P^t}
D(V_{Y|X} || W |P) + |I_V(\tilde{X} \wedge XY)-R|^+,
\label{P2P-Typ-Def}\\
E_{TL}(R,P,W) &\triangleq& \min_{\substack{V_{X\tilde{X}Y} \in
\P^t:\\I_V(\tilde{X} \wedge XY) \geq R }} D(V_{Y|X} || W |P) +
I_V(\tilde{X} \wedge XY)-R, \label{P2P-Typ-Def}
\end{eqnarray}
where
\begin{eqnarray}
\P^t \triangleq \big\{V_{X\tilde{X}Y} \in \P(\X \times \X \times
\Y): \;V_X=V_{\tilde{X}}=P,\; I_V(X \wedge \tilde{X}) \leq 2R
\;,\;\alpha(P ,V_{Y|\tilde{X}}) \leq \alpha(P ,V_{Y|X})\big\}.
\end{eqnarray}
\end{theorem}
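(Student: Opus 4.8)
The plan is to derive the two inequalities in \eqref{p2p_typical-bounds} separately, in each case feeding the typical concentration estimates \eqref{condition2-1}--\eqref{condition2-2} of Lemma~\ref{lm:p2p_typical} into the channel-independent bounds \eqref{upperboundA} and \eqref{lowerboundA}. The upper bound \eqref{upperboundA} will produce $E_T$ and the lower bound \eqref{lowerboundA} will produce $E_{TL}$. The only genuinely new feature, relative to the random-coding Fact~\ref{fc:p2p_random_coding}, is the appearance of the constraint $I_V(X\wedge\tilde{X})\le 2R$ that cuts $\P^r$ down to $\P^t$, and I would make sure this emerges from the argument rather than being imposed by hand.

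For the upper bound I would substitute $\pi(C^t,V_{X\tilde{X}})\le 2^{n[R-I_V(X\wedge\tilde{X})+2\delta]}$ from \eqref{condition2-1} into \eqref{upperboundA}. Since $M\,\pi(C^t,V_{X\tilde{X}})$ counts ordered codeword pairs and is therefore a nonnegative integer bounded by $2^{n[2R-I_V(X\wedge\tilde{X})+2\delta]}$, every joint type with $I_V(X\wedge\tilde{X})>2R+2\delta$ forces $\pi(C^t,V_{X\tilde{X}})=0$ and drops out of the sum; this is exactly what restricts the effective range of summation to $\P^t$. On the surviving types I would use the chain rule $I_V(\tilde{X}\wedge Y|X)+I_V(X\wedge\tilde{X})=I_V(\tilde{X}\wedge XY)$ to rewrite the bracketed minimum in \eqref{upperboundA} as $2^{-n[\,|I_V(\tilde{X}\wedge XY)-R|^+-2\delta]}$, so that each term carries the exponent $D(V_{Y|X}\|W|P)+|I_V(\tilde{X}\wedge XY)-R|^+$. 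Minimizing over $\P^t$ yields $E_T$, and the polynomially many types together with the thin boundary layer $2R<I_V(X\wedge\tilde{X})\le 2R+2\delta$ cost only a subexponential factor, upgrading $2\delta$ to $3\delta$.

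For the lower bound I would retain in \eqref{lowerboundA} the single term indexed by a type $V^{*}\in\P^t$ attaining the minimum defining $E_{TL}$ (so $I_{V^{*}}(\tilde{X}\wedge XY)\ge R$), insert $\pi(C^t,V^{*}_{X\tilde{X}})\ge 2^{n[R-I_{V^{*}}(X\wedge\tilde{X})-2\delta]}$ and the $\lambda$-bound \eqref{condition2-2}, and check that the positive part does not collapse. After the chain rule each summand of the correction term is at most $2^{n[2R-I_{V^{*}}(X\wedge\tilde{X})-I_V(\hat{X}\wedge X\tilde{X}Y)+4\delta]}$; the marginal constraint $V_{X\hat{X}Y}=V_{X\tilde{X}Y}$ gives $I_V(\hat{X}\wedge XY)=I_{V^{*}}(\tilde{X}\wedge XY)$, and since $I_V(\hat{X}\wedge X\tilde{X}Y)\ge I_V(\hat{X}\wedge XY)\ge R$ the correction turns out to be of the \emph{same} exponential order $2^{n[R-I_{V^{*}}(X\wedge\tilde{X})]}$ as $\pi$ itself. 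Granting for the moment that the correction is strictly smaller, so that $\bigl|\pi-(\text{correction})\bigr|^+\ge\tfrac12\pi$, the retained term of \eqref{lowerboundA} then evaluates, again by the chain rule, to $2^{-n[D(V^{*}_{Y|X}\|W|P)+I_{V^{*}}(\tilde{X}\wedge XY)-R+O(\delta)]}=2^{-n[E_{TL}+O(\delta)]}$, as required.

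The main obstacle is precisely this near-cancellation in the lower bound: the triple-overcounting term controlled by $\lambda$ is a priori only matched, not beaten, by the pairwise term controlled by $\pi$, and at a boundary minimizer with $I_{V^{*}}(\tilde{X}\wedge XY)=R$ the naive estimate even lets the correction dominate. I would break the tie by perturbation: instead of minimizing over $\{I_V(\tilde{X}\wedge XY)\ge R\}$ I minimize over the slightly tighter set $\{I_V(\tilde{X}\wedge XY)\ge R+c\delta\}$ for a fixed constant $c$. By uniform continuity of $D(V_{Y|X}\|W|P)+I_V(\tilde{X}\wedge XY)$ on the compact set of conditional types, this tightening raises the optimum by only $O(\delta)$, while the strengthened inequality $I_V(\hat{X}\wedge X\tilde{X}Y)\ge R+c\delta$ pushes the correction exponent strictly below that of $\pi$, making the correction exponentially smaller than $\pi$ for $n\ge n_1$, so that $\bigl|\pi-(\text{correction})\bigr|^+\ge\tfrac12\pi$ genuinely holds. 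All the $O(\delta)$ losses incurred along the way are then absorbed into the stated constant after renaming $\delta$, which is legitimate since the claim is asserted for every $\delta>0$.
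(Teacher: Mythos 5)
Your proposal is correct and takes essentially the same route as the paper's own proof: it feeds the concentration bounds of Lemma~\ref{lm:p2p_typical} into \eqref{upperboundA} and \eqref{lowerboundA}, extracts the restriction $I_V(X\wedge\tilde{X})\le 2R$ from integrality of the pair counts exactly as the paper does in \eqref{lemma2-formula2}--\eqref{typical-condition2}, and uses the chain rule together with the observation $I_V(\hat{X}\wedge X\tilde{X}Y)\ge I_V(\hat{X}\wedge XY)=I_V(\tilde{X}\wedge XY)$ (the paper's Lemma~\ref{intermmediatelemma}) to control the $\lambda$-correction. Your tie-breaking perturbation of the constraint $I_V(\tilde{X}\wedge XY)\ge R$ to $I_V(\tilde{X}\wedge XY)\ge R+c\delta$ is the same device as the paper's restriction of the lower-bound sum to types with $I_V(\tilde{X}\wedge XY)>R+5\delta$ followed by its continuity argument, so the two proofs differ only in bookkeeping (you retain a single minimizing type rather than the whole restricted sum).
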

\begin{proof}
The proof is provided in the Appendix.
\end{proof}
In Theorem~\ref{th:p2p_typical}, we proved the existence of a high
probability (almost 1) collection of codes such that every code in
this collection satisfies~\eqref{p2p_typical-bounds}. This provides
a lower bound on the typical average error exponent for the constant
composition code ensemble as defined in
Definition~\ref{def:typical}. In the following, we show that the
typical performance of the best high-probability collection cannot
be better than that given in Theorem~\ref{th:p2p_typical}.
%
\begin{corollary}\label{P2P-typical-Cor1}
For every type $P$ of sequences in $\X^n$, $\delta > 0$, and every
transmission rate satisfying $0 \leq R \leq H(P)$,
\begin{align}
E_T(R,P,W) \leq E_{av}^T (R) \leq E_{TL}(R,P,W),
\end{align}
for the constant composition code ensemble.
\end{corollary}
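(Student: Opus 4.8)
The plan is to derive both inequalities directly from Theorem~\ref{th:p2p_typical}. That theorem already exhibits a single high-probability collection of codes — call it $G$, the set of codes satisfying both~\eqref{condition2-1} and~\eqref{condition2-2}, and hence both bounds in~\eqref{p2p_typical-bounds} — such that
\[
2^{-n[E_{TL}(R,P,W)+4\delta]} \leq e(C,W) \leq 2^{-n[E_T(R,P,W)-3\delta]} \qquad \text{for all } C \in G,
\]
with $\mathbb{P}(G) \to 1$ as $n \to \infty$ (this is the content of ``almost every code''). The two inequalities of the corollary then correspond to using $G$ in two complementary ways inside the max-min that defines $E^T_{av}(R)$ in Definition~\ref{def:typical}.

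For the lower bound $E_T(R,P,W) \leq E^T_{av}(R)$, I would take $\tilde{\C}=G$ as a feasible competitor in the maximization. Fix $\delta>0$; for $n$ large enough $\mathbb{P}(G)>1-\delta$, so $G$ is admissible, and the upper bound on $e(C,W)$ gives $\min_{C \in G} -\frac{1}{n}\log e(C,W) \geq E_T(R,P,W) - 3\delta$. Since $\max_{\tilde{\C}}$ can only increase this value, taking $\limsup_{n}$ and then $\liminf_{\delta \to 0}$ yields $E^T_{av}(R) \geq E_T(R,P,W)$.

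The upper bound $E^T_{av}(R) \leq E_{TL}(R,P,W)$ is the more delicate direction and is the main obstacle, because here I must bound $\min_{C \in \tilde{\C}} -\frac{1}{n}\log e(C,W)$ from above for \emph{every} admissible $\tilde{\C}$, not merely for one favorable choice. The key is a pigeonhole/intersection argument: fix $\delta \in (0,1/2)$; for $n$ large enough $\mathbb{P}(G)>1-\delta$, so any $\tilde{\C}$ with $\mathbb{P}(\tilde{\C})>1-\delta$ satisfies $\mathbb{P}(\tilde{\C}\cap G) > 1-2\delta > 0$, whence $\tilde{\C}\cap G \neq \varnothing$. Choosing any $C_0 \in \tilde{\C}\cap G$ and invoking the lower bound on $e(C_0,W)$ gives
\[
\min_{C \in \tilde{\C}} -\tfrac{1}{n}\log e(C,W) \leq -\tfrac{1}{n}\log e(C_0,W) \leq E_{TL}(R,P,W) + 4\delta .
\]
Since this holds uniformly over all admissible $\tilde{\C}$, the maximum over such collections is also bounded by $E_{TL}(R,P,W)+4\delta$; taking $\limsup_{n}$ and $\liminf_{\delta \to 0}$ completes the argument. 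The essential point — and the only place that requires care — is that the codes violating the typical lower bound form a set of vanishing probability, so no high-probability collection can consist entirely of atypically good codes; the typical lower bound therefore caps the performance of the worst code in any admissible collection.
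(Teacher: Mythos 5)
Your proposal is correct and follows essentially the same route as the paper's own proof: both directions use the high-probability collection guaranteed by Theorem~\ref{th:p2p_typical}, with the lower bound obtained by using that collection as a feasible competitor in the maximization, and the upper bound obtained by the same intersection (pigeonhole) argument showing any admissible collection must contain a typical code whose error probability is bounded below by $2^{-n[E_{TL}(R,P,W)+4\delta]}$. Nothing is missing; the argument matches the paper's step for step.
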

\begin{proof}
The proof is provided in the Appendix.
\end{proof}
Clearly, since the random coding bound is tight for $R\geq
R_{crit}$, the same is true for the typical random coding bound. For
$R\leq R_{crit}$ we have the following result.
\begin{corollary}\label{cor:et1_eq_et}
For any $R \leq R_{crit}$,
\begin{align}
\max_{P \in \P(\X)} E_{TL}(R,P,W) = \max_{P \in \P(\X)} E_T(R,P,W).
\end{align}
\end{corollary}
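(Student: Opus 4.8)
The plan is to deduce this corollary from Corollary~\ref{cor:erl_eq_er} by showing that, for $R \le R_{crit}$, the only feature distinguishing the typical sets from the random-coding sets --- namely the extra constraint $I_V(X \wedge \tilde{X}) \le 2R$ appearing in $\P^t$ but not in $\P^r$ --- is inactive at the relevant optimizing composition. First I would record the elementary pointwise inequalities that follow directly from the definitions. Since $\P^t \subseteq \P^r$ and the two families use the same objective, minimizing over the smaller set gives $E_r(R,P,W) \le E_T(R,P,W)$ and $E_{rL}(R,P,W) \le E_{TL}(R,P,W)$ for every $P$. Moreover, because the objective $D(V_{Y|X}\|W|P) + |I_V(\tilde{X} \wedge XY) - R|^+$ coincides with $D(V_{Y|X}\|W|P) + I_V(\tilde{X} \wedge XY) - R$ on the set where $I_V(\tilde{X} \wedge XY) \ge R$, and $E_T$ minimizes over all of $\P^t$ while $E_{TL}$ minimizes over the subset on which this rate inequality holds, we also get $E_T(R,P,W) \le E_{TL}(R,P,W)$ pointwise.

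Taking the maximum over $P$ and combining these with Corollary~\ref{cor:erl_eq_er} yields the sandwich
$\max_P E_{rL} = \max_P E_r \le \max_P E_T \le \max_P E_{TL}$,
where the first equality is Corollary~\ref{cor:erl_eq_er} and the two inequalities come from the pointwise bounds $E_r \le E_T$ and $E_T \le E_{TL}$. Consequently it suffices to prove the single reverse inequality $\max_P E_{TL} \le \max_P E_{rL}$; this would force all four quantities to be equal, and in particular $\max_P E_T = \max_P E_{TL}$, which is the claim. To establish it, let $P^{*}$ attain $\max_P E_{TL}$ and let $V^{\dagger}$ be a minimizer of the optimization defining $E_{rL}(R,P^{*},W)$. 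If I can show $V^{\dagger} \in \P^t$ --- equivalently $I_{V^{\dagger}}(X \wedge \tilde{X}) \le 2R$, since $V^{\dagger}$ already satisfies $V_X = V_{\tilde{X}} = P^{*}$, the $\alpha$-constraint, and $I_{V^{\dagger}}(\tilde{X} \wedge XY) \ge R$ --- then $V^{\dagger}$ is feasible for the $E_{TL}$ problem at $P^{*}$, giving $E_{TL}(R,P^{*},W) \le E_{rL}(R,P^{*},W) \le \max_P E_{rL}$, as needed.

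The crux, and the main obstacle, is therefore to verify $I_{V^{\dagger}}(X \wedge \tilde{X}) \le 2R$ below the critical rate. I would exploit the chain-rule identity $I_V(\tilde{X} \wedge XY) = I_V(X \wedge \tilde{X}) + I_V(\tilde{X} \wedge Y \mid X)$, which gives $I_V(X \wedge \tilde{X}) \le I_V(\tilde{X} \wedge XY)$, so it is enough to control the minimizer's value of $I_V(\tilde{X} \wedge XY)$. This is exactly where $R \le R_{crit}$ enters, and the cleanest route is to reuse the structural analysis already carried out in the proof of Corollary~\ref{cor:erl_eq_er}: below the critical rate the optimizing composition places the dominant error event on the boundary of its rate constraint, so that $I_{V^{\dagger}}(\tilde{X} \wedge XY)$ is pinned to $R$ and hence comfortably below $2R$; the factor-of-two slack in the $\P^t$ constraint leaves ample room and is not tight. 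I expect the delicate point to be confirming that this active-constraint (slope $-1$, straight-line) behavior persists at the particular composition $P^{*}$ maximizing $E_{TL}$ rather than only at the maximizer of $E_{rL}$; the remaining items --- existence of the maximizer $P^{*}$ by continuity in $P$, and closedness of $\P^t$ inside $\P^r$ so that $V^{\dagger}$ is admissible --- are routine and can be handled exactly as in the earlier corollary.
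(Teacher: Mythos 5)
Your setup is clean---the pointwise inequalities $E_r \le E_T \le E_{TL}$ and $E_{rL} \le E_{TL}$ are correct, and the sandwich $\max_P E_{rL} = \max_P E_r \le \max_P E_T \le \max_P E_{TL}$ is valid---but the inequality you reduce everything to, $\max_P E_{TL}(R,P,W) \le \max_P E_{rL}(R,P,W)$, is false, and the step you yourself flag as the crux is where it breaks. Below $R_{crit}$ the slope-$(-1)$ behavior does \emph{not} mean the minimizer rides the boundary $I_V(\tilde{X}\wedge XY)=R$; it means the opposite. In that regime $E_{rL}(R,P,W) = \min_V\left[D(V_{Y|X}\|W|P)+I_V(\tilde{X}\wedge XY)\right]-R$, where the minimizer $V^{\dagger}$ is the $R$-independent minimizer of $D+I$ and satisfies $I_{V^{\dagger}}(\tilde{X}\wedge XY)\ge R$ with slack; the phrase ``the term in $|\cdot|^+$ is active'' in the proof of Corollary~\ref{cor:erl_eq_er} means $|I-R|^+=I-R$, not $I=R$. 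Consequently, as $R\to 0$ the quantity $I_{V^{\dagger}}(X\wedge\tilde{X})$ stays at a fixed, generally positive value while $2R\to 0$, so $V^{\dagger}\notin \P^t$ and your feasibility transfer fails. Concretely, for the BSC with $p=0.05$ (the paper's Fig.~2), the dominant pairwise-error joint type at $R\approx 0$ puts the two codewords at relative distance roughly $0.3$ rather than $0.5$, i.e.\ $I_{V^{\dagger}}(X\wedge\tilde{X})\approx 0.11>0=2R$. This is not a removable technicality: if your target inequality held, the sandwich would force $\max_P E_T = \max_P E_r$ for all $R\le R_{crit}$, contradicting the central claim of the paper (and of \cite{Barg-RandomCode}, via Corollary~\ref{P2P-typical-Fact}) that the typical exponent strictly exceeds the random coding exponent at low rates; indeed, the constraint $I_V(X\wedge\tilde{X})\le 2R$ being \emph{active} on the random-coding minimizer is precisely the source of that improvement.

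The proof the paper has in mind never leaves $\P^t$ and never compares with $\P^r$. Take $P^{*}$ maximizing $E_T(R,\cdot,W)$ and let $V^{*}$ be a minimizer of the $E_T(R,P^{*},W)$ problem. For $R\le R_{crit}$ the positive part is active at this minimizer, $|I_{V^{*}}(\tilde{X}\wedge XY)-R|^{+}=I_{V^{*}}(\tilde{X}\wedge XY)-R$, exactly as in Corollary~\ref{cor:erl_eq_er}. Then $V^{*}$ is already feasible for the $E_{TL}(R,P^{*},W)$ minimization---it lies in $\P^t$ by construction, so the $2R$ constraint that sank your argument never has to be checked, and it satisfies $I_{V^{*}}(\tilde{X}\wedge XY)\ge R$---which gives $E_{TL}(R,P^{*},W)\le E_T(R,P^{*},W)$; the reverse inequality is your pointwise bound, and the max equality follows. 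The only substantive thing to verify is the activeness of the positive part at the $E_T$-minimizer below the critical rate (the analogue of the straight-line property within $\P^t$), which is the direct transcription of the earlier corollary's argument and is what ``very similar'' refers to.
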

\begin{proof}
The proof is very similar to that of Corollary~\ref{cor:erl_eq_er}
and is omitted.
\end{proof}
It can be seen that the typical random coding bound is the true
error exponent for almost all codes, with $M$ codewords, in the
constant composition code ensemble. A similar lower bound on the
typical random coding bound was derived by Barg and
Forney~\cite{Barg-RandomCode} for the binary symmetric channel.
Although the approach used here is completely different from the one
in~\cite{Barg-RandomCode}, in the following corollary we show that
these two bounds coincide for binary symmetric channels.

\begin{corollary}\label{P2P-typical-Fact}
For a binary symmetric channel with crossover probability $p$, and
for $0 \leq R \leq R_{crit}$
\begin{align}
E_T(R,P,W) = E_{TRC}(R),
\end{align}
where $E_{TRC}$ is the lower bound for the error exponent of a
typical random code in~\cite{Barg-RandomCode}.
\end{corollary}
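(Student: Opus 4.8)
The plan is to evaluate the general single-letter formula for $E_T(R,P,W)$ from Theorem~\ref{th:p2p_typical} on the binary symmetric channel and to show that it collapses to the explicit expression $E_{TRC}(R)$ of Barg and Forney. Throughout I would take $P$ to be the uniform type on $\{0,1\}$, the natural symmetric choice for the BSC (and the one used in~\cite{Barg-RandomCode}), and use maximum-likelihood decoding, which for the BSC is minimum Hamming-distance decoding. With this choice the $\alpha$-decoding constraint $\alpha(P,V_{Y|\tilde X})\le\alpha(P,V_{Y|X})$ defining $\P^t$ becomes simply $\Pr_V[\tilde X\neq Y]\le\Pr_V[X\neq Y]$, i.e. the competing codeword is at least as close to the output as the transmitted one.

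Next I would parametrize each competing joint type $V_{X\tilde X Y}\in\P^t$ by its three pairwise disagreement fractions $d=\Pr_V[X\neq\tilde X]$, $w=\Pr_V[X\neq Y]$, and $w'=\Pr_V[\tilde X\neq Y]$, together with one triple-overlap parameter $t=\Pr_V[\tilde X=1,Y=1\mid X=0]$. Using the symmetry of the BSC together with the convexity of the conditional divergence, one shows that the minimizing joint type is symmetric in the two inputs, so that the objective simplifies to the binary quantities $D(V_{Y|X}\|W|P)=D(w\,\|\,p)$, $I_V(X\wedge\tilde X)=1-H(d)$, and the explicit binary-entropy expression $I_V(\tilde X\wedge XY)=1+H(w)-H(t,\,d-t,\,w-t,\,1-d-w+t)$, where $H$ is the binary entropy and $D(\cdot\|\cdot)$ the binary Kullback--Leibler divergence, and where $w'=d+w-2t$.

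The typicality constraint $I_V(X\wedge\tilde X)\le 2R$ that distinguishes $\P^t$ from the random-coding set $\P^r$ then reads $1-H(d)\le 2R$, i.e. $d\ge\delta_{GV}(2R)$, where $\delta_{GV}(r)$ is the Gilbert--Varshamov relative distance defined by $1-H(\delta_{GV}(r))=r$. Consequently $E_T(R,P,W)$ reduces to the finite-dimensional program of minimizing $D(w\,\|\,p)+\big|\,I_V(\tilde X\wedge XY)-R\,\big|^+$ over $(d,w,t)$ subject to $d\ge\delta_{GV}(2R)$ and $w'=d+w-2t\le w$. I would solve this by Lagrange multipliers and boundary analysis, showing that for $0\le R\le R_{crit}$ the distance constraint is active at $d=\delta_{GV}(2R)$ and determining the regime in which the $|\cdot|^+$ is strictly positive; evaluating the objective at the optimizer yields a closed form which I would then verify coincides with $E_{TRC}(R)$.

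The hard part is twofold. First, one must justify rigorously that the optimization over the full joint type $V_{X\tilde X Y}$ can be replaced by the scalar optimization above --- that is, that the minimizer inherits the symmetry of the channel --- which requires a careful convexity and symmetrization argument rather than a bare appeal to symmetry. Second, Barg and Forney derive $E_{TRC}(R)$ by a structurally different route (the typical distance spectrum and expurgation of the weight enumerator), so the final step is an algebraic reconciliation of two differently parametrized optimization problems; the delicate bookkeeping there is tracking which constraints are active across the range $0\le R\le R_{crit}$ and confirming that the active constraint $d=\delta_{GV}(2R)$ in the present formulation corresponds exactly to the typical-distance quantity governing the bound of~\cite{Barg-RandomCode}.
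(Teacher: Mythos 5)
A preliminary remark: the paper itself provides no proof of Corollary~\ref{P2P-typical-Fact} --- it is stated without a proof environment and nothing corresponding to it appears in the Appendix --- so there is no in-paper argument to compare yours against; what follows assesses your plan on its own terms. Your setup and the concrete formulas in it are correct: with $P$ uniform and maximum-likelihood decoding, $\alpha(P\cdot V)=D(V\|W|P)+H(V|P)=\log\frac{1}{1-p}+\Pr_V[X\neq Y]\log\frac{1-p}{p}$ is increasing in the crossover fraction for $p<1/2$, so the constraint defining $\P^t$ is indeed $w'\le w$; the flip-invariant parametrization is complete (the global bit-flip acts on $\{0,1\}^3$ with four orbits, so flip-invariant joint types form a three-parameter family, and your expressions for $I_V(X\wedge\tilde{X})$ and $I_V(\tilde{X}\wedge XY)$ are right); and $I_V(X\wedge\tilde{X})\le 2R$ is exactly $d\ge\delta_{GV}(2R)$ (more precisely $\delta_{GV}(2R)\le d\le 1-\delta_{GV}(2R)$), which is precisely the mechanism by which $\P^t$ encodes Barg--Forney's typical minimum distance.

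However, as a proof the proposal is incomplete, and one of its two deferred steps is misdirected as stated. First, your symmetrization rationale --- ``convexity of the conditional divergence'' --- controls only the divergence term; $I_V(\tilde{X}\wedge XY)$ is neither convex nor concave in $V$, so no convexity argument applied to the whole objective can succeed. The correct repair is to average $V$ with its global flip $V'$: the divergence term does not increase by convexity, and for the information terms one uses that the marginal constraints pin $H_V(\tilde{X})=1$, so that $I_V(\tilde{X}\wedge XY)=1-H_V(\tilde{X}|XY)$ and $I_V(X\wedge\tilde{X})=1-H_V(\tilde{X}|X)$, together with concavity of conditional entropy in the joint distribution; hence averaging decreases both informations, preserves feasibility (the $\alpha$-constraint depends only on $(w,w')$, which are linear and flip-invariant), and does not increase $|I_V(\tilde{X}\wedge XY)-R|^+$. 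Second, the identification with $E_{TRC}(R)$ is announced but never carried out, and that is where the content of the corollary lies. Rather than Lagrange multipliers in $(d,w,t)$, the efficient route is to minimize first over $V_{Y|X\tilde{X}}$ at fixed $d$: by the chain rule $I_V(\tilde{X}\wedge XY)=I_V(\tilde{X}\wedge X)+I_V(\tilde{X}\wedge Y|X)$, and by the Csiszar--Korner computation used for the expurgated bound (cf.~\cite{Csiszar-Graph},~\cite{Csiszarbook}), the minimum of $D(V_{Y|X}\|W|P)+I_V(\tilde{X}\wedge Y|X)$ under the ML constraint equals the expected Bhattacharyya distance, which for the BSC is $\gamma d$ with $\gamma=-\log\bigl(2\sqrt{p(1-p)}\bigr)$. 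This collapses $E_T(R,P,W)$ to a scalar problem in $d$ whose active branch is $\gamma d+1-H(d)-R$ over $d\ge\delta_{GV}(2R)$, i.e., the union-bound exponent of a code with the rate-$2R$ Gilbert--Varshamov distance spectrum, which is the object Barg and Forney actually compute. To finish, you must still write down $E_{TRC}(R)$ from~\cite{Barg-RandomCode} explicitly, determine which branch of the $|\cdot|^+$ dominates throughout $0\le R\le R_{crit}$, and verify the two expressions agree there; none of this is in the proposal, so the claimed equality is asserted rather than proved.
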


Finally, we re-derive the well-known  expurgated error exponent in a
rather straightforward way.


\begin{fact}\textbf{(Expurgated Bound)}\label{fc:p2p_expurgated}
For every type P of sequences in $\X^n$ and $0 \leq R \leq H(P)$,
$\delta >0$, there exists a set of codewords $C^{ex}=\{\mathbf{x}_1,
\mathbf{x}_2,...,\mathbf{x}_{M^*}\} \subset  T_P$ with $M^* \geq
\frac{2^{n(R-\delta)}}{2}$, such that for every DMC, $W: \X
\rightarrow \Y$,
\begin{equation}
e(C^{ex},W) \leq 2^{-n[E_{ex}(R,P,W)- 3\delta]}
\end{equation}
whenever $n \geq n_1(|\X|,|\Y|,\delta)$, where
\begin{equation}
E_{ex}(R,P,W) \triangleq \min_{\substack{V_{X\tilde{X}Y}} \in
\P^{ex}} D(V_{Y|X} || W |P) + |I_V(\tilde{X} \wedge XY)-R|^+
\label{P2P-Exp-Def}
\end{equation}
where
\begin{eqnarray}
\P^{ex} \triangleq \big\{V_{X\tilde{X}Y} \in \P(\X \times \X \times
\Y): \;V_X=V_{\tilde{X}}=P,\; \;\; I_V(X \wedge \tilde{X}) \leq R
\;,\;\alpha(P ,V_{Y|\tilde{X}}) \leq \alpha(P ,V_{Y|X})\big\}
\end{eqnarray}
\end{fact}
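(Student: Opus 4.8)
The plan is to combine the Expurgated Packing Lemma (Lemma~\ref{lm:p2p_expurgated}) with the per-codeword error decomposition already set up in~\eqref{P2P-thm-Main-Formula1}. First I would invoke Lemma~\ref{lm:p2p_expurgated} to produce a codebook $C^{ex}=\{\mathbf{x}_1,\ldots,\mathbf{x}_{M^*}\}\subset T_P$ with $M^*\geq 2^{n(R-\delta)}/2$ obeying the strong per-codeword bound~\eqref{condition3-2}. Since~\eqref{condition3-2} holds uniformly over every codeword, it will suffice to bound each conditional error $W^n(D_i^c|\mathbf{x}_i)$ by the same quantity and then average over $i$ to control $e(C^{ex},W)$.

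For a fixed $i$ I would write $W^n(D_i^c|\mathbf{x}_i)$ exactly as in~\eqref{P2P-thm-Main-Formula1}, i.e.\ as $\sum_{V_{X\tilde{X}Y}\in\P_n^r} 2^{-n[D(V_{Y|X}\|W|P)+H_V(Y|X)]}\,A_i(V_{X\tilde{X}Y},C^{ex})$. Using $A_i\leq B_i$ from~\eqref{ABC} together with $|T_{V_{Y|X\tilde{X}}}(\mathbf{x}_i,\mathbf{x}_j)|\leq 2^{nH_V(Y|X\tilde{X})}$, and the complementary estimate $A_i\leq|T_{V_{Y|X}}(\mathbf{x}_i)|\leq 2^{nH_V(Y|X)}$, each $A_i$ is controlled by $\min\{2^{nH_V(Y|X\tilde{X})}\,|T_{V_{\tilde{X}|X}}(\mathbf{x}_i)\cap C^{ex}|,\,2^{nH_V(Y|X)}\}$. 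Substituting~\eqref{condition3-2} and applying the chain rule $I_V(\tilde{X}\wedge Y|X)+I_V(X\wedge\tilde{X})=I_V(\tilde{X}\wedge XY)$, the summand collapses to $2^{-nD(V_{Y|X}\|W|P)}\min\{2^{n(R-I_V(\tilde{X}\wedge XY)+2\delta)},1\}$, which coincides with the integrand of the random-coding bound.

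The decisive step, and the one that yields the improved exponent, is an integrality observation. The quantity $|T_{V_{\tilde{X}|X}}(\mathbf{x}_i)\cap C^{ex}|$ is a non-negative integer, yet~\eqref{condition3-2} bounds it by $2^{n(R-I_V(X\wedge\tilde{X})+2\delta)}$, which is strictly below $1$ whenever $I_V(X\wedge\tilde{X})>R+2\delta$. For every such joint type the count must therefore be exactly zero, forcing $B_i=A_i=0$ and eliminating that type from the sum. The sum is thus effectively restricted to $\{I_V(X\wedge\tilde{X})\leq R+2\delta\}$, which as $\delta\to 0$ is precisely the constraint distinguishing $\P^{ex}$ from $\P^r$. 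This is exactly how expurgation sharpens the random-coding exponent: it is the per-codeword control~\eqref{condition3-2}, and not the averaged packing bound~\eqref{condition3-1}, that annihilates the high-$I_V(X\wedge\tilde{X})$ types.

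Finally, I would bound the polynomially many joint types by a factor $2^{n\delta}$ once $n\geq n_1(|\X|,|\Y|,\delta)$, absorb the residual $+2\delta$ inside the $|\cdot|^+$, and pass from the minimization over the enlarged set $\{I_V(X\wedge\tilde{X})\leq R+2\delta\}$ to the minimization over $\P^{ex}$ via uniform continuity of $D(V_{Y|X}\|W|P)+|I_V(\tilde{X}\wedge XY)-R|^+$ in $V$. This delivers $W^n(D_i^c|\mathbf{x}_i)\leq 2^{-n[E_{ex}(R,P,W)-3\delta]}$ for each $i$, and averaging gives the stated bound. The main obstacle I anticipate is precisely this last bookkeeping: guaranteeing that the $2\delta$ widening of the constraint, the slack in the $|\cdot|^+$ term, and the polynomial type count together cost no more than the allotted $3\delta$, which requires the continuity estimate to be uniform over all joint types in $\P_n^r$.
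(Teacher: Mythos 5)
Your proposal is correct and follows essentially the same route as the paper: the paper likewise starts from Lemma~\ref{lm:p2p_expurgated} and proves, via exactly your integrality observation applied to the per-codeword bound~\eqref{condition3-2}, that $\pi(C^{ex},V_{X\tilde{X}})=0$ whenever $I_V(X \wedge \tilde{X}) > R+2\delta$, after which the argument concludes exactly as in the random coding bound (Fact~\ref{fc:p2p_random_coding}) with the type sum restricted accordingly and continuity absorbing the $\delta$-slack. Your remark that it is the per-codeword control~\eqref{condition3-2}, and not the averaged packing bound~\eqref{condition3-1}, that annihilates the high-$I_V(X\wedge\tilde{X})$ types is precisely the mechanism the paper's proof rests on.
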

\begin{proof}
The proof is provided in the Appendix.
\end{proof}
Note that none of the mentioned three bounds have their
``traditional format'' as found
in~\cite{Csiszarbook},~\cite{GalBook}, but rather the format
introduced in~\cite{Csiszar-Graph} by Csiszar and Korner. It was
shown in~\cite{Csiszar-Graph} that the new random coding bound is
equivalent to the original one for maximum likelihood and minimum
entropy decoding rule. Furthermore, the new format for the
expurgated bound is equivalent to the traditional one for maximum
likelihood-decoding and it results in a bound that is the maximum of
the traditional expurgated and random coding bounds.


\section{MAC: Lower Bounds on reliability
function}\label{sec:MACsection}
Consider a DM-MAC, $W$, with input alphabets $\X$ and $\Y$, and
output alphabet $\Z$. In this section, we present three achievable
lower bounds on the reliability function (upper bound on the average
error probability) for this channel. The method we are using is very
similar to the point-to-point case. Again, the goal is first proving
the existence of a good code and then analyzing its performance. The
first step is choosing the ensemble. The ensemble, $\C$, we are
using is similar to the ensemble in~\cite{Liu-RandomCoding}. For a
fixed distribution, $P_U P_{X|U} P_{Y|U}$, the codewords of each
code in the ensemble are chosen from $T_{P_{X|U}}(\mathbf{u})$ and
$T_{P_{Y|U}}(\mathbf{u})$ for some sequence $\mathbf{u} \in
T_{P_U}$. Intuitively, we expect that the codewords in a ``good''
code must be far from each other. In accordance with the ideas of
Csiszar and Korner~\cite{Csiszarbook}, we use conditional types to
quantify this statement. We select a prescribed number of sequences
in $\X^n$ and $\Y^n$ so that the shells around each pair have small
intersections with the shells around other sequences. In general,
two types of packing lemmas have been studied in the literature
based on whether the shells are defined on the channel input space
or channel output space. The packing lemma in~\cite{Pokorney}
belongs to the first type, and the one in~\cite{Liu-RandomCoding}
belongs to the second type. All the inequalities in the first type
depend only on the channel input sequences. However, in the second
type, the lemma incorporates the channel output into the packing
inequalities. In this work, we use the first type. In the following,
we follow a four step procedure to arrive at the error exponent
bounds. In step one, we define first-order and second-order packing
functions. These functions are independent of the channel
statistics. Next, in step two, for any constant composition code and
any DM-MAC, we provide upper and lower bounds on the probability of
decoding error in terms of these packing functions. In step three,
by using a random coding argument on the constant composition code
ensemble, we show the existence of codes whose packing functions
satisfy certain conditions. Finally, in step four, by connecting the
results in step two and three, we provide lower and upper bounds on
the error exponents. Our results include a new tighter lower bound
on the error exponent for DM-MAC using a new partial expurgation
method for multi-user codes. We also give a tight characterization
of the typical performance of the constant composition code
ensemble. Both the expurgated bound as well as the typical bound
outperform the random coding bound of~\cite{Liu-RandomCoding}, which
is derived as special case of our methodology.

\subsection{Definition of Packing Functions}
Let $C_X=\{\mathbf{x}_1,\mathbf{x}_2,...,\mathbf{x}_{M_X}\}$ and
$C_Y=\{\mathbf{y}_1,\mathbf{y}_2,...,\mathbf{y}_{M_Y}\} $ be
constant composition codebooks with $ \mathbf{x}_i \in
T_{P_{X|U}}(\mathbf{u})$ and $\mathbf{y}_j \in
T_{P_{Y|U}}(\mathbf{u})$, for some $\mathbf{u} \in T_{P_{U}}$. In
the following, for a two-user code $C=C_X \times C_Y$, we define the
following quantities that we will use later in this section.
\begin{definition}
Fix a finite set $\U$, and a joint type $V_{UXY\tilde{X}\tilde{Y}}
\in \P_n(\U \times (\X \times \Y)^2)$. For code $C$, the first-order
packing functions are defined as follows:
\begin{subequations}
\begin{align}
N_{U}(C,V_{UXY}) &\triangleq \frac{1}{M_X M_Y} \sum_{i=1}^{M_X}
\sum_{j=1}^{M_Y}
1_{T_{V_{UXY}}}(\mathbf{u}, \mathbf{x}_i, \mathbf{y}_j),\label{NU-def}\\
N_{X}(C,V_{UXY\tilde{X}}) &\triangleq \frac{1}{M_X M_Y}
\sum_{i=1}^{M_X} \sum_{j=1}^{M_Y} \sum_{k \neq i}
1_{T_{V_{UXY\tilde{X}}}}(\mathbf{u}, \mathbf{x}_i, \mathbf{y}_j,\mathbf{x}_k),\\
N_{Y}(C,V_{UXY\tilde{Y}}) &\triangleq \frac{1}{M_X M_Y}
\sum_{i=1}^{M_X} \sum_{j=1}^{M_Y} \sum_{l \neq j}
1_{T_{V_{UXY\tilde{Y}}}}(\mathbf{u}, \mathbf{x}_i, \mathbf{y}_j , \mathbf{y}_l),\\
N_{XY}(C,V_{UXY\tilde{X}\tilde{Y}}) &\triangleq \frac{1}{M_X M_Y}
\sum_{i=1}^{M_X} \sum_{j=1}^{M_Y} \sum_{k \neq i} \sum_{l \neq j}
1_{T_{V_{UXY\tilde{X}\tilde{Y}}}}(\mathbf{u}, \mathbf{x}_i,
\mathbf{y}_j, \mathbf{x}_k, \mathbf{y}_l).\label{NXY-def}
\end{align}
\end{subequations}

Moreover, for any $V_{UXY\tilde{X}\tilde{Y}\hat{X}\hat{Y}} \in
\P_n\left(\U \times (\X \times \Y)^3\right)$, we define a set of
second-order packing functions as follows:
\begin{subequations}
\begin{align}
\Lambda_X (C,V_{UXY\tilde{X}\hat{X}}) &\triangleq \frac{1}{M_X M_Y}
\sum_{i,j} \sum_{\substack{k \neq i}}\sum_{\substack{k' \neq i,k}}
1_{T_{V_{UXY\tilde{X}\hat{X}}}}(\mathbf{u},\mathbf{x}_i,\mathbf{y}_j,\mathbf{x}_k,
\mathbf{x}_{k'}),\\
\Lambda_Y (C,V_{UXY\tilde{Y}\hat{Y}}) &\triangleq \frac{1}{M_X M_Y}
\sum_{i,j} \sum_{\substack{l \neq j}}\sum_{\substack{l' \neq j,l}}
1_{T_{V_{UXY\tilde{Y}\hat{Y}}}}(\mathbf{u},\mathbf{x}_i,\mathbf{y}_j,\mathbf{y}_l,
\mathbf{y}_{l'}),\\
\Lambda_{XY} (C,V_{UXY\tilde{X}\tilde{Y}\hat{X}\hat{Y}}) &\triangleq
\frac{1}{M_X M_Y} \sum_{i,j} \sum_{\substack{k \neq i\\
l \neq j}}\sum_{\substack{k' \neq i,k\\ l' \neq j,l}}
1_{T_{V_{UXY\tilde{X}\tilde{Y}\hat{X}\hat{Y}}}}(\mathbf{u},\mathbf{x}_i,\mathbf{y}_j,\mathbf{x}_k,\mathbf{y}_l,\mathbf{x}_{k'},\mathbf{y}_{l'}).
\end{align}
\end{subequations}
\end{definition}
The second-order packing functions are used to prove the tightness
of the results of Theorem~\ref{randomcodingthm} and
Theorem~\ref{Typrandomcodingthm}.
%
%
Next we will obtain  upper and lower bounds on the probability of
decoding error for an arbitrary two-user code that depend on its
packing functions defined above.

\subsection{Relation between probability of error and packing functions}\label{MAC-relation}
 Consider the multiuser code $C$ as defined above, and a function $\alpha: \P(\U \times \X \times \Y \times \Z) \rightarrow \mathbb{R}$.
Taking into account the given $\mathbf{u}$, $\alpha$-decoding yields
the decoding sets
\begin{eqnarray}
D_{ij}= \left\{\mathbf{z}:
\alpha(P_{\mathbf{u},\mathbf{x}_i,\mathbf{y}_j, \mathbf{z}}) \leq
\alpha(P_{\mathbf{u},\mathbf{x}_k,\mathbf{y}_l, \mathbf{z}}) \text{
for all } (k,l) \neq (i,j)\right\}.
\end{eqnarray}
The average error probability of this multiuser code on DM-MAC $W$,
can be written as
\begin{align}
e(C,W) &\triangleq \frac{1}{M_X M_Y} \sum_{\substack{i,j}} W^n(D^c_{ij}|\mathbf{x}_i,\mathbf{y}_j)\nonumber\\
&=\frac{1}{M_X M_Y}\sum_{\substack{i,j}} W^n(\bigcup_{\substack{k
\neq i}}D_{kj}|\mathbf{x}_i,\mathbf{y}_j) +\frac{1}{M_X M_Y}
\sum_{\substack{i,j}} W^n(\bigcup_{\substack{l \neq
j}}D_{il}|\mathbf{x}_i,\mathbf{y}_j) +\frac{1}{M_X M_Y}
\sum_{\substack{i,j}} W^n(\bigcup_{\substack{k \neq i\\ l \neq
j}}D_{kl}|\mathbf{x}_i,\mathbf{y}_j).\label{proof-thm1-1}
\end{align}
The first term on the right side of~\eqref{proof-thm1-1} can be
written as
\begin{align}
&\frac{1}{M_X M_Y}\sum_{\substack{i,j}} W^n(\bigcup_{\substack{k
\neq i}}D_{kj}|\mathbf{x}_i,\mathbf{y}_j) \nonumber\\
&= \frac{1}{M_X M_Y} \sum_{\substack{i,j}}
W^n\Big(\left\{\mathbf{z}:\alpha(P_{\mathbf{u},\mathbf{x}_k,\mathbf{y}_j,
\mathbf{z}}) \leq \alpha(P_{\mathbf{u},\mathbf{x}_i,\mathbf{y}_j,
\mathbf{z}}), \text{for some } k  \neq i \right\}|\mathbf{u}, \mathbf{x}_i,\mathbf{y}_j \Big)\;\;\;\;\;\;\;\;\;\;\;\nonumber\\
&=  \frac{1}{M_X M_Y}
\sum_{\substack{i,j}} \sum_{\substack{\mathbf{z}:\\
\alpha(P_{\mathbf{u},\mathbf{x}_k,\mathbf{y}_j, \mathbf{z}}) \leq
\alpha(P_{\mathbf{u},\mathbf{x}_i,\mathbf{y}_j, \mathbf{z}})\\
\text{for
some } k  \neq i}}W^n\left(\mathbf{z}|\mathbf{u}, \mathbf{x}_i,\mathbf{y}_j \right)\nonumber\\
&= \frac{1}{M_X M_Y}
\sum_{\substack{i,j}}\sum_{\substack{V_{UXY\tilde{X}Z} \in
\V^r_{X,n}}}
\sum_{\substack{\mathbf{z}:\\
\alpha(P_{\mathbf{u},\mathbf{x}_k,\mathbf{y}_j, \mathbf{z}}) \leq
\alpha(P_{\mathbf{u},\mathbf{x}_i,\mathbf{y}_j, \mathbf{z}})\\
\text{for some } k  \neq i}}  1_{T_{V_{UXY\tilde{X}Z}}}(\mathbf{u},
\mathbf{x}_i,
\mathbf{y}_j,\mathbf{x}_k,\mathbf{z})W^n\left(\mathbf{z}|\mathbf{u}, \mathbf{x}_i,\mathbf{y}_j \right)\nonumber\\
&= \sum_{\substack{V_{UXY\tilde{X}Z} \in \V^r_{X,n}}}
2^{-n[D(V_{Z|XYU} || W|V_{XYU}) + H_V(Z|XYU)]}  \cdot \Big[
\frac{1}{M_X M_Y} \sum_{\substack{i,j}} 1_{T_{V_{UXY}}}(\mathbf{u},
\mathbf{x}_i, \mathbf{y}_j) \cdot
A^X_{i,j}\left(V_{UXY\tilde{X}Z},C\right)\Big]\label{form1-1},
\end{align}
where
\begin{align}
&A^X_{i,j}\left(V_{UXY\tilde{X}Z},C\right) \triangleq
\big|\{\mathbf{z}:(\mathbf{u},\mathbf{x}_i,\mathbf{y}_j,
\mathbf{x}_k, \mathbf{z}) \in T_{V_{UXY\tilde{X}Z}} \text{for some }
k \neq i\}\big|\nonumber\\
& \V^r_{X,n} \triangleq \left\{V_{UXY\tilde{X}Z} :\alpha(V_{UXYZ})
\geq \alpha(V_{U\tilde{X}YZ}), V_{UX}=V_{U\tilde{X}}= P_{UX},
V_{UY}= P_{UY}  \right\}.\label{VXr}
\end{align}
Note that $\V^r_{X,n}$ is a set of types of resolution $n$,
therefore, we use a subscript $n$ to define it.  Similarly, the
second and third term term on the right side of~\eqref{proof-thm1-1}
can be written as follows:
\begin{align}
&\frac{1}{M_X M_Y}\sum_{\substack{i,j}} W^n(\bigcup_{\substack{l
\neq j}}D_{il}|\mathbf{x}_i,\mathbf{y}_j) \nonumber\\
&= \sum_{\substack{V_{UXY\tilde{Y}Z} \in \V^r_{Y,n}}}
2^{-n[D(V_{Z|XYU} || W|V_{XYU}) + H_V(Z|XYU)]} .\Big[ \frac{1}{M_X
M_Y} \sum_{\substack{i,j}} 1_{T_{V_{UXY}}}(\mathbf{u}, \mathbf{x}_i,
\mathbf{y}_j).A^Y_{i,j}\left(V_{UXY\tilde{Y}Z},C\right)\Big]\label{form1-1},
\end{align}
where
\begin{align}
&A^Y_{i,j}\left(V_{UXY\tilde{Y}Z},C\right) \triangleq
\big|\{\mathbf{z}:(\mathbf{u},\mathbf{x}_i,\mathbf{y}_j,
\mathbf{y}_l, \mathbf{z}) \in T_{V_{UXY\tilde{Y}Z}} \text{for some }
l \neq j\}\big|\nonumber\\
& \V^r_{Y,n} \triangleq \left\{V_{UXY\tilde{Y}Z} :\alpha(V_{UXYZ})
\geq \alpha(V_{UX\tilde{Y}Z}), V_{UX}= P_{UX},
V_{UY}=V_{U\tilde{Y}}=P_{UY}  \right\},\label{VYr}
\end{align}
and,
\begin{align}
&\frac{1}{M_X M_Y} \sum_{\substack{i,j}} W^n(\bigcup_{\substack{k
\neq i\\ l \neq
j}}D_{kl}|\mathbf{x}_i,\mathbf{y}_j)\nonumber\\
&= \sum_{\substack{V_{UXY\tilde{X}\tilde{Y}Z} \in \V^r_{XY,n}}}
2^{-n[D(V_{Z|XYU} || W|V_{XYU}) + H_V(Z|XYU)]} \cdot \Big[
\frac{1}{M_X M_Y} \sum_{\substack{i,j}} 1_{T_{V_{UXY}}}(\mathbf{u},
\mathbf{x}_i,
\mathbf{y}_j).A^{XY}_{i,j}\left(V_{UXY\tilde{X}\tilde{Y}Z},C\right)\Big]\label{form1-1},
\end{align}
where
\begin{align}
&A^{XY}_{i,j}\left(V_{UXY\tilde{X}\tilde{Y}Z},C\right) \triangleq
\big|\{\mathbf{z}:(\mathbf{u},\mathbf{x}_i,\mathbf{y}_j,\mathbf{x}_k,
\mathbf{y}_l, \mathbf{z}) \in T_{V_{UXY\tilde{X}\tilde{Y}Z}}
\text{for some }
k \neq i ,l \neq j\}\big|\nonumber\\
& \V^r_{XY,n} \triangleq \left\{V_{UXY\tilde{X}\tilde{Y}Z}
:\alpha(V_{UXYZ}) \geq \alpha(V_{U\tilde{X}\tilde{Y}Z}),
V_{UX}=V_{U\tilde{X}}= P_{UX}, V_{UY}=V_{U\tilde{Y}}=P_{UY}
\right\}.\label{VXYr}
\end{align}
Clearly, $A^X_{i,j}\left(V_{UXY\tilde{X}Z}\right)$ satisfies
\begin{align}
B^X_{i,j}\left(V_{UXY\tilde{X}Z},C\right) -
C^X_{i,j}\left(V_{UXY\tilde{X}Z},C\right) \leq
A^X_{i,j}\left(V_{UXY\tilde{X}Z},C\right) \leq
B^X_{i,j}\left(V_{UXY\tilde{X}Z},C\right),\label{Upper-LowerX-Random-Coding}
\end{align}
where
\begin{align}
&B^X_{i,j}\left(V_{UXY\tilde{X}Z},C\right) \triangleq \sum_{k \neq
i} 1_{T_{V_{UXY\tilde{X}}}}(\mathbf{u}, \mathbf{x}_i, \mathbf{y}_j,
\mathbf{x}_k) .\big|\{\mathbf{z}:\mathbf{z} \in
T_{V_{Z|UXY\tilde{X}}}(\mathbf{u},\mathbf{x}_i,\mathbf{y}_j,
\mathbf{x}_k\}\big|,  \label{BXij-def}\\
&C^X_{i,j}\left(V_{UXY\tilde{X}Z},C\right) \triangleq  \sum_{k \neq
i} \sum_{k' \neq k,i} 1_{T_{V_{UXY\tilde{X}}}}(\mathbf{u},
\mathbf{x}_i, \mathbf{y}_j, \mathbf{x}_k)
1_{T_{V_{UXY\tilde{X}}}}(\mathbf{u}, \mathbf{x}_i, \mathbf{y}_j,
\mathbf{x}_{k'}) \;\;\;\;\;\;\;\;\;\;\;\;\;\;\;\;\;\;\;\;\;\;\;\;\nonumber\\
&\quad\quad\quad \quad \quad \quad\quad\quad
\quad\quad\quad\quad\quad\quad \cdot \big|\{\mathbf{z}:\mathbf{z}
\in T_{V_{Z|UXY\tilde{X}}}(\mathbf{u},\mathbf{x}_i,\mathbf{y}_j,
\mathbf{x}_k) \cap
T_{V_{Z|UXY\tilde{X}}}(\mathbf{u},\mathbf{x}_i,\mathbf{y}_j,
\mathbf{x}_{k'})\} \big|.\label{CXij-def}
\end{align}
Having related the probability of error and the function
$B^X_{i,j}$, $B^Y_{i,j}$ and $B^{XY}_{i,j}$, our next task is to
provide a simple upper bound on these functions. This is done as
follows.
\begin{align}
&\frac{1}{M_X M_Y} \sum_{\substack{i,j}} 1_{T_{V_{UXY}}}(\mathbf{u},
\mathbf{x}_i,
\mathbf{y}_j)B^X_{i,j}\left(V_{UXY\tilde{X}Z},C\right)\nonumber\\
&\quad\quad= \frac{1}{M_X M_Y}\sum_{\substack{i,j}}\sum_{k \neq i}
1_{T_{V_{UXY\tilde{X}}}}(\mathbf{u}, \mathbf{x}_i, \mathbf{y}_j,
\mathbf{x}_k) \left|\left\{\mathbf{z}:\mathbf{z} \in
T_{V_{Z|UXY\tilde{X}}}(\mathbf{u},\mathbf{x}_i,\mathbf{y}_j,
\mathbf{x}_k)\right\}\right|\nonumber\\
& \quad\quad\leq 2^{nH(Z|UXY\tilde{X})} \frac{1}{M_X
M_Y}\sum_{\substack{i,j}}\sum_{k \neq i}
1_{T_{V_{UXY\tilde{X}}}}(\mathbf{u}, \mathbf{x}_i, \mathbf{y}_j,
\mathbf{x}_k)\nonumber\\
&\quad\quad= 2^{nH(Z|UXY\tilde{X})} N_X(C, V_{UXY\tilde{X}} )
\end{align}
Similarly, we can provide upper bounds for $B^Y_{i,j}$ and
$B^{XY}_{i,j}$. Moreover, we can also provide trivial upper bounds
on $A(\cdot)$ functions as was done in the point-to-point case.
\[
A^X_{i,j}(V_{UXY\tilde{X}Z},C) \leq 2^{nH_V(Z|XYU)}.
\]
The same bound applies to $A^Y$ and $A^{XY}$. Collecting all these
results, we provide the following upper bound on the probability of
error.
\begin{align}
e(C,W) &\leq \sum_{\substack{V_{UXY\tilde{X}Z} \\ \in \V^r_{X,n}}}
2^{-n[D(V_{Z|XYU} || W|V_{XYU})]} \min\left\{ 2^{-nI_V(\tilde{X}
\wedge
  Z|XYU)}  N_X(C,V_{UXY\tilde{X}}),1 \right\}  \nonumber\\
& + \sum_{\substack{V_{UXY\tilde{Y}Z} \\ \in \V^r_{Y,n}}}
2^{-n[D(V_{Z|XYU} || W|V_{XYU})]} \min\left\{ 2^{-n I_V(\tilde{Y}
\wedge Z|XYU)}
N_Y(C,V_{UXY\tilde{Y}}),1 \right\}  \nonumber\\
&  + \sum_{\substack{V_{UXY\tilde{X}\tilde{Y}Z}\\ \in \V^r_{XY,n}}}
2^{-n[D(V_{Z|XYU} || W|V_{XYU})]} \min\left\{ 2^{-n I_V(\tilde{X}
\tilde{Y} \wedge
    Z|XYU)}  N_{XY}(C,V_{UXY\tilde{X}\tilde{Y}}),1 \right\}
\label{upperboundMAC}
\end{align}

Next, we consider lower bounds on $B(\cdot)$ functions and upper
bounds on $C(\cdot)$ functions. One can use a similar argument to
show the following
\begin{align}
\frac{1}{M_X M_Y} \sum_{\substack{i,j}} 1_{T_{V_{UXY}}}(\mathbf{u},
\mathbf{x}_i, \mathbf{y}_j)B^X_{i,j}\left(V_{UXY\tilde{X}Z},C\right)
&\geq 2^{n[H(Z|UXY\tilde{X})-\delta]}  N_X(C, V_{UXY\tilde{X}} ).
\nonumber
\end{align}
Similar lower bounds can be obtained for $B^Y$ and $B^{XY}$.
Moreover, we have the following arguments for bounding from above
the function $C^X$.
\begin{align}
&\frac{1}{M_X M_Y} \sum_{\substack{i,j}}1_{T_{V_{UXY}}}(\mathbf{u},
\mathbf{x}_i,
\mathbf{y}_j) \cdot C^X_{i,j}\left(V_{UXY\tilde{X}Z}\right)\nonumber\\
&= \frac{1}{M_X M_Y} \sum_{\substack{i,j}}
1_{T_{V_{UXY}}}(\mathbf{u}, \mathbf{x}_i, \mathbf{y}_j) \sum_{k \neq
i} \sum_{k' \neq k,i} 1_{T_{V_{UXY\tilde{X}}}}(\mathbf{u},
\mathbf{x}_i, \mathbf{y}_j, \mathbf{x}_k)
1_{T_{V_{UXY\tilde{X}}}}(\mathbf{u}, \mathbf{x}_i, \mathbf{y}_j,
\mathbf{x}_{k'}) \;\;\;\;\;\;\;\nonumber\\
&\quad \quad \quad \quad\quad\quad \quad\quad\quad\quad\quad\quad
\cdot \left|\left\{\mathbf{z}:\mathbf{z} \in
T_{V_{Z|UXY\tilde{X}}}(\mathbf{u},\mathbf{x}_i,\mathbf{y}_j,
\mathbf{x}_k) \cap
T_{V_{Z|UXY\tilde{X}}}(\mathbf{u},\mathbf{x}_i,\mathbf{y}_j,
\mathbf{x}_{k'})\right\}
\right|\nonumber\\
& = \frac{1}{M_X M_Y} \sum_{\substack{i,j}}\sum_{\substack{V_{UXY\tilde{X}\hat{X}Z} :\\
V_{UXY\hat{X}Z}= V_{UXY\tilde{X}Z}}} \sum_{k \neq i} \sum_{k' \neq
k,i}  1_{T_{V_{UXY\tilde{X}\hat{X}}}}(\mathbf{u}, \mathbf{x}_i,
\mathbf{y}_j,
\mathbf{x}_k,\mathbf{x}_{k'})\left|\left\{\mathbf{z}:\mathbf{z} \in
T_{V_{Z|UXY\tilde{X}\hat{X}}}(\mathbf{u},\mathbf{x}_i,\mathbf{y}_j,
\mathbf{x}_k,\mathbf{x}_{k'})\right\}
\right|\nonumber\\
& \leq  \sum_{\substack{V_{UXY\tilde{X}\hat{X}Z} :\\
V_{UXY\hat{X}Z}= V_{UXY\tilde{X}Z}}} 2^{nH(Z|UXY\tilde{X}\hat{X})}
\frac{1}{M_X M_Y} \sum_{\substack{i,j}} \sum_{k \neq i} \sum_{k'
\neq k,i} 1_{T_{V_{UXY\tilde{X}\hat{X}}}}(\mathbf{u},\mathbf{x}_i,
\mathbf{y}_j,\mathbf{x}_k,\mathbf{x}_{k'})  \nonumber\\
&= \sum_{\substack{V_{UXY\tilde{X}\hat{X}Z} :\\
V_{UXY\hat{X}Z}= V_{UXY\tilde{X}Z}}} 2^{nH(Z|UXY\tilde{X}\hat{X})}
\Lambda_X (C_, V_{UXY\tilde{X}\hat{X}}).
\end{align}
Similar relation can be obtained that relate $C^Y$ and $\lambda_Y$,
$C^{XY}$ and $\lambda_{XY}$. Combining the lower bounds on
$B(\cdot)$-functions and upper bounds on $C(\cdot)$-functions, we
have the following lower bound on the probability of decoding error.

\begin{align}
&e(C,W)\nonumber \\
&\geq \sum_{\substack{V_{UXY\tilde{X}Z} \\ \in \V^r_{X,n}}}
2^{-n[D(V_{Z|XYU} || W|V_{XYU}) + I_V(\tilde{X} \wedge
Z|XYU)+\delta]}
\left|N_X- \sum_{\substack{V_{UXY\tilde{X}\hat{X}Z} :\\
V_{UXY\hat{X}Z}= V_{UXY\tilde{X}Z}}} 2^{nI(\hat{X} \wedge
Z|UXY\tilde{X})} \Lambda_X \right|^+
\nonumber\\
&  +\sum_{\substack{V_{UXY\tilde{Y}Z} \\ \in \V^r_{Y,n}}}
2^{-n[D(V_{Z|XYU} || W|V_{XYU}) + I_V(\tilde{Y} \wedge
Z|XYU)+\delta]}
\left|N_Y- \sum_{\substack{V_{UXY\tilde{Y}\hat{Y}Z} :\\
V_{UXY\hat{Y}Z}= V_{UXY\tilde{Y}Z}}} 2^{nI(\hat{Y} \wedge
Z|UXY\tilde{Y})} \Lambda_Y \right|^+
\nonumber\\
& +\sum_{\substack{V_{UXY\tilde{X}\tilde{Y}Z} \\ \in \V^r_{XY,n}}}
2^{-n[D(V_{Z|XYU} || W|V_{XYU}) + I_V(\tilde{X} \tilde{Y}\wedge
Z|XYU)+\delta]}
\left|N_{XY}- \sum_{\substack{V_{UXY\tilde{X}\hat{X}\tilde{Y}\hat{Y}Z} :\\
V_{UXY\hat{X}\hat{Y}Z}= V_{UXY\tilde{X}\tilde{Y}Z}}} 2^{nI(\hat{X}
\hat{Y}\wedge Z|UXY\tilde{X}\tilde{Y})} \Lambda_{XY} \right|^+.
\label{lowerboundMAC}
\end{align}

This completes our task of relating the average probability of error
of any code $C$ in terms of the first and the second order packing
functions. We next proceed toward obtaining lower bounds on the
error exponents. The expressions for the error exponents that we
derive are conceptually very similar to those derived for the
point-to-point channels. However, since we have to deal with a
bigger class of error events, the expressions for the error
exponents become longer. To state our results concisely, in the next
subsection, we define certain functions of information quantities
and transmission rates. We will express our results in terms of
these functions. The reader can skip this subsection, and move to
the next subsection without losing the flow of the exposition. The
reader can  come back to it when we refer to it in the subsequent
discussions.

\subsection{Definition of Information Functions}

In the following, we consider five definitions which are mainly used
for conciseness.
\begin{definition}
For any fix rate pair $R_X, R_Y \geq 0$ , and any distribution
$V_{UXY\tilde{X}\tilde{Y}} \in \P\left(\U \times (\X \times
\Y)^2\right)$, we define
\begin{subequations}
\begin{align}
F_U(V_{UXY}) &\triangleq I(X \wedge Y|U),\label{def1}\;\;\;\;\;\;\;\;\;\;\;\;\;\;\;\;\;\;\;\;\;\;\;\;\;\;\;\;\;\;\;\;\;\;\;\;\\
F_X(V_{UXY\tilde{X}}) &\triangleq I(X \wedge Y|U) + I_V(\tilde{X}
\wedge XY|U)
-R_X,\;\;\;\;\\
F_Y(V_{UXY\tilde{Y}}) &\triangleq I(X \wedge Y|U)
+I(\tilde{Y} \wedge XY|U)-R_Y, \;\;\;\;\\
F_{XY}(V_{UXY\tilde{X}\tilde{Y}}) &\triangleq I(X \wedge Y|U) +
I(\tilde{X} \wedge \tilde{Y}|U) +I(\tilde{X}\tilde{Y} \wedge
XY|U)-R_X -R_Y.\label{def4}
\end{align}
\end{subequations}
Moreover, for any $V_{UXY\tilde{X}\tilde{Y}\hat{X}\hat{Y}} \in
\P\left(\U \times (\X \times \Y)^3\right)$, we define
\begin{subequations}
\begin{align}
&E^X_S(V_{UXY\tilde{X}\hat{X}}) \triangleq I(\hat{X} \wedge
XY\tilde{X}|U) + I(\tilde{X}\wedge XY|U) + I(X \wedge Y|U)-2 R_X,\label{ESX-def}\\
&E^Y_S(V_{UXY\tilde{Y}\hat{Y}}) \triangleq I(\hat{Y} \wedge
XY\tilde{Y}|U) + I(\tilde{Y}\wedge XY|U) + I(X \wedge Y|U)-2 R_Y,\label{ESY-def}\\
&E^{XY}_S(V_{UXY\tilde{X}\tilde{Y}\hat{X}\hat{Y}})
\triangleq \nonumber\\
&\quad I(\hat{X}\hat{Y} \wedge
XY\tilde{X}\tilde{Y}|U)+I(\tilde{X}\tilde{Y} \wedge XY|U)  + I(X
\wedge Y|U) +I(\tilde{X}\wedge \tilde{Y}|U) + I(\hat{X}\wedge
\hat{Y}|U)-2 R_X -2R_Y.\label{ESXY-def}
\end{align}
\end{subequations}
\end{definition}
\begin{definition}
For any given $R_X,R_Y \geq 0$, $P_{XYU} \in \P\left(\X \times \Y
\times \U \right)$, we define the sets of distributions $\V_X^r$,
$\V_Y^r$ and $\V_{XY}^r$ as follows:
\begin{subequations}
\begin{align}
& \V^r_X \triangleq \left\{V_{UXY\tilde{X}Z} :\alpha(V_{UXYZ}) \geq
\alpha(V_{U\tilde{X}YZ}), V_{UX}=V_{U\tilde{X}}= P_{UX}, V_{UY}=
P_{UY}  \right\},\label{VXr-def}\\
& \V^r_Y \triangleq \left\{V_{UXY\tilde{Y}Z} :\alpha(V_{UXYZ}) \geq
\alpha(V_{UX\tilde{Y}Z}), V_{UX}= P_{UX},
V_{UY}=V_{U\tilde{Y}}=P_{UY}  \right\},\label{VYr-def}\\
& \V^r_{XY} \triangleq \left\{V_{UXY\tilde{X}\tilde{Y}Z}
:\alpha(V_{UXYZ}) \geq \alpha(V_{U\tilde{X}\tilde{Y}Z}),
V_{UX}=V_{U\tilde{X}}= P_{UX}, V_{UY}=V_{U\tilde{Y}}=P_{UY}
\right\}.\label{VXYr-def}
\end{align}
\end{subequations} Moreover, $\V^{r,L}_X$, $\V^{r,L}_Y$ and
$\V^{r,L}_{XY}$ are sets of distributions and defined as
\begin{subequations}
\begin{align}
& \V^{r,L}_X \triangleq \left\{V_{UXY\tilde{X}Z} \in \V^{r}_X
: I(\tilde{X} \wedge XYZ|U) \geq R_X  \right\},\label{VXUr-def}\\
& \V^{r,L}_Y \triangleq \left\{V_{UXY\tilde{Y}Z} \in \V^{r}_Y
:I(\tilde{Y} \wedge XYZ|U) \geq R_Y   \right\},\label{VYUr-def}\\
& \V^{r,L}_{XY} \triangleq \left\{V_{UXY\tilde{X}\tilde{Y}Z} \in
\V^{r}_{XY} :I(\tilde{X}\tilde{Y} \wedge XYZ|U) + I(\tilde{X} \wedge
\tilde{Y}) \geq R_X + R_Y \right\}.\label{VXYUr-def}
\end{align}
\end{subequations}
\end{definition}
\begin{definition}
For any given $R_X,R_Y \geq 0$, $P_{XYU} \in \P\left(\X \times \Y
\times \U \right)$, we define the sets of distributions $\V^T_X$,
$\V^T_Y$, and $\V^T_{XY}$ as follows
\begin{subequations}
\begin{align}
&\V^T_X \triangleq \left\{
\begin{array}{lc}
V_{UXY\tilde{X}}: & V_{XU}=V_{\tilde{X}U}=P_{XU}, V_{YU}=P_{YU} \\
 & F_U(V_{UXY}) , F_U(V_{U\tilde{X}Y}) \leq R_X +R_Y\\
 & F_{X}(V_{UXY\tilde{X}}) \leq R_X+R_Y\\
 & \alpha(V_{UXYZ}) \geq \alpha(V_{U\tilde{X}YZ})
\end{array}\right\}\label{VT-X-def}\\
&\V^T_Y \triangleq \left\{
\begin{array}{lc}
V_{UXY\tilde{Y}}: & V_{XU}=P_{XU}, V_{YU}=V_{\tilde{Y}U}=P_{YU} \\
 & F_U(V_{UXY}) , F_U(V_{UX\tilde{Y}}) \leq R_X +R_Y\\
 & F_{Y}(V_{UXY\tilde{Y}}) \leq R_X+R_Y\\
 & \alpha(V_{UXYZ}) \geq
\alpha(V_{UX\tilde{Y}Z})
\end{array}\right\}\label{VT-Y-def}\\
\V^T_{XY} \triangleq &\left\{
\begin{array}{lc}
V_{UXY\tilde{X}\tilde{Y}}: & V_{XU}=V_{\tilde{X}U}=P_{XU}, V_{YU}=V_{\tilde{Y}U}=P_{YU} \\
  & F_U(V_{UXY}) , F_U(V_{U\tilde{X}Y}),F_U(V_{UX\tilde{Y}}) , F_U(V_{U\tilde{X}\tilde{Y}}) \leq R_X +R_Y\\
  & F_{X}(V_{UXY\tilde{X}}), F_{X}(V_{UX\tilde{Y}\tilde{X}}) \leq R_X+R_Y\\
  & F_{Y}(V_{UXY\tilde{Y}}), F_{Y}(V_{U\tilde{X}Y\tilde{Y}}) \leq R_X+R_Y\\
  & F_{XY}(V_{UXY\tilde{X}\tilde{Y}}), F_{XY}(V_{U\tilde{X}YX\tilde{Y}}) \leq
  R_X+R_Y\\
  & \alpha(V_{UXYZ})
\geq \alpha(V_{U\tilde{X}\tilde{Y}Z})
\end{array}\right\}\label{VT-XY-def}
\end{align}
\end{subequations}
Moreover, $\V^{T,L}_X$, $\V^{T,L}_Y$, and $\V^{T,L}_{XY}$ are sets
of distributions and defined as
\begin{subequations}
\begin{align}
& \V^{T,L}_X \triangleq \left\{V_{UXY\tilde{X}Z} \in \V^{T}_X
: I(\tilde{X} \wedge XYZ|U) \geq R_X  \right\},\label{VXUT-def}\\
& \V^{T,L}_Y \triangleq \left\{V_{UXY\tilde{Y}Z} \in \V^{T}_Y
:I(\tilde{Y} \wedge XYZ|U) \geq R_Y   \right\},\label{VYUT-def}\\
& \V^{T,L}_{XY} \triangleq \left\{V_{UXY\tilde{X}\tilde{Y}Z} \in
\V^{T}_{XY} :I(\tilde{X}\tilde{Y} \wedge XYZ|U) + I(\tilde{X} \wedge
\tilde{Y}) \geq R_X + R_Y \right\}.\label{VXYUT-def}
\end{align}
\end{subequations}
\end{definition}
\begin{definition}
For any given $R_X,R_Y \geq 0$, $P_{XYU} \in \P\left(\X \times \Y
\times \U \right)$, we define the sets of distributions $\V^{ex}_X$,
$\V^{ex}_Y$, and $\V^{ex}_{XY}$ as follows
\begin{subequations}
\begin{align}
&\V^{ex}_X \triangleq \left\{
\begin{array}{lc}
V_{UXY\tilde{X}}: & V_{XU}=V_{\tilde{X}U}=P_{XU}, V_{YU}=P_{YU} \\
 & F_U(V_{UXY}) , F_U(V_{U\tilde{X}Y}) \leq \min\{R_X,R_Y\}\\
 & F_{X}(V_{UXY\tilde{X}}) \leq \min\{R_X,R_Y\}\\
 & \alpha(V_{UXYZ}) \geq \alpha(V_{U\tilde{X}YZ})
\end{array}\right\}\label{VEX-X-def}\\
&\V^{ex}_Y \triangleq \left\{
\begin{array}{lc}
V_{UXY\tilde{Y}}: & V_{XU}=P_{XU}, V_{YU}=V_{\tilde{Y}U}=P_{YU} \\
 & F_U(V_{UXY}) , F_U(V_{UX\tilde{Y}}) \leq \min\{R_X,R_Y\}\\
 & F_{Y}(V_{UXY\tilde{Y}}) \leq \min\{R_X,R_Y\}\\
 & \alpha(V_{UXYZ}) \geq
\alpha(V_{UX\tilde{Y}Z})
\end{array}\right\}\label{VEX-Y-def}\\
\V^{ex}_{XY} \triangleq &\left\{
\begin{array}{lc}
V_{UXY\tilde{X}\tilde{Y}}: & V_{XU}=V_{\tilde{X}U}=P_{XU}, V_{YU}=V_{\tilde{Y}U}=P_{YU} \\
  & F_U(V_{UXY}) , F_U(V_{U\tilde{X}Y}),F_U(V_{UX\tilde{Y}}) , F_U(V_{U\tilde{X}\tilde{Y}}) \leq \min\{R_X,R_Y\}\\
  & F_{X}(V_{UXY\tilde{X}}), F_{X}(V_{UX\tilde{Y}\tilde{X}}) \leq \min\{R_X,R_Y\}\\
  & F_{Y}(V_{UXY\tilde{Y}}), F_{Y}(V_{U\tilde{X}Y\tilde{Y}}) \leq \min\{R_X,R_Y\}\\
  & F_{XY}(V_{UXY\tilde{X}\tilde{Y}}), F_{XY}(V_{U\tilde{X}YX\tilde{Y}}) \leq
  \min\{R_X,R_Y\}\\
  & \alpha(V_{UXYZ})
\geq \alpha(V_{U\tilde{X}\tilde{Y}Z})
\end{array}\right\}\label{VEX-XY-def}
\end{align}
\end{subequations}
\end{definition}
\begin{definition}
For any given $R_X,R_Y \geq 0$, $P_{XYU} \in \P\left(\X \times \Y
\times \U \right)$, and $V_{UXY\tilde{X}\tilde{Y}} \in \P\left(\U
\times (\X \times \Y)^2 \right)$, we define the following quantities
\begin{subequations}
\begin{align}
&E_{X}(R_X,R_Y,W,P_{XYU},V_{UXY\tilde{X}}) \triangleq D(V_{Z|XYU} ||
W|V_{XYU}) + I_V(X \wedge Y |U) + |I(\tilde{X} \wedge XYZ |U)
- R_X|^+,\label{EX-def}\\
&E_{Y}(R_X,R_Y,W,P_{XYU},V_{UXY\tilde{Y}}) \triangleq D(V_{Z|XYU} ||
W|V_{XYU}) + I_V(X \wedge Y |U) + |I(\tilde{Y} \wedge XYZ |U)
- R_Y|^+,\\
&E_{XY}(R_X,R_Y,W,P_{XYU},V_{UXY\tilde{X}\tilde{Y}})
\triangleq\nonumber\\ &\quad\quad\quad\quad D(V_{Z|XYU} || W
|V_{XYU}) + I_V(X \wedge Y |U) + |I(\tilde{X}\tilde{Y} \wedge XYZ
|U) + I_V(\tilde{X} \wedge \tilde{Y}| U)- R_X -
R_Y|^+.\label{EXY-def}
\end{align}
\end{subequations}
Moreover, we define
\begin{subequations}
\begin{align}
&E^L_{X}(R_X,R_Y,W,P_{XYU},V_{UXY\tilde{X}}) \triangleq D(V_{Z|XYU}
|| W|V_{XYU}) + I_V(X \wedge Y |U) + I(\tilde{X} \wedge XYZ |U)
- R_X,\label{EXL-def}\\
&E^L_{Y}(R_X,R_Y,W,P_{XYU},V_{UXY\tilde{Y}}) \triangleq D(V_{Z|XYU}
|| W|V_{XYU}) + I_V(X \wedge Y |U) + I(\tilde{Y} \wedge XYZ |U)
- R_Y,\\
&E^L_{XY}(R_X,R_Y,W,P_{XYU},V_{UXY\tilde{X}\tilde{Y}})
\triangleq\nonumber\\ &\quad\quad\quad\quad D(V_{Z|XYU} || W
|V_{XYU}) + I_V(X \wedge Y |U) + I(\tilde{X}\tilde{Y} \wedge XYZ |U)
+ I_V(\tilde{X} \wedge \tilde{Y}| U)- R_X - R_Y,\label{EXY-def}
\end{align}
\end{subequations}
and,
\begin{subequations}
\begin{align}
&E^\alpha_{\beta}(R_X,R_Y,W,P_{XYU},\V_\beta^\alpha) \triangleq \min_{\substack{V_{UXY\tilde{\beta}Z}\in \V^\alpha_\beta}} E_{\beta}(R_X,R_Y,W,P_{XYU},V_{UXY\tilde{\beta}}),\label{E-def}\\
&E^{\alpha,L}_{\beta}(R_X,R_Y,W,P_{XYU},\V_\beta^\alpha) \triangleq
\min_{\substack{V_{UXY\tilde{\beta}Z}\in \V^{\alpha,L}_\beta}}
E^L_{\beta}(R_X,R_Y,W,P_{XYU},V_{UXY\tilde{\beta}}),\label{EL-def}
\end{align}
\end{subequations}
for $\alpha \in \{r, T, ex\}$, and $\beta \in \{ X, Y,XY\}$.
\end{definition}

\subsection{Packing Lemmas}

As we did in the point-to-point case, here we perform random coding
and derive bounds on the packing functions. The results will be
stated as three lemmas, one for the average and one for the typical
performance of the ensemble, and finally one for the expurgated
ensemble. These results will be used in conjunction with the
relation between the packing functions and the probability of error
established in Section~\ref{MAC-relation} to obtain the bounds on
the error exponents.

\begin{lemma}\label{packing1}
Fix a finite set $\U$, $P_{XYU} \in \P_n( \X \times \Y \times \U)$
such that $X-U-Y$, $R_X \geq 0$, $R_Y \geq 0$ , $\delta > 0$,
$2^{n(R_X-\delta)} \leq M_X \leq 2^{nR_X}$, $ 2^{n(R_Y-\delta)} \leq
M_Y \leq 2^{nR_Y}$, and $\mathbf{u} \in T_{P_U}$. Let $X^{M_X}
\triangleq \{X_1,X_2,...,X_{M_X}\} $ and $Y^{M_Y} \triangleq
\{Y_1,Y_2,...,Y_{M_Y}\}$ are independent, and $X_i$s and $Y_j$s are
uniformly distributed over $T_{P_{X|U}}(\mathbf{u})$ and
$T_{P_{Y|U}}(\mathbf{u})$ respectively. For every joint type
$V_{UXY\tilde{X}\tilde{Y}} \in \P_n(\U \times (\X \times \Y)^2)$,
the expectation of the packing functions over the random code
$X^{M_X} \times Y^{M_Y}$ are bounded by
\begin{subequations}\label{77}
\begin{align}
&\;\;\;\;\;\;\;\;\;\;\;2^{-n[F_U(V_{UXY})+ \delta]} \leq
\mathbb{E}\Big[N_{U} (X^{M_X} \times Y^{M_Y},V_{UXY})\Big] \leq
2^{-n[F_U(V_{UXY})- 2 \delta]}, \\
&\;\;\;\;\;\;2^{-n[F_X(V_{UXY\tilde{X}})+ 3 \delta]} \leq
\mathbb{E}\Big[N_{X} (X^{M_X} \times Y^{M_Y},V_{UXY\tilde{X}})\Big]
\leq
2^{-n[F_X(V_{UXY\tilde{X}})- 4 \delta]},\\
&\;\;\;\;\;\;2^{-n[F_Y(V_{UXY\tilde{Y}})+ 3 \delta]} \leq
\mathbb{E}\Big[N_{Y} (X^{M_X} \times Y^{M_Y},V_{UXY\tilde{Y}})\Big]
\leq
2^{-n[F_Y(V_{UXY\tilde{Y}})-4 \delta]},\\
&2^{-n[F_{XY}(V_{UXY\tilde{X}\tilde{Y}}) + 4 \delta]} \leq
\mathbb{E}\Big[N_{XY} (X^{M_X} \times
Y^{M_Y},V_{UXY\tilde{X}\tilde{Y}})\Big] \leq
2^{-n[F_{XY}(V_{UXY\tilde{X}\tilde{Y}})- 4 \delta]},
\end{align}
\end{subequations}
whenever $n \geq n_0(|\U|,|\X|,|\Y|,\delta)$.
Moreover, for any $V_{UXY\tilde{X}\tilde{Y}\hat{X}\hat{Y}} \in
\P_n(\U \times (\X \times \Y)^3)$
\begin{subequations}\label{78}
\begin{align}
\mathbb{E}\Big[\Lambda_X (X^{M_X} \times
Y^{M_Y},V_{UXY\tilde{X}\hat{X}})\Big] &\leq
2^{-n\left(E^{X}_S(V_{UXY\tilde{X}\hat{X}}) - 4
\delta\right)},\label{Random-Lamx}\\
\mathbb{E}\Big[\Lambda_Y (X^{M_X} \times
Y^{M_Y},V_{UXY\tilde{Y}\hat{Y}})\Big] &\leq
2^{-n\left(E^{Y}_S(V_{UXY\tilde{Y}\hat{Y}}) - 4
\delta\right)},\label{random-Lamy}\\
\mathbb{E}\Big[\Lambda_{XY} (X^{M_X} \times
Y^{M_Y},V_{UXY\tilde{X}\tilde{Y}\hat{X}\hat{Y}})\Big] &\leq
2^{-n\left(E^{XY}_S(V_{UXY\tilde{X}\tilde{Y}\hat{X}\hat{Y}}) - 6
\delta \right) },\label{Random-Lamxy}
\end{align}
\end{subequations}
whenever $n \geq n_0(|\U|,|\X|,|\Y|,\delta)$.
\end{lemma}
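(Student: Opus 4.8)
The plan is to compute each expectation by linearity and reduce it to a single-tuple probability, exactly as in the point-to-point counting argument behind Lemma~\ref{P2P-random-Packing}. First I would write, for instance,
\begin{equation*}
\mathbb{E}\big[N_X(X^{M_X}\times Y^{M_Y},V_{UXY\tilde{X}})\big] = \frac{1}{M_X M_Y}\sum_{i}\sum_{j}\sum_{k\neq i} \Pr\big[(\mathbf{u},X_i,Y_j,X_k)\in T_{V_{UXY\tilde{X}}}\big],
\end{equation*}
and observe that whenever the summation indices are distinct the corresponding codewords are drawn independently and uniformly from their conditional type classes $T_{P_{X|U}}(\mathbf{u})$ and $T_{P_{Y|U}}(\mathbf{u})$. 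Hence every probability in the sum equals a common value $p$, and the whole expectation is $p$ times the number of admissible index tuples divided by $M_X M_Y$. The same reduction applies verbatim to $N_U,N_Y,N_{XY}$ and to the three $\Lambda$ functions, since in each case the distinctness constraints on the indices guarantee that the codewords appearing in the indicator are mutually independent.

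The core is the single-tuple probability. For a fixed $\mathbf{u}\in T_{P_U}$, drawing $m$ independent $X$-codewords and $m'$ independent $Y$-codewords, the probability that they realize a prescribed joint conditional type (given $\mathbf{u}$) is the ratio of the joint conditional type-class size to the product of the marginal conditional type-class sizes. The Markov hypothesis $X-U-Y$ is what makes the $X$- and $Y$-codewords conditionally independent given $\mathbf{u}$, so the denominator is exactly $|T_{P_{X|U}}(\mathbf{u})|^{\,m}\,|T_{P_{Y|U}}(\mathbf{u})|^{\,m'}$. Using the standard estimates $|T_{V|U}(\mathbf{u})| = 2^{n[H_V(\,\cdot\,|U)\pm\epsilon_n]}$ with $\epsilon_n=O(n^{-1}\log n)$, for $n$ large one gets
\begin{equation*}
p = 2^{\,n[\,H_V(\mathrm{joint}|U) - m\,H(X|U) - m'\,H(Y|U)\,\pm\,\delta\,]}.
\end{equation*}
I would then expand the definitions~\eqref{def1}--\eqref{def4} and \eqref{ESX-def}--\eqref{ESXY-def}: the identity $I(A\wedge B|U)=H(A|U)+H(B|U)-H(AB|U)$ shows, after cancellation of all the intermediate joint-entropy terms, that this exponent equals $-(F_\bullet+\rho)$ for the first-order functions and $-(E^\bullet_S+\rho)$ for the second-order ones, where $\rho$ is the sum of the rates $R_X,R_Y$ attached to the summed-over indices $k,k',l,l'$.

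Combining the two pieces finishes the argument. The number of admissible index tuples, divided by the normalizing $M_X M_Y$, is a product of factors $(M_X-1),(M_X-2),(M_Y-1),\dots$ ranging over exactly the summed-over indices, which by $2^{n(R-\delta)}\le M\le 2^{nR}$ equals $2^{n\rho}$ up to $\delta$-slack and cancels the $\rho$ inside $p$. For $N_X$ this leaves $\mathbb{E}[N_X]=2^{-n[F_X\pm c\delta]}$, and $N_U,N_Y,N_{XY}$ are identical with the appropriate $(M_X-1)$, $(M_Y-1)$, or both. For $\Lambda_X,\Lambda_Y,\Lambda_{XY}$, where only the upper direction is claimed, I would keep only the upper estimate on $p$ (numerator $\le 2^{nH_V}$, denominator at its lower estimate) and the upper bound on the tuple count, yielding the stated one-sided inequalities.

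The main obstacle is not any single step but the bookkeeping. It has two layers: verifying that each mutual-information combination in $F_U,F_X,F_Y,F_{XY}$ and $E^X_S,E^Y_S,E^{XY}_S$ telescopes to precisely the right conditional-entropy difference, and tracking how the $\delta$-slack accumulates from three independent sources---the polynomial prefactors in the type-size bounds (one per extra codeword, which is why $\Lambda_{XY}$ carries $6\delta$ while $N_U$ carries only $\delta,2\delta$), the width of the $M$-range, and the asymmetry between the upper and lower probability estimates---so that the constants $\delta,2\delta,3\delta,4\delta,6\delta$ come out exactly as written. The conditional independence granted by $X-U-Y$ is the one structural hypothesis that must be invoked to justify the product form of the denominator; everything else is the point-to-point computation of Lemma~\ref{P2P-random-Packing} carried out shell-by-shell over the fixed $\mathbf{u}$.
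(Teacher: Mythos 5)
Your proposal is correct and takes essentially the same approach as the paper's proof: linearity of expectation reduces each packing function to a common single-tuple probability times the number of admissible index tuples, the probability is evaluated by conditional type-class counting (the paper writes this as a sum of uniform probabilities over nested shells, which is the same computation as your ratio of type-class sizes), and the mutual-information identities telescope exactly as you describe to yield the exponents $F_U,F_X,F_Y,F_{XY}$ and $E^X_S,E^Y_S,E^{XY}_S$ after the rate factors $(M_X-1),(M_Y-1),\dots$ cancel the residual $\rho$. One harmless slip in attribution: the product form $|T_{P_{X|U}}(\mathbf{u})|^{m}\,|T_{P_{Y|U}}(\mathbf{u})|^{m'}$ of the denominator follows from the independence of the codeword draws built into the ensemble (as you correctly state in your first paragraph), not from the Markov hypothesis $X-U-Y$, which serves only to make $P_{XYU}$ a consistent target type for an ensemble that generates the two codebooks independently given $\mathbf{u}$.
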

\begin{proof}
The proof is provided in the Appendix.
\end{proof}

\begin{lemma}\label{packing3}
Fix a finite set $\U$, $P_{XYU} \in \P_n(\X \times \Y \times \U)$
such that $X-U-Y$, $R_X \geq 0$, $R_Y \geq 0$ , $\delta > 0$,
$2^{n(R_X-\delta)} \leq M_X \leq 2^{nR_X}$, $ 2^{n(R_Y-\delta)} \leq
M_Y \leq 2^{nR_Y}$, and $\mathbf{u} \in T_{P_U}$. Almost every
multi-user code $C=C_X \times C_Y$,
$C_X=\{\mathbf{x}_1,\mathbf{x}_2,...,\mathbf{x}_{M_X}\}\subset
T_{P_{X|U}}(\mathbf{u})$ and
$C_Y=\{\mathbf{y}_1,\mathbf{y}_2,...,\mathbf{y}_{M_Y}\}\subset
T_{P_{Y|U}}(\mathbf{u})$, in the constant composition code ensemble,
$\C$, satisfies the following inequalities:
\begin{subequations}
\begin{align}
2^{-n[F_U(V_{UXY})+ 3 \delta]} &\leq N_{U} (C,V_{UXY}) \leq
2^{-n[F_U(V_{UXY})- 3
\delta]},\label{Fu}\\
2^{-n[F_X(V_{UXY\tilde{X}})+ 5 \delta]} &\leq N_{X}
(C,V_{UXY\tilde{X}}) \leq 2^{-n[F_X(V_{UXY\tilde{X}})- 5
\delta]},\label{Fx}\\
2^{-n[F_Y(V_{UXY\tilde{Y}})+ 5\delta]} &\leq N_{Y}
(C,V_{UXY\tilde{Y}}) \leq 2^{-n[F_Y(V_{UXY\tilde{X}\tilde{Y}})- 5
\delta]},\label{Fy}\\
2^{-n[F_{XY}(V_{UXY\tilde{X}\tilde{Y}})+ 5 \delta]} &\leq N_{XY}
(C,V_{UXY\tilde{X}\tilde{Y}}) \leq
2^{-n[F_{XY}(V_{UXY\tilde{X}\tilde{Y}})- 5 \delta]},\label{Fxy}
\end{align}
\end{subequations}
for all $V_{UXY\tilde{X}\tilde{Y}} \in \P_n(\U \times (\X \times
\Y)^2)$, and
\begin{subequations}
\begin{align}
\Lambda_X (C,V_{UXY\tilde{X}\hat{X}}) &\leq
2^{-n\left(E^{X}_S(V_{UXY\tilde{X}\hat{X}}) - 5
\delta\right)},\label{Lamx}\\
\Lambda_Y (C,V_{UXY\tilde{Y}\hat{Y}}) &\leq
2^{-n\left(E^{Y}_S(V_{UXY\tilde{Y}\hat{Y}}) - 5
\delta\right)},\label{Lamy}\\
\Lambda_{XY} (C,V_{UXY\tilde{X}\tilde{Y}\hat{X}\hat{Y}}) &\leq
2^{-n\left(E^{XY}_S(V_{UXY\tilde{X}\hat{X}}) - 7 \delta \right)
}.\label{Lamxy}
\end{align}
\end{subequations}
for all $V_{UXY\tilde{X}\tilde{Y}\hat{X}\hat{Y}} \in \P_n\left(\U
\times (\X \times \Y)^3\right)$, whenever $n \geq
n_0(|\U|,|\X|,|\Y|,\delta)$.
\end{lemma}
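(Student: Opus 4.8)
The plan is to mirror the proof of the point-to-point typical packing lemma (Lemma~\ref{lm:p2p_typical}): starting from the ensemble-average estimates already established in Lemma~\ref{packing1}, I would show that each packing function concentrates around its expectation, and then invoke a union bound over the polynomially many joint types to produce a single high-probability collection of codes on which every displayed inequality holds. Concretely, the two-sided bounds for the first-order functions $N_U,N_X,N_Y,N_{XY}$ will come from a second-moment (Chebyshev) argument, while the one-sided upper bounds for the second-order functions $\Lambda_X,\Lambda_Y,\Lambda_{XY}$ will come directly from Markov's inequality, exploiting the fact that the thresholds claimed here are looser by one extra $\delta$ than the expectation bounds of Lemma~\ref{packing1}.

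The heart of the argument is the variance estimate for the first-order functions. Each such function is a normalized sum of indicator random variables indexed by tuples of codeword indices (e.g.\ $N_{XY}$ sums $1_{T_{V_{UXY\tilde X\tilde Y}}}(\mathbf u,\mathbf x_i,\mathbf y_j,\mathbf x_k,\mathbf y_l)$ over $i,k$ and $j,l$). Since the $X_i$ and $Y_j$ are drawn independently and uniformly from the shells $T_{P_{X|U}}(\mathbf u)$ and $T_{P_{Y|U}}(\mathbf u)$, two indicator terms are independent whenever their index tuples share no codeword index; hence the second moment splits into a diagonal part, from pairs of tuples sharing at least one index in $C_X$ or in $C_Y$, plus a part equal to the square of the mean. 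I would bound each diagonal contribution by the same type-counting estimates used to prove Lemma~\ref{packing1}, organizing the computation by the overlap pattern of the two index tuples. The point is that every shared index removes a factor of roughly $M_X$ or $M_Y$ from the number of contributing tuples while only reinstating a conditional-probability factor, so that $\mathrm{Var}(N_\bullet)$ is smaller than $(\mathbb E[N_\bullet])^2$ by an exponential factor $2^{-n\delta'}$ in the regime where the mean is exponentially large (the only regime in which the claimed lower bound is non-vacuous, since $N_\bullet$ is $\tfrac{1}{M_XM_Y}$ times a nonnegative integer). Chebyshev's inequality then shows that each of the bounds in~\eqref{Fu}--\eqref{Fxy} is violated with probability at most $2^{-n\delta'}$; note that the lower tail genuinely requires this small relative variance, whereas the extra $\delta$ of slack only assists the upper tail.

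For the second-order functions, Markov's inequality suffices: since Lemma~\ref{packing1} gives $\mathbb E[\Lambda_X]\le 2^{-n(E^X_S-4\delta)}$, the probability that $\Lambda_X$ exceeds $2^{-n(E^X_S-5\delta)}$ is at most $2^{-n\delta}$, and likewise for $\Lambda_Y,\Lambda_{XY}$. Finally, the number of joint types in $\P_n(\U\times(\X\times\Y)^2)$ and in $\P_n(\U\times(\X\times\Y)^3)$ grows only polynomially in $n$, so summing the per-type failure probabilities over all types still yields a total failure probability tending to zero; the complementary event is a collection of codes of probability approaching $1$ on which all the stated inequalities hold simultaneously, which is exactly the ``almost every code'' conclusion. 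The main obstacle is the variance bookkeeping of the preceding paragraph: because the code has the product structure $C_X\times C_Y$ and the tuples defining $N_X$, $N_Y$ and especially $N_{XY}$ involve up to four codeword indices split between the two codebooks, the number of distinct overlap patterns is substantially larger than in the point-to-point case, and one must verify that every pattern involving a repeated index is of strictly lower exponential order.
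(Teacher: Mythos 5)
Your plan reproduces the skeleton of the paper's own argument: Chebyshev-type concentration for the first-order functions $N_U,N_X,N_Y,N_{XY}$, Markov's inequality (exploiting the extra $\delta$ of slack over Lemma~\ref{packing1}) for $\Lambda_X,\Lambda_Y,\Lambda_{XY}$, and a union bound over the polynomially many joint types; those last two ingredients are fine and match the paper. Where you and the paper part ways is the variance estimate. The paper symmetrizes the indicators (for $N_X$ it groups $J^j_{i,k}=Q^j_{ik}+Q^j_{ki}$ with $Q^j_{ik}=1_{T_{V_{UXY\tilde{X}}}}(\mathbf{u},X_i,Y_j,X_k)$, and for $N_{XY}$ a four-term sum $S^{j,l}_{i,k}$) and asserts that the symmetrized variables are pairwise independent, so that the variance of the sum is the sum of the variances; you instead propose an overlap-pattern enumeration of the second moment and assert that every pattern with a repeated index is of strictly lower exponential order.

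That deferred assertion is the genuine gap, and it fails even in the regime you restrict to (expected count exponentially large). Consider $N_X$ and two triples $(i,j,k)$, $(i',j,k)$ with $i\neq i'$, sharing both the $Y$-codeword and the codeword in the $\tilde{X}$-slot. One has $\mathbb{E}[Q^j_{ik}]\approx 2^{-n[I_V(\tilde{X}\wedge Y|U)+I_V(X\wedge Y\tilde{X}|U)]}$, while $\mathbb{E}[Q^j_{ik}Q^j_{i'k}]\approx 2^{-n[I_V(\tilde{X}\wedge Y|U)+2I_V(X\wedge Y\tilde{X}|U)]}$, i.e.\ the joint expectation exceeds the product by the factor $2^{nI_V(\tilde{X}\wedge Y|U)}$, whereas the number of such pairs of triples is smaller than the number of disjoint pairs only by the factor $M_XM_Y\leq 2^{n(R_X+R_Y)}$. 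So this pattern is lower order only when $I_V(\tilde{X}\wedge Y|U)=F_U(V_{U\tilde{X}Y})<R_X+R_Y$; it is \emph{not} a consequence of "one conditional-probability factor per shared index." For a type with $I_V(X\wedge Y\tilde{X}|U)\approx 0$ and $R_X+R_Y<I_V(\tilde{X}\wedge Y|U)<2R_X+R_Y$ (such $n$-types exist at low rates), the expected count $M_XM_Y\mathbb{E}[N_X]$ is exponentially large, yet a pair $(\mathbf{y}_j,\mathbf{x}_k)$ with joint type $V_{UY\tilde{X}}$ exists with probability at most $2^{-n[I_V(\tilde{X}\wedge Y|U)-R_X-R_Y]}\to 0$, so $N_X=0$ for almost every code and the lower bound in~\eqref{Fx} fails; no bookkeeping can close this. (Your parenthetical about vacuity also slips here: when the expectation is small the claimed lower bound is not vacuous but violated, since the packing function is zero with high probability while the bound is positive.) For what it is worth, this is exactly the fragile point of the paper's own proof as well: pairwise independence is exact for $N_U$ (given one shared codeword, the conditional probability is constant because all compositions relative to $\mathbf{u}$ are fixed), but for $N_X,N_Y,N_{XY}$ it suppresses precisely the correlation computed above. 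Consistently, the lemma is invoked later only for types in the sets $\V^t_{\cdot,n}$ and $\V^x_{\cdot,n}$ of Theorems~\ref{Typrandomcodingthm} and~\ref{expurgatedthm}, whose constraints ($F_U(V_{U\tilde{X}Y})\leq R_X+R_Y$, $F_X(V_{UXY\tilde{X}})\leq R_X+R_Y$, etc.) exclude the problematic types. To make your argument (and the statement it targets) correct, you must carry those constraints into the claim and verify your overlap patterns under them, rather than asserting the two-sided bounds for every joint type.
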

\begin{proof}
The proof is provided in the Appendix.
\end{proof}

%
%
\begin{lemma}\label{packing2}
For every finite set $\U$, $P_{XYU} \in \P_n(\X \times \Y \times
\U)$ such that $X-U-Y$, $R_X \geq 0$, $R_Y \geq 0$ , $\delta > 0$,
and $\mathbf{u} \in T_{P_U}$, there exist a multi-user code
$C^*=C^*_X \times C^*_Y$,
$C^*_X=\{\mathbf{x}_1,\mathbf{x}_2,...,\mathbf{x}_{M^*_X}\} \subset
T_{P_{X|U}}(\mathbf{u})$ and
$C^*_Y=\{\mathbf{y}_1,\mathbf{y}_2,...,\mathbf{y}_{M^*_Y}\} \subset
T_{P_{Y|U}}(\mathbf{u})$ with
$M^*_X \geq \frac{2^{n(R_X-\delta)}}{2}$, $M^*_Y \geq
\frac{2^{n(R_Y- \delta)}}{2}$, such that for every joint type
$V_{UXY\tilde{X}\tilde{Y}} \in \P_n(\U \times (\X \times \Y)^2)$,
\begin{subequations}
\begin{align}
N_{U} (C^*,V_{UXY}) &\leq
2^{-n[F_U(V_{UXY})- 6 \delta]}\;\;\;\;\;\;\;\;\;\;\;\;\;\label{expur-lem1}\\
N_{X} (C^*,V_{UXY\tilde{X}}) &\leq 2^{-n[F_{X}(V_{UXY\tilde{X}})- 6
\delta]}\;\;\;\;\;\;\label{expur-lem2}\\
N_{Y} (C^*,V_{UXY\tilde{Y}}) &\leq 2^{-n[F_{Y}(V_{UXY\tilde{Y}})- 6
\delta]}\;\;\;\;\;\;\label{expur-lem3}\\
N_{XY} (C^*,V_{UXY\tilde{X}\tilde{Y}}) &\leq
2^{-n[F_{XY}(V_{UXY\tilde{X}\tilde{Y}})- 6
\delta]}\label{expur-lem4}
\end{align}
\end{subequations}
and for any $1 \leq i \leq M^*_X$, and any $1 \leq j \leq M^*_Y$,
\begin{subequations}
\begin{eqnarray}
 1_{T_{V_{UXY}}}(\mathbf{u},
\mathbf{x}_i, \mathbf{y}_j)
\leq 2^{-n[F_U(V_{UXY}) -\min\{R_X,R_Y\} - 6 \delta]} \;\; \label{randomnew1}\;\;\;\;\;\;\;\;\;\;\;\;\;\;\;\;\;\;\;\\
\sum_{k \neq i} 1_{T_{V_{UXY\tilde{X}}}}(\mathbf{u}, \mathbf{x}_i,
\mathbf{y}_j, \mathbf{x}_k) \leq 2^{-n[F_{X}(V_{UXY\tilde{X}})-
\min\{R_X,R_Y\} - 6
\delta]}\label{randomnew2}\;\;\;\;\;\;\;\;\;\;\;\\
\sum_{l \neq j}  1_{T_{V_{UXY\tilde{Y}}}}(\mathbf{u}, \mathbf{x}_i,
\mathbf{y}_j, \mathbf{y}_l) \leq 2^{-n[F_{Y}(V_{UXY\tilde{Y}})-
\min\{R_X,R_Y\}-6
\delta]}\label{randomnew3}\;\;\;\;\;\;\;\;\;\;\;\\
\sum_{k \neq i} \sum_{l \neq j}
1_{T_{V_{UXY\tilde{X}\tilde{Y}}}}(\mathbf{u}, \mathbf{x}_i,
\mathbf{y}_j, \mathbf{x}_k, \mathbf{y}_l) \leq
2^{-n[F_{XY}(V_{UXY\tilde{X}\tilde{Y}})- \min\{R_X,R_Y\} - 6
\delta]}\label{randomnew4},
\end{eqnarray}
\end{subequations}
whenever
\begin{equation*}
n \geq n_0(|\U|,|\X|,|\Y|,\delta).
\end{equation*}
\end{lemma}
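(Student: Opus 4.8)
The plan is to produce $C^{*}$ by expurgating a single \emph{typical} code supplied by Lemma~\ref{packing3}, rather than running a fresh random‑coding argument. Fix $R_X,R_Y,\delta,\mathbf{u}$ and, invoking Lemma~\ref{packing3}, fix a code $C=C_X\times C_Y$ with $M_X\in[2^{n(R_X-\delta)},2^{nR_X}]$ and $M_Y\in[2^{n(R_Y-\delta)},2^{nR_Y}]$ whose packing functions satisfy the aggregate bounds \eqref{Fu}--\eqref{Fxy} for every joint type. The first observation is that the aggregate inequalities \eqref{expur-lem1}--\eqref{expur-lem4} come essentially for free: if $C^{*}\subseteq C$ retains at least half of each codebook, then each $N_\bullet(C^{*},\cdot)$ has a numerator no larger than that of $N_\bullet(C,\cdot)$ and a denominator at least $M_XM_Y/4$, so that $N_\bullet(C^{*},\cdot)\le 4\,N_\bullet(C,\cdot)$; the spurious factor $4$ and the shrinkage of the rates are absorbed by passing from the slack $5\delta$ in \eqref{Fx}--\eqref{Fxy} to the slack $6\delta$ in \eqref{expur-lem2}--\eqref{expur-lem4}. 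The entire content of the lemma therefore reduces to enforcing the \emph{per-pair} bounds \eqref{randomnew1}--\eqref{randomnew4} for every retained pair $(i,j)$, while discarding at most half of $C_X$ and at most half of $C_Y$.

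Next I would bound how many pairs can violate these per-pair inequalities. Call a pair $(i,j)$ \emph{bad} if it violates \eqref{randomnew1}, \eqref{randomnew2}, \eqref{randomnew3} or \eqref{randomnew4} for some joint type $V$. For one fixed condition and one fixed type, the left‑hand side of the per-pair inequality, averaged over all $(i,j)$, is \emph{exactly} the corresponding first-order packing function $N_\bullet(C,V)$, which \eqref{Fu}--\eqref{Fxy} bound by $2^{-n[F_\bullet(V)-5\delta]}$ (by $2^{-n[F_U(V)-3\delta]}$ for the $U$ term), whereas the per-pair threshold is the larger quantity $2^{-n[F_\bullet(V)-\min\{R_X,R_Y\}-6\delta]}$. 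Markov's inequality then shows that the fraction of pairs violating any one fixed inequality is at most $2^{-n[\min\{R_X,R_Y\}+\delta]}$. Summing over the four conditions and the polynomially many types (the count of which is absorbed into a factor $2^{n\delta/2}$ for $n$ large), the total number of bad pairs obeys
\[
B\;\le\; M_XM_Y\,2^{-n\min\{R_X,R_Y\}}\,2^{-n\delta/2}.
\]
This is precisely where the shift by $\min\{R_X,R_Y\}$ is dictated: it is the smallest shift that makes the bad set sparse enough for the covering step below to succeed, and a smaller shift would yield a larger $B$ that cannot be removed within the half-budget.

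The final and most delicate step is the expurgation itself, which is exactly where the MAC departs from the point-to-point case of Lemma~\ref{lm:p2p_expurgated}: the quantities in \eqref{randomnew1}--\eqref{randomnew4} are indexed by \emph{pairs}, yet we may only delete whole codewords (rows of $C_X$ or columns of $C_Y$), never individual pairs. I would phrase this as covering the ``bad'' $0/1$ matrix on $C_X\times C_Y$ (a one marks a bad pair, at most $B$ ones) by deleting at most $M_X/2$ rows and at most $M_Y/2$ columns so that the retained submatrix is all-zero. Assume without loss of generality $R_X\le R_Y$, so $\min\{R_X,R_Y\}=R_X$ and $M_X\le 2^{nR_X}$, whence $B\le M_Y\,(M_X2^{-nR_X})\,2^{-n\delta/2}\le M_Y\,2^{-n\delta/2}$. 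Now delete the $\lfloor M_X/2\rfloor$ rows carrying the most bad entries; being the heavier half, the retained rows together carry at most $B/2\ll M_Y/2$ bad entries, and these occupy at most $B/2$ distinct columns. Deleting exactly those columns removes every remaining bad entry while discarding fewer than $M_Y/2$ columns. The resulting $C^{*}=C^{*}_X\times C^{*}_Y$ keeps $M^{*}_X\ge M_X/2\ge \tfrac12 2^{n(R_X-\delta)}$ and $M^{*}_Y\ge M_Y/2\ge\tfrac12 2^{n(R_Y-\delta)}$, contains no bad pair, and—because deleting codewords only decreases each of the counts in \eqref{randomnew1}--\eqref{randomnew4}, while their thresholds depend only on $V$ and the fixed rates—satisfies \eqref{randomnew1}--\eqref{randomnew4} for every surviving $(i,j)$. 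Combined with the inherited aggregate bounds, this proves the lemma.

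The main obstacle is exactly this covering step: one must convert a bound on the \emph{average} over pairs of the packing quantities into a uniform \emph{per-pair} bound using only row/column deletions. The resolution is asymmetric—expurgate the heavier half of the \emph{smaller} codebook first (here $C_X$, since $R_X\le R_Y$), which by the $\min\{R_X,R_Y\}$ shift leaves at most $B/2\ll M_Y/2$ residual bad pairs, and then clear the residue by deleting the few offending codewords of the larger codebook. The symmetric conventions in the definitions \eqref{VEX-X-def}--\eqref{VEX-XY-def} (all with the single threshold $\min\{R_X,R_Y\}$) are what make this one-sided-then-residual expurgation consistent across all four conditions simultaneously.
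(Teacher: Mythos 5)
Your proposal is correct, but it reaches the lemma by a genuinely different route than the paper. The paper starts from the code of Lemma~\ref{packing1} and forms a single weighted aggregate $\Pi(C^r)=\sum_{V}\sum_{\alpha\in\{U,X,Y,XY\}} N_\alpha(C^r,V)\,2^{n[F_\alpha(V)-6\delta]}<\tfrac{1}{2}$; writing $\Pi$ as an average over rows of a per-row quantity $G(i)$, it keeps the rows with $G(i)<1$ (at least half, by Markov), and then the single inequality $G(i)<1$, multiplied through by $M_Y\le 2^{nR_Y}$ and combined with nonnegativity of every term, immediately yields the per-pair bounds with shift $R_Y$ for the retained row $i$ against \emph{every} column $j$. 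A symmetric column expurgation (keeping columns with $H(j)<1$) gives the per-pair bounds with shift $R_X$ against every row, and intersecting the two retained halves---using that the counts only decrease when one passes to a subcode---produces the $\min\{R_X,R_Y\}$ shift with no pair-level bookkeeping at all. You instead posit the $\min\{R_X,R_Y\}$ thresholds as targets, count bad pairs by Markov (obtaining $B\le M_XM_Y2^{-n\min\{R_X,R_Y\}}2^{-n\delta/2}$), and remove them by a row-then-column covering; this is valid, and in fact your own bound $B\le M_Y2^{-n\delta/2}\ll M_Y/2$ (for $R_X\le R_Y$) makes the intermediate row-deletion step unnecessary---deleting every column that contains a bad entry already suffices. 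Two immaterial nitpicks: the claim that the retained lighter half of rows carries at most $B/2$ bad entries has an off-by-one issue when $M_X$ is odd (harmless, since any bound $\le B$ works), and starting from the typical code of Lemma~\ref{packing3} rather than the code of Lemma~\ref{packing1} is fine because only the aggregate upper bounds are used. What the paper's construction buys is that the $\min\{R_X,R_Y\}$ shift emerges structurally ($R_Y$ from the row side, $R_X$ from the column side) and every retained codeword is guaranteed good against the \emph{entire} opposite codebook, so no covering argument is needed; what your construction buys is a more transparent ``bad-set'' picture and a counting style that adapts easily, at the cost of the covering step you correctly single out as the delicate point.
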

\begin{proof}
The proof is provided in the Appendix.
\end{proof}

As it is shown in the Appendix, the above property is derived by the
method of expurgation.
Unlike the point-to-point case, expurgation in the MAC is not a
trivial procedure. To see that, observe that expurgating bad pairs
of codewords results in a code with correlated messages, which is
hard to analyze.
Instead, what we do is a sort of ``partial'' expurgation. Roughly
speaking, we start with a code whose existence is proved in
Lemma~\ref{packing1} and eliminate some of the bad codewords from
the code with the larger rate (as opposed to codeword pairs).
%
%
By doing that, all messages in the new code are independent, and
such a code is easier to analyze.
%

\subsection{Error exponent bounds}

We can now proceed in a fashion that is similar to the
point-to-point case and derive a series of exponential bounds based
on Lemmas~\ref{packing1}, \ref{packing3}, and \ref{packing2}. In the
following, we present three lower bounds, the random coding, the
typical random coding, and the expurgated bounds. As in the case of
point-to-point channels, here too, all the lower bounds are
expressed in terms of the optimization of a single objective
function under different constraint sets.

\begin{theorem}\label{randomcodingthm}
Fix a finite set $\U$, $P_{XYU} \in \P_n(\X \times \Y \times \U)$
such that $X-U-Y$, $R_X \geq 0$, $R_Y \geq 0$ , $\delta > 0$,
$2^{n(R_X-\delta)} \leq M_X \leq 2^{nR_X}$, $ 2^{n(R_Y-\delta)} \leq
M_Y \leq 2^{nR_Y}$, and $\mathbf{u} \in T_{P_U}$. Consider the
ensemble, $\C$, of multi-user codes consisting of all pair of
codebooks $(C_X,C_Y)$, where
$C_X=\{\mathbf{x}_1,\mathbf{x}_2,...,\mathbf{x}_{M_X}\} \subset
T_{P_{X|U}}(\mathbf{u})$ and
$C_Y=\{\mathbf{y}_1,\mathbf{y}_2,...,\mathbf{y}_{M_Y}\} \subset
T_{P_{Y|U}}(\mathbf{u})$. The expectation of the average probability
of error over $\C$ is bounded by
\begin{align}
2^{-n[E_{rL}(R_X,R_Y,W,P_{XYU})+ 8 \delta]} \leq \bar{P_e} \leq
2^{-n[E_{r}(R_X,R_Y,W,P_{XYU}) - 6 \delta  ]}
\end{align}
whenever $n \geq n_1(|\Z|,|\X|,|\Y|,|\U|, \delta)$, where
\begin{eqnarray}
&E_{r}(R_X,R_Y,W,P_{XYU}) \triangleq
\min_{\substack{\substack{\beta=X,Y,XY}}}
E^r_{\beta}(R_X,R_Y,W,P_{UXY},\V_\beta^r),\\
&E_{rL}(R_X,R_Y,W,P_{XYU}) \triangleq \min_{\substack{\beta=X,Y,XY}}
E^{r,L}_{\beta}(R_X,R_Y,W,P_{UXY},\V_\beta^{r,L}).
\end{eqnarray}

\end{theorem}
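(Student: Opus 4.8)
The plan is to obtain both bounds on $\bar{P_e}=\mathbb{E}[e(C,W)]$ by taking the ensemble expectation of the two code-independent relations established in Section~\ref{MAC-relation} --- the upper bound \eqref{upperboundMAC} and the lower bound \eqref{lowerboundMAC} --- and then substituting the ensemble-average estimates of the packing functions supplied by Lemma~\ref{packing1}. In each relation the error is organized into three classes of events (a single-user $X$-error, a single-user $Y$-error, and a joint $XY$-error), producing three groups of terms governed respectively by $N_X,N_Y,N_{XY}$. Since each group ranges over only polynomially many joint types in $\V^r_{X,n},\V^r_{Y,n},\V^r_{XY,n}$, for $n\ge n_1$ the type count is absorbed into the $\delta$-slack, each group is dominated by its extremal type, and a sum of three exponentials has exponent equal to the minimum of the three. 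This is exactly the structure $E_r=\min_{\beta}E^r_{\beta}$ and $E_{rL}=\min_{\beta}E^{r,L}_{\beta}$.

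For the upper bound I would take expectations in \eqref{upperboundMAC} and use $\mathbb{E}[\min\{f,g\}]\le\min\{\mathbb{E}f,\mathbb{E}g\}$. The competing-codeword branch is controlled by $\mathbb{E}[N_X]\le 2^{-n[F_X-4\delta]}$ from Lemma~\ref{packing1}, while the trivial output-space branch contributes a factor governed by the pair-typicality function $N_U$, whose expectation $\mathbb{E}[N_U]\le 2^{-n[F_U-2\delta]}=2^{-n[I_V(X\wedge Y|U)-2\delta]}$ supplies the \emph{unclipped} $I_V(X\wedge Y|U)$ term appearing in $E_X$; this is the MAC analogue of the constant ``$1$'' in the point-to-point bound \eqref{upperboundA}, and keeping it separate from the clipping is what makes the stated exponent (rather than a looser one) achievable. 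The chain rule $I(\tilde{X}\wedge XYZ|U)=I_V(\tilde{X}\wedge XY|U)+I_V(\tilde{X}\wedge Z|XYU)$ merges the two information quantities, and $\min\{2^{-na},1\}=2^{-n|a|^+}$ reproduces precisely $E_X$ of \eqref{EX-def}, and likewise $E_Y,E_{XY}$. Minimizing over the type sets $\V^r_{\beta}$ yields $E^r_{\beta}$ as in \eqref{E-def}, and collecting the $4\delta$ from Lemma~\ref{packing1} together with the type-counting factor gives the claimed $-6\delta$.

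The lower bound is the more delicate half and is where the main obstacle lies. Starting from \eqref{lowerboundMAC} and using convexity of $|\cdot|^+$, Jensen's inequality gives $\mathbb{E}\big[|N_{\beta}-\mathrm{corr}|^+\big]\ge\big|\mathbb{E}[N_{\beta}]-\mathbb{E}[\mathrm{corr}]\big|^+$, where I lower-bound $\mathbb{E}[N_{\beta}]$ via Lemma~\ref{packing1} and upper-bound the second-order correction via \eqref{Random-Lamx}--\eqref{Random-Lamxy}. The crux is to show the correction is negligible relative to the first-order term: the extra constraint $I(\tilde{\beta}\wedge XYZ|U)\ge R_{\beta}$ defining $\V^{r,L}_{\beta}$ in \eqref{VXUr-def}--\eqref{VXYUr-def} is exactly the regime in which the first-order packing function dominates the triple-wise correction, so that the positive part is inactive and $\mathbb{E}[N_{\beta}]\approx 2^{-nF_{\beta}}$ governs the estimate. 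The same chain-rule merger then produces the unclipped exponents $E^L_{\beta}$ of \eqref{EXL-def}, minimized over $\V^{r,L}_{\beta}$, i.e. $E_{rL}$; accounting for the $3\delta$ (resp.\ $4\delta$) in the lower bounds on the $N$'s, the $\delta$ inside \eqref{lowerboundMAC}, and the type count gives $+8\delta$.

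I expect the hard part to be this correction-domination argument in the lower bound: one must verify, using the definitions of $E^X_S,E^Y_S,E^{XY}_S$ (e.g.\ \eqref{ESX-def}) and repeated application of the chain rule, that on $\V^{r,L}_{\beta}$ the summed second-order contribution $\sum 2^{nI(\hat{\beta}\wedge Z|\cdots)}\mathbb{E}[\Lambda_{\beta}]$ carries a strictly larger exponent than the first-order $\mathbb{E}[N_{\beta}]$ and may therefore be discarded, and this must be done simultaneously for the three coupled error classes, including the joint $XY$ event with its two-codeword correction $\Lambda_{XY}$. A secondary care point, already flagged above, is the clean separation of the pair-typicality factor $N_U$ from the clipped competing-codeword count in the upper bound, since folding $I_V(X\wedge Y|U)$ inside the positive part would yield a strictly weaker exponent than the $E_X$ asserted in the statement.
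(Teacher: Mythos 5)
Your proposal is correct, and its skeleton is the same as the paper's: take expectations of the code-independent relations \eqref{upperboundMAC} and \eqref{lowerboundMAC}, substitute the ensemble estimates of Lemma~\ref{packing1}, merge information quantities by the chain rule, and, for the lower bound, restrict the outer sum to types where the second-order correction is exponentially negligible, which is precisely the constraint defining $\V^{r,L}_\beta$. The ``correction-domination'' step you single out as the crux is exactly what the paper isolates as Lemma~\ref{Opt-Lemma}: by the marginal consistency $V_{UXY\hat{X}Z}=V_{UXY\tilde{X}Z}$ and one application of the chain rule, $I_V(\hat{X}\wedge XY\tilde{X}Z|U)\ge I_V(\hat{X}\wedge XYZ|U)=I_V(\tilde{X}\wedge XYZ|U)$, so on the set where $I_V(\tilde{X}\wedge XYZ|U)>R_X+O(\delta)$ every term of the (polynomially long) correction sum is exponentially small and the bracket stays bounded below; only this inequality direction of that lemma is needed, and your Jensen step $\mathbb{E}|A|^+\ge|\mathbb{E}A|^+$ is a valid substitute for the paper's handling of the positive part.

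Where you genuinely depart from --- and improve on --- the paper is the upper bound. The paper's one-line argument (``take expectation over \eqref{upperboundMAC}'') is, read literally, insufficient: \eqref{upperboundMAC} bounds the trivial branch by the constant $1$, and $\mathbb{E}\min\{f,1\}\le\min\{\mathbb{E}f,1\}$ then yields only the exponent $D(V_{Z|XYU}\|W|V_{XYU})+|I_V(X\wedge Y|U)+I_V(\tilde{X}\wedge XYZ|U)-R_X|^+$, with $I_V(X\wedge Y|U)$ trapped inside the clipping; this is strictly smaller than $E_X$ of \eqref{EX-def} whenever $I_V(\tilde{X}\wedge XYZ|U)<R_X$ and $I_V(X\wedge Y|U)>0$, so it does not imply the theorem as stated. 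Your fix --- retaining the pair-typicality factor $N_U$ of \eqref{NU-def} in the trivial branch, which is legitimate because the trivial bound $A^X_{i,j}\le 2^{nH_V(Z|XYU)}$ multiplies the indicator $1_{T_{V_{UXY}}}$ already present in the decomposition, and then using $\min\bigl\{2^{-n(a+b)},2^{-na}\bigr\}=2^{-n(a+|b|^+)}$ with $a\approx I_V(X\wedge Y|U)$ --- is exactly what is required to keep $I_V(X\wedge Y|U)$ outside the positive part and recover $E_r$ as claimed. The same tacit refinement is needed in the paper's proofs of Theorems~\ref{Typrandomcodingthm} and~\ref{expurgatedthm}, which likewise assert exponents with $I_V(X\wedge Y|U)$ unclipped; so on this point your write-up is more rigorous than the paper's.
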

\begin{proof}
The proof is provided in the Appendix.
\end{proof}
\begin{corollary}
In the low rate regime,
\begin{align}
E_{rL}(R_X,R_Y,W,P_{XYU}) = E_r(R_X,R_Y,W,P_{XYU}).
\end{align}
We call this rate region as the critical region for $W$.
\end{corollary}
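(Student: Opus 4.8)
The plan is to follow the template of the point-to-point Corollary~\ref{cor:erl_eq_er}, exploiting that $E_r$ and $E_{rL}$ are assembled from the same per-error-type ingredients. For $\beta\in\{X,Y,XY\}$ let $J_\beta(V)$ denote the mutual-information expression sitting inside the clipping $|\cdot|^+$ of $E_\beta$ (so $J_X=I(\tilde X\wedge XYZ|U)$, $J_{XY}=I(\tilde X\tilde Y\wedge XYZ|U)+I(\tilde X\wedge\tilde Y)$, etc.) and let $\rho_\beta$ be the matching threshold, $\rho_X=R_X$, $\rho_Y=R_Y$, $\rho_{XY}=R_X+R_Y$; then $\V^{r,L}_\beta=\{V\in\V^r_\beta:J_\beta(V)\ge\rho_\beta\}$ and $E_\beta=D(V_{Z|XYU}\|W|V_{XYU})+I_V(X\wedge Y|U)+|J_\beta(V)-\rho_\beta|^+$.

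First I would record the inequality $E_{rL}\ge E_r$, valid for \emph{all} rate pairs. Fix $\beta$. On $\V^{r,L}_\beta\subseteq\V^r_\beta$ the defining constraint $J_\beta\ge\rho_\beta$ makes the clipped argument nonnegative, so $E^L_\beta=E_\beta$ pointwise there; hence $E^{r,L}_\beta=\min_{\V^{r,L}_\beta}E^L_\beta=\min_{\V^{r,L}_\beta}E_\beta\ge\min_{\V^r_\beta}E_\beta=E^r_\beta$. Taking the minimum over $\beta$ gives $E_{rL}\ge E_r$.

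For the reverse inequality I would split $\V^r_\beta$ along $J_\beta=\rho_\beta$. On the complementary piece $\{J_\beta<\rho_\beta\}$ the clipping vanishes and $E_\beta$ reduces to $D(V_{Z|XYU}\|W|V_{XYU})+I_V(X\wedge Y|U)$; writing $m_\beta$ for the minimum of this reduced objective over $\V^r_\beta\setminus\V^{r,L}_\beta$, continuity of $E_\beta$ gives the exact identity $E^r_\beta=\min\{E^{r,L}_\beta,\,m_\beta\}$, while $E_{rL}=\min_\beta E^{r,L}_\beta$ by construction. Let $\beta^\ast$ attain $E_r=\min_\beta E^r_\beta$. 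It then suffices to verify $E^{r,L}_{\beta^\ast}\le m_{\beta^\ast}$: granting this, $E^r_{\beta^\ast}=E^{r,L}_{\beta^\ast}$, so $E_{rL}\le E^{r,L}_{\beta^\ast}=E_r$, and with the previous paragraph equality follows.

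It remains to delineate where $E^{r,L}_{\beta^\ast}\le m_{\beta^\ast}$ holds; equivalently, where a minimizer of $E_{\beta^\ast}$ over the compact set $\V^r_{\beta^\ast}$ can be taken with $J_{\beta^\ast}\ge\rho_{\beta^\ast}$. I would \emph{define} the critical region to be exactly the set of rate pairs $(R_X,R_Y)$ for which this holds for the binding index $\beta^\ast$. That this is genuinely a low-rate region rests on monotonicity: as the rates decrease the interior set $\{V\in\V^r_\beta:J_\beta(V)<\rho_\beta\}$ shrinks, so $m_\beta$ is nondecreasing, and $m_\beta$ matches $E^{r,L}_\beta$ continuously across the boundary $J_\beta=\rho_\beta$. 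The main obstacle is this last step: showing, via the convexity of $D(\cdot\|W|\cdot)$ in the conditional type together with the decoding constraint $\alpha(V_{UXYZ})\ge\alpha(V_{U\tilde\beta\cdots Z})$ that forces $J_\beta$ away from the interior at small $\rho_\beta$, that at low rates the unconstrained minimizer genuinely lands on the side $J_{\beta^\ast}\ge\rho_{\beta^\ast}$. This is the exact multi-letter analogue of the $R\le R_{crit}$ phenomenon of Corollary~\ref{cor:erl_eq_er}, now complicated by having to track the three coupled error events $X$, $Y$, and $XY$ and to identify which one is binding.
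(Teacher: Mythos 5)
Your proposal is correct and is essentially the paper's own argument: the paper omits this proof, pointing to Corollary~\ref{cor:erl_eq_er}, whose logic is exactly your reduction --- $E_{rL}\ge E_r$ holds at all rates, and equality follows once the term inside $|\cdot|^+$ is active at the minimizer of $E_r$, the low-rate (critical) region being by definition where this activity occurs. The step you flag as the main obstacle --- verifying that at low rates the minimizer genuinely lands on the side $J_{\beta^\ast}\ge\rho_{\beta^\ast}$ --- is also left unproven by the paper, which in the point-to-point case invokes the classical fact that the random coding exponent is a straight line of slope $-1$ below $R_{crit}$, and in the MAC case simply names the region where equality holds the critical region.
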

\begin{proof}
The proof is similar to the proof of corollary~\ref{cor:erl_eq_er}
and is omitted.
\end{proof}

\begin{theorem}\label{Typrandomcodingthm}
Fix a finite set $\U$, $P_{XYU} \in \P_n(\X \times \Y \times \U)$
such that $X-U-Y$, $R_X \geq 0$, $R_Y \geq 0$ , $\delta > 0$,
$2^{n(R_X-\delta)} \leq M_X \leq 2^{nR_X}$, $ 2^{n(R_Y-\delta)} \leq
M_Y \leq 2^{nR_Y}$, and $\mathbf{u} \in T_{P_U}$. The average
probability of error for almost all multi-user codes $C=C_X \times
C_Y$, $C_X=\{\mathbf{x}_1,\mathbf{x}_2,...,\mathbf{x}_{M_X}\}
\subset T_{P_{X|U}}(\mathbf{u})$ and
$C_Y=\{\mathbf{y}_1,\mathbf{y}_2,...,\mathbf{y}_{M_Y}\} \subset
T_{P_{Y|U}}(\mathbf{u})$, in ensemble $\C$, satisfies the following
inequalities
\begin{align}
2^{-n[E_{TL}(R_X,R_Y,W,P_{XYU})+ 7 \delta]} \leq e(C,W) \leq
2^{-n[E_{T}(R_X,R_Y,W,P_{XYU}) - 6 \delta  ]}
\end{align}
whenever $n \geq n_1(|\Z|,|\X|,|\Y|,|\U|, \delta)$, where
\begin{eqnarray}
&E_{T}(R_X,R_Y,W,P_{XYU}) \triangleq \min_{\substack{\beta=X,Y,XY}}
E^T_{\beta}(R_X,R_Y,W,P_{UXY},\V_\beta^T)\\
&E_{TL}(R_X,R_Y,W,P_{XYU}) \triangleq \min_{\substack{\beta=X,Y,XY}}
E^{T,L}_{\beta}(R_X,R_Y,W,P_{UXY},\V_\beta^{T,L}).
\end{eqnarray}

\end{theorem}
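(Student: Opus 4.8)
The plan is to reproduce the logic of the point-to-point argument (Theorem~\ref{th:p2p_typical}), now feeding the concentration estimates of Lemma~\ref{packing3} into the two code-dependent bounds \eqref{upperboundMAC} and \eqref{lowerboundMAC}. I would fix the high-probability collection of codes furnished by Lemma~\ref{packing3}; every code in it simultaneously obeys the two-sided bounds on $N_U,N_X,N_Y,N_{XY}$ and the upper bounds on $\Lambda_X,\Lambda_Y,\Lambda_{XY}$. Since this collection has probability tending to $1$, it suffices to establish both inequalities of the theorem for an arbitrary code in it, and the claimed typical bound follows.

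\emph{Upper bound.} I would substitute $N_X(C,V_{UXY\tilde{X}})\le 2^{-n[F_X-5\delta]}$, and the analogous estimates for the $Y$ and $XY$ terms, into \eqref{upperboundMAC}. Two features then combine. First, for every joint type violating one of the constraints $F\le R_X+R_Y$, the typical upper bound forces the relevant integer packing count below one, hence equal to zero; such types drop out, so the sum effectively ranges only over $\V^T_\beta$ and the exponent improves to $E^T_\beta$. Second, the operation $\min\{\cdot,1\}$ reproduces the $|\cdot|^+$ appearing in $E_\beta$, while the chain rule $I_V(\tilde{X}\wedge Z|XYU)+I_V(\tilde{X}\wedge XY|U)=I_V(\tilde{X}\wedge XYZ|U)$ identifies each summand's exponent with $E_X$ of \eqref{EX-def}. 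As only polynomially many types occur, the sum is dominated by its largest term up to a factor absorbed in $\delta$, and taking the smallest exponent over $\beta\in\{X,Y,XY\}$ yields $e(C,W)\le 2^{-n[E_T-6\delta]}$.

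\emph{Lower bound.} Here I would insert the lower bounds on $N_\beta$ together with $\Lambda_X\le 2^{-n[E^X_S-5\delta]}$ and its analogues into \eqref{lowerboundMAC}. The essential point is that, term by term, the subtracted second-order contribution is at most half of the first-order packing function, so that $|N_\beta-\cdots|^+$ retains the exponential order of $N_\beta$. For the $X$-term one compares exponents using $E^X_S$ of \eqref{ESX-def} against $F_X$; after the chain rule, the surviving requirement is $I(\hat{X}\wedge XY\tilde{X}Z|U)\ge R_X$, which holds because the summation constraint $V_{UXY\hat{X}Z}=V_{UXY\tilde{X}Z}$ forces $I(\hat{X}\wedge XYZ|U)=I(\tilde{X}\wedge XYZ|U)$, and the latter is $\ge R_X$ precisely on $\V^{T,L}_X$. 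The nonzero contributions thus come from $\V^{T,L}_\beta$, each carrying the exponent $E^L_\beta$ of \eqref{EXL-def} (since $I_V(\tilde{X}\wedge Z|XYU)+F_X=E^L_X$), and minimizing over $\beta$ gives $e(C,W)\ge 2^{-n[E_{TL}+7\delta]}$.

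The main obstacle is exactly this domination of the second-order packing terms. In contrast to the single pairwise/triple-wise comparison of the point-to-point case, three coupled families of error events must be controlled, and the $XY$ term is the most delicate: it couples all of $\tilde{X},\tilde{Y},\hat{X},\hat{Y}$ together with the extra pairwise quantities $I(\tilde{X}\wedge\tilde{Y}|U)$ and $I(\hat{X}\wedge\hat{Y}|U)$ built into $E^{XY}_S$ of \eqref{ESXY-def}. Here the marginal-consistency constraint $V_{UXY\hat{X}\hat{Y}Z}=V_{UXY\tilde{X}\tilde{Y}Z}$ again reduces the required inequality to the defining condition of $\V^{T,L}_{XY}$ in \eqref{VXYUT-def}. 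Once these comparisons are verified uniformly over the constraint region, the proof closes by the same Laplace-type aggregation over polynomially many types employed for Theorem~\ref{th:p2p_typical}.
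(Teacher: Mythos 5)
Your proposal is correct and follows essentially the same route as the paper's own proof: it feeds the concentration bounds of Lemma~\ref{packing3} into \eqref{upperboundMAC} and \eqref{lowerboundMAC}, uses the integer-count argument to zero out packing functions for types violating the $F\le R_X+R_Y$ constraints (the paper's Lemma~\ref{typical-N-Prop-lem}), and invokes the marginal-consistency/chain-rule identity (the paper's Lemma~\ref{Opt-Lemma}) to dominate the second-order terms on $\V^{T,L}_\beta$ before aggregating over polynomially many types. No gaps.
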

\begin{proof}
The proof is provided in the Appendix.
\end{proof}

\begin{corollary}\label{MAC-typical-Cor1}
 For every finite set $\U$, $\P_{XYU} \in \P_n(\X \times \Y
\times \U)$ such that $X-U-Y$ , $R_X \geq 0$, $R_Y \geq 0$,
\begin{align}
E_T(R_X,R_Y,P_{XYU},W) \leq E_{av}^T (R_X,R_Y) \leq
E_{TL}(R_X,R_Y,P_{XYU},W).
\end{align}
\end{corollary}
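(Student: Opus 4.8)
The plan is to obtain both inequalities directly from the two-sided bound on $e(C,W)$ established in Theorem~\ref{Typrandomcodingthm}, exactly mirroring the argument for the point-to-point Corollary~\ref{P2P-typical-Cor1}. Fix the composition $P_{XYU}$ and, for each $\delta>0$, let $\mathcal{G}_\delta \subset \mathcal{C}$ be the set of codes in the ensemble satisfying
\[
2^{-n[E_{TL}(R_X,R_Y,W,P_{XYU})+7\delta]} \leq e(C,W) \leq 2^{-n[E_{T}(R_X,R_Y,W,P_{XYU})-6\delta]}.
\]
By Theorem~\ref{Typrandomcodingthm} this is the ``almost all codes'' set: $\mathbb{P}(\mathcal{G}_\delta)\geq 1-\epsilon_n(\delta)$ with $\epsilon_n(\delta)\to 0$ as $n\to\infty$ for each fixed $\delta$. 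Throughout I write $\delta$ for the slack parameter of the theorem and reserve $\delta_0$ for the probability parameter appearing in the definition of $E^T_{av}$.

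For the lower bound $E_T\leq E^T_{av}$, I would use $\mathcal{G}_\delta$ itself as the admissible collection. Fix $\delta>0$. For any $\delta_0>0$, taking $n$ large enough that $\epsilon_n(\delta)<\delta_0$ makes $\tilde{\mathcal{C}}:=\mathcal{G}_\delta$ a legitimate choice in the inner maximization defining $E^T_{av}$, and every $C\in\mathcal{G}_\delta$ obeys $-\tfrac1n\log e(C,W)\geq E_T-6\delta$. Hence $\max_{\tilde{\mathcal{C}}}\min_{C\in\tilde{\mathcal{C}}}-\tfrac1n\log e(C,W)\geq E_T-6\delta$ for all large $n$; passing to $\limsup_n$, then $\liminf_{\delta_0\to 0}$, and finally letting $\delta\to 0$ yields $E^T_{av}\geq E_T$.

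For the upper bound $E^T_{av}\leq E_{TL}$, the key point is that \emph{every} admissible collection must contain at least one code from $\mathcal{G}_\delta$. Fix $\delta>0$ and $\delta_0\in(0,1)$, and let $\tilde{\mathcal{C}}$ be arbitrary with $\mathbb{P}(\tilde{\mathcal{C}})>1-\delta_0$. Once $n$ is large enough that $\epsilon_n(\delta)<1-\delta_0$, the union bound gives $\mathbb{P}(\tilde{\mathcal{C}}\cap\mathcal{G}_\delta)\geq \mathbb{P}(\tilde{\mathcal{C}})+\mathbb{P}(\mathcal{G}_\delta)-1>0$, so there exists $C^*\in\tilde{\mathcal{C}}\cap\mathcal{G}_\delta$. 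The lower bound defining $\mathcal{G}_\delta$ forces $-\tfrac1n\log e(C^*,W)\leq E_{TL}+7\delta$, and since $C^*\in\tilde{\mathcal{C}}$ the inner minimum cannot exceed this value. As $\tilde{\mathcal{C}}$ was arbitrary, $\max_{\tilde{\mathcal{C}}}\min_{C}-\tfrac1n\log e(C,W)\leq E_{TL}+7\delta$ for all large $n$; taking $\limsup_n$, then $\liminf_{\delta_0\to0}$ (the bound is already $\delta_0$-free), and finally $\delta\to0$ gives $E^T_{av}\leq E_{TL}$.

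The main obstacle is precisely this intersection argument in the upper bound: one must guarantee that an adversarially chosen high-probability collection is nevertheless forced to include a ``typical'' code whose error probability is not anomalously small, which is exactly what the lower bound on $e(C,W)$ in Theorem~\ref{Typrandomcodingthm} supplies, via the pigeonhole step $\mathbb{P}(\tilde{\mathcal{C}}\cap\mathcal{G}_\delta)>0$. A secondary bookkeeping issue is that the fixed distribution $P_{XYU}$ must be realized as a sequence of types in $\P_n$ as $n\to\infty$; since $E_T$ and $E_{TL}$ are continuous in the composition, approximating $P_{XYU}$ by nearby types and invoking this continuity folds the discrepancy into the $\delta$-slack and leaves the limits unchanged.
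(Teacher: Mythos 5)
Your proposal is correct and takes essentially the same route as the paper: the paper proves this corollary by pointing to the point-to-point Corollary~\ref{P2P-typical-Cor1}, whose argument is exactly yours---use the high-probability ``typical'' collection from Theorem~\ref{Typrandomcodingthm} as the admissible set for the lower bound, and for the upper bound use the pigeonhole/intersection step $\mathbb{P}(\tilde{\C}\cap\mathcal{G}_\delta)>0$ to force any admissible collection to contain a typical code whose error probability is not anomalously small. Your separation of the slack parameter $\delta$ from the probability parameter $\delta_0$ is a slightly cleaner bookkeeping of the same argument.
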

\begin{proof}
The proof is very similar to the proof of
Corollary~\ref{P2P-typical-Cor1}.
\end{proof}

\begin{corollary}
In the low rate regime,
\begin{align}
E_{TL}(R_X,R_Y,P_{XYU},W) = E_T(R_X,R_Y,P_{XYU},W).
\end{align}
\end{corollary}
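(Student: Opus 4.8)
The plan is to reduce the identity to a per--error--event statement and then reuse, for each event, the same clipping/critical--rate argument that underlies the random coding version of this corollary (and Corollary~\ref{cor:erl_eq_er}). Recall that $E_T=\min_{\beta}E^T_\beta$ and $E_{TL}=\min_{\beta}E^{T,L}_\beta$ with $\beta\in\{X,Y,XY\}$, so it suffices to prove $E^T_\beta=E^{T,L}_\beta$ for each $\beta$ in the low rate regime; taking the minimum over $\beta$ then yields the claim. The inequality $E^{T,L}_\beta\ge E^T_\beta$ holds for all rates: by definition $\V^{T,L}_\beta\subseteq\V^T_\beta$, and on $\V^{T,L}_\beta$ the defining information quantity exceeds the corresponding rate, so the clipped term $|\cdot|^+$ in $E_\beta$ coincides with the bare argument appearing in $E^L_\beta$. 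Hence $E^{T,L}_\beta=\min_{V\in\V^{T,L}_\beta}E_\beta\ge\min_{V\in\V^T_\beta}E_\beta=E^T_\beta$, and therefore $E_{TL}\ge E_T$ everywhere.

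For the reverse inequality fix $\beta=X$ (the cases $\beta=Y,XY$ are identical after replacing $R_X$ by $R_Y$ or by $R_X+R_Y$, and $I_V(\tilde X\wedge XYZ|U)$ by $I_V(\tilde Y\wedge XYZ|U)$ or by $I_V(\tilde X\tilde Y\wedge XYZ|U)+I_V(\tilde X\wedge\tilde Y|U)$). Introduce the ``unclipped'' objective
\begin{equation*}
G_X(V)\triangleq D(V_{Z|XYU}\|W|V_{XYU})+I_V(X\wedge Y|U)+I_V(\tilde X\wedge XYZ|U),
\end{equation*}
let $V^\star$ minimize $G_X$ over $\V^T_X$, and set the critical rate $R^\star_X\triangleq I_{V^\star}(\tilde X\wedge XYZ|U)$. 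Since $V^\star$ is the global minimizer of $G_X$ over $\V^T_X$ and is feasible for $E^{T,L}_X$ whenever $R_X\le R^\star_X$, one gets $E^{T,L}_X=G_X(V^\star)-R_X$ in that range. Now let $V_R$ attain $E^T_X$. If $I_{V_R}(\tilde X\wedge XYZ|U)\ge R_X$ the objective equals $G_X(V_R)-R_X\ge G_X(V^\star)-R_X=E^{T,L}_X$; if instead $I_{V_R}(\tilde X\wedge XYZ|U)<R_X$ the objective equals $G_X(V_R)-I_{V_R}(\tilde X\wedge XYZ|U)>G_X(V_R)-R_X\ge G_X(V^\star)-R_X=E^{T,L}_X$. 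In both cases $E^T_X\ge E^{T,L}_X$, so equality holds for $R_X\le R^\star_X$.

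Collecting the three events, I would define the low rate regime (the critical region) as the set of pairs $(R_X,R_Y)$ for which $R_X\le R^\star_X$, $R_Y\le R^\star_Y$, and $R_X+R_Y\le R^\star_{XY}$ hold simultaneously, where each $R^\star_\beta$ is the threshold information evaluated at the minimizer of the corresponding unclipped objective over $\V^T_\beta$. On this region $E^T_\beta=E^{T,L}_\beta$ for every $\beta$, and taking the minimum gives $E_T=E_{TL}$. This is exactly the argument used for Corollary~\ref{cor:erl_eq_er} and for the random coding version of this corollary, with the constraint sets $\V^r_\beta$ replaced by $\V^T_\beta$; the replacement is innocuous because the additional typicality constraints defining $\V^T_\beta$ (the $F_U$, $F_X$, $F_Y$, $F_{XY}$ bounds) are carried unchanged through the two--case split and never interact with the clipping.

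The main obstacle is bookkeeping rather than conceptual: I must ensure the three per--event critical conditions are mutually consistent and that their intersection coincides with the critical region named in the random coding corollary, so that a single ``low rate regime'' hypothesis legitimately drives all three equalities at once. The event $\beta=XY$ is the delicate one, since there the clipping threshold is the sum $I_V(\tilde X\tilde Y\wedge XYZ|U)+I_V(\tilde X\wedge\tilde Y|U)$ measured against the combined rate $R_X+R_Y$; one should verify that its unclipped minimizer over $\V^T_{XY}$ has this combined information equal to $R^\star_{XY}$, so that the elementary one--dimensional argument applies verbatim.
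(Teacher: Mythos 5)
Your proposal is correct, and its core is the same as the paper's (omitted) argument, which simply defers to Corollary~\ref{cor:erl_eq_er}: the direction $E_{TL}\ge E_T$ is automatic because $\V^{T,L}_\beta\subset\V^T_\beta$ and the clipped and unclipped objectives coincide on $\V^{T,L}_\beta$, while the reverse direction follows once the clipping term is active at the minimizer, which makes that minimizer feasible for the constrained problem. Where you genuinely differ is in how the hypothesis ``low rate regime'' enters. The paper's template proof invokes an external fact about the point-to-point random coding exponent (below $R_{crit}$ the curve is a straight line of slope $-1$, so the $|\cdot|^+$ term is active at the optimizer); no analogous fact is established in the paper for the MAC typical exponent, so a literal transcription of Corollary~\ref{cor:erl_eq_er} would leave that step unsupported. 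You instead construct per-event thresholds $R^\star_\beta$ from the minimizers of the unclipped objectives over $\V^T_\beta$ and \emph{define} the low-rate regime as the intersection of the conditions $R_X\le R^\star_X$, $R_Y\le R^\star_Y$, $R_X+R_Y\le R^\star_{XY}$; with that definition your two-case argument closes the proof rigorously, and since the paper never pins down what ``low rate regime'' means for the MAC, this constructive reading is legitimate and arguably cleaner. Two points you should make explicit: (i) the sets $\V^T_\beta$, hence $V^\star$ and $R^\star_\beta$, themselves depend on $(R_X,R_Y)$ through the constraints $F_U,F_X,F_Y,F_{XY}\le R_X+R_Y$, so your regime is specified by a pointwise self-consistency condition (harmless, but it should be stated, together with nonemptiness of the region near $(0,0)$); (ii) for $\beta=XY$ you correctly measure the threshold with the full quantity $I_V(\tilde{X}\tilde{Y}\wedge XYZ|U)+I_V(\tilde{X}\wedge\tilde{Y}|U)$, matching the clipped term in the objective rather than the unconditioned $I(\tilde{X}\wedge\tilde{Y})$ appearing in~\eqref{VXYUT-def}, which is evidently a typo in the paper.
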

\begin{proof}
The proof is similar to the proof of Corollary~\ref{cor:erl_eq_er}
and is omitted.
\end{proof}

\begin{theorem}\label{expurgatedthm}
 For every finite set $\U$, $\P_{XYU} \in \P_n(\X \times \Y
\times \U)$ such that $X-U-Y$ , $R_X \geq 0$, $R_Y \geq 0$, $\delta
> 0$, and $\mathbf{u} \in T_{P_U}$, there exists a multi-user code
\begin{equation}
C = \{ (\mathbf{x}_i, \mathbf{y}_j,D_{ij}) : i=1,...M^*_X,
j=1,...M^*_Y \}
\end{equation}
with $\mathbf{x}_i \in T_{P_{X|U}}(\mathbf{u})$, $\mathbf{y}_j \in
T_{P_{Y|U}}(\mathbf{u})$ for all $i$ and $j$, $M^*_X \geq
\frac{2^{n(R_X-\delta)}}{2}$, and $M^*_Y \geq
\frac{2^{n(R_Y-\delta)}}{2}$, such that for every MAC $W:\X \times
\Y \rightarrow \Z$
\begin{equation}
e(C,W) \leq 2^{-n[E_{ex}(R_X,R_Y,W,P_{XYU})-
5\delta]}\label{expur-bnd}
\end{equation}
whenever $n \geq n_1(|\Z|,|\X|,|\Y|,|\U|, \delta)$, where
%
\begin{eqnarray}
&E_{ex}(R_X,R_Y,W,P_{XYU}) \triangleq \min_{\substack{\beta=X,Y,XY}}
E^{ex}_{\beta}(R_X,R_Y,W,P_{UXY},\V_\beta^{ex}).
\end{eqnarray}
\end{theorem}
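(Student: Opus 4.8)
The code whose existence we assert is precisely the expurgated code $C^*=C_X^*\times C_Y^*$ furnished by Lemma~\ref{packing2}, equipped with the $\alpha$-decoding sets $D_{ij}$ of Section~\ref{MAC-relation}; its cardinalities $M_X^*\geq \frac{2^{n(R_X-\delta)}}{2}$ and $M_Y^*\geq \frac{2^{n(R_Y-\delta)}}{2}$ are exactly those required, so nothing further need be constructed. The plan is to rerun the argument of the random-coding Theorem~\ref{randomcodingthm} on this specific code. Thus I would first invoke the error representation of Section~\ref{MAC-relation}, decomposing $e(C^*,W)$ into the three confusion events (wrong $\mathbf{x}$, wrong $\mathbf{y}$, wrong pair) and bounding each by the bracketed packing-function expression that leads to~\eqref{upperboundMAC}. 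The essential feature of $C^*$ is that it simultaneously satisfies two families of inequalities: the averaged packing bounds~\eqref{expur-lem1}--\eqref{expur-lem4} and the per-codeword bounds~\eqref{randomnew1}--\eqref{randomnew4}.

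The new ingredient, and the reason the constraint sets shrink from $\V^r_\beta$ to $\V^{ex}_\beta$, is the integrality of the left-hand sides of~\eqref{randomnew1}--\eqref{randomnew4}. Each is a nonnegative integer (an indicator or a count of codewords), so whenever the corresponding information quantity $F_U(\cdot)$ or $F_\beta(\cdot)$ exceeds $\min\{R_X,R_Y\}$ by more than $6\delta$, the right-hand side falls below $1$ and the count is forced to vanish. Applying~\eqref{randomnew1} to the transmitted pair $(\mathbf{x}_i,\mathbf{y}_j)$ and to the competing pair $(\mathbf{x}_k,\mathbf{y}_j)$, and~\eqref{randomnew2} to the triple, one sees that no triple $(\mathbf{u},\mathbf{x}_i,\mathbf{y}_j,\mathbf{x}_k)$ can realize a joint type $V$ violating any of the defining inequalities $F_U(V_{UXY}),\,F_U(V_{U\tilde X Y}),\,F_X(V_{UXY\tilde X})\leq\min\{R_X,R_Y\}$ of $\V^{ex}_X$ in~\eqref{VEX-X-def}. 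Since a per-$(i,j)$ count that is identically zero forces its average to be zero, it follows that $N_X(C^*,V)=0$ for every $V\notin\V^{ex}_X$, and likewise $N_Y,N_{XY}$ vanish off $\V^{ex}_Y,\V^{ex}_{XY}$. This is exactly what confines the type-sums to the smaller sets.

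On the surviving types $V\in\V^{ex}_\beta$, the passage from the averaged packing bounds~\eqref{expur-lem1}--\eqref{expur-lem4} to the per-type objective $E_\beta(R_X,R_Y,W,P_{XYU},V)$ of~\eqref{EX-def}--\eqref{EXY-def} is identical to the corresponding step in the proof of Theorem~\ref{randomcodingthm}, the only change being the smaller index set: substituting $N_U(C^*,V_{UXY})\leq 2^{-n[F_U-6\delta]}$, $N_X(C^*,V)\leq 2^{-n[F_X-6\delta]}$, and their analogues into the bracketed expressions and using the identity $I_V(\tilde X\wedge Z|XYU)+I_V(\tilde X\wedge XY|U)=I_V(\tilde X\wedge XYZ|U)$ reproduces $E_\beta$ up to $O(\delta)$ in the exponent, with the $I_V(X\wedge Y|U)$ contribution supplied by the retained pair-type factor $N_U$ (which is $\leq 1$) rather than by the crude bound used in~\eqref{upperboundMAC}. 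Minimizing over $V\in\V^{ex}_\beta$ and absorbing the polynomially-many types $|\P_n(\cdots)|$ into the slack gives, for each $\beta$, a confusion probability at most $2^{-n[E^{ex}_\beta-O(\delta)]}$; summing over $\beta\in\{X,Y,XY\}$ then yields $e(C^*,W)\leq 2^{-n[E_{ex}-5\delta]}$.

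The main obstacle is the dovetailing of the two bound families on one and the same code: the per-codeword bounds must be used only to annihilate the packing functions outside $\V^{ex}_\beta$, while the sharper objective $E_\beta$ must be recovered from the averaged bounds. Feeding the per-codeword bounds directly into the exponent would instead introduce the unwanted $-\min\{R_X,R_Y\}$ shift and collapse $I_V(X\wedge Y|U)$ into the positive part, producing a strictly weaker exponent. Verifying that the averaged packing functions genuinely inherit the zero pattern dictated by the integral per-codeword counts, and tracking the $\delta$-bookkeeping across all four packing bounds and the three error events, is the delicate part; the information-theoretic identities that convert the raw exponents into the $E_\beta$ form are the same as in the point-to-point case and are routine.
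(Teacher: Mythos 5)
Your proposal is correct and follows essentially the same route as the paper's own proof: take the expurgated code $C^*$ of Lemma~\ref{packing2}, use the per-codeword bounds \eqref{randomnew1}--\eqref{randomnew4} together with integrality of the counts to force the packing functions of $C^*$ to vanish outside the $\V^{ex}_\beta$ type sets (this is exactly the paper's Lemma~\ref{EXP-N-Prop-lem}), and then substitute the averaged bounds \eqref{expur-lem1}--\eqref{expur-lem4} into the error-probability representation restricted to the surviving types. Your observation that the $I_V(X\wedge Y|U)$ contribution must be supplied by the retained $N_U$ factor rather than by the crude $\min\{\cdot,1\}$ of \eqref{upperboundMAC} is, if anything, tracked more explicitly in your write-up than in the paper's final display.
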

\begin{proof}
The proof is provided in the Appendix.
\end{proof}

This exponential error bound can be universally obtained for all
MAC's with given input and output alphabets, since the choice of the
codewords does not depend on the channel.

In the following, we show that the bound in
Theorem~\ref{randomcodingthm} is at least as good as the best known
random coding bound, found in~\cite{Liu-RandomCoding}. For this
purpose, let us use the minimum equivocation decoding rule.
\begin{definition}
Given $\mathbf{u}$, for a multiuser code
\begin{equation*}
C = \{ (\mathbf{x}_i, \mathbf{y}_j,D_{ij}) : i=1,...M_X, j=1,...M_Y
\}
\end{equation*}
we say that the $D_{ij}$ are minimum equivocation decoding sets for
$\mathbf{u}$, if $\mathbf{z} \in D_{ij}$ implies
\begin{equation*}
H(\mathbf{x}_i \mathbf{y}_j|\mathbf{z} \mathbf{u}) = \min_{k,l}
H(\mathbf{x}_k \mathbf{y}_l|\mathbf{z} \mathbf{u}).
\end{equation*}
It can be easily observed that these sets are equivalent to
$\alpha$-decoding sets, where $\alpha(\mathbf{u}, \mathbf{x},
\mathbf{y}, \mathbf{z})$ is defined as
\begin{equation}
\alpha(V_{UXYZ}) \triangleq H_V(XY|ZU).
\end{equation}
Here, $V_{UXYZ}$ is the joint empirical distribution of
$(\mathbf{u}, \mathbf{x}, \mathbf{y}, \mathbf{z})$.
\end{definition}
\begin{theorem}\label{finalthm}
 For every finite set
$\U$, $\P_{XYU} \in \P(\X \times \Y \times \U)$ , $R_X \geq 0$, $R_Y
\geq 0$, and $W:\X \times \Y \rightarrow \Z$, and an appropriate
$\alpha$-decoder (minimum equivocation),
\begin{subequations}
\begin{eqnarray}
E^{r}_{\beta}(R_X,R_Y,W,P_{XYU}) \geq
E^{Liu}_{r\beta}(R_X,R_Y,W,P_{XYU})\;\;\;\;\;\;\;\;\;\;\;\;\;\;\;\;\;\;\;\;
\beta=X,Y,XY,\\
E^{T}_{\beta}(R_X,R_Y,W,P_{XYU}) \geq
E^{Liu}_{r\beta}(R_X,R_Y,W,P_{XYU})\;\;\;\;\;\;\;\;\;\;\;\;\;\;\;\;\;\;\;\;
\beta=X,Y,XY,\\
E^{ex}_{\beta}(R_X,R_Y,W,P_{XYU}) \geq
E^{Liu}_{r\beta}(R_X,R_Y,W,P_{XYU})\;\;\;\;\;\;\;\;\;\;\;\;\;\;\;\;\;\;\;\;
\beta=X,Y,XY.\label{betterLiu}
\end{eqnarray}
\end{subequations}
Hence
\begin{subequations}
\begin{eqnarray}
E_{r}(R_X,R_Y,W,P_{XYU}) \geq
E^{Liu}_r(R_X,R_Y,W,P_{XYU}),\label{r-Inequaltiy}\\
E_{T}(R_X,R_Y,W,P_{XYU}) \geq
E^{Liu}_r(R_X,R_Y,W,P_{XYU}),\label{t-Inequaltiy}\\
E_{ex}(R_X,R_Y,W,P_{XYU}) \geq
E^{Liu}_r(R_X,R_Y,W,P_{XYU}),\label{ex-Inequaltiy}
\end{eqnarray}
\end{subequations}
for all $P_{XYU} \in \P(\X \times \Y \times \U)$ satisfying $X-U-Y$.
Here, $E^{Liu}_{r}$ is the random coding exponent
of~\cite{Liu-RandomCoding}. $E^{Liu}_{r\beta}$ are also defined
in~\cite{Liu-RandomCoding} for $\beta=X,Y,XY$.
\end{theorem}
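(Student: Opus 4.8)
The plan is to first collapse all nine displayed inequalities to the single claim $E^r_\beta \ge E^{Liu}_{r\beta}$ for each $\beta\in\{X,Y,XY\}$, and then to prove that claim by exploiting the concrete form of the minimum-equivocation $\alpha$-function. The reduction rests on the observation that $E^r_\beta$, $E^T_\beta$, and $E^{ex}_\beta$ all minimize the \emph{same} objective $E_\beta$ (and its $L$-variant the $E^L_\beta$ of \eqref{EX-def}) over the constraint sets $\V^r_\beta\supseteq\V^T_\beta\supseteq\V^{ex}_\beta$. These inclusions hold because $\V^r_\beta$ imposes no $F$-constraints at all, while $\V^T_\beta$ and $\V^{ex}_\beta$ add $F$-constraints capped by $R_X+R_Y$ and by $\min\{R_X,R_Y\}\le R_X+R_Y$ respectively. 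Minimizing a fixed objective over a smaller feasible set can only raise the value, so $E^r_\beta\le E^T_\beta\le E^{ex}_\beta$; hence proving $E^r_\beta\ge E^{Liu}_{r\beta}$ forces the $T$- and $ex$-bounds as well. The aggregate inequalities \eqref{r-Inequaltiy}, \eqref{t-Inequaltiy}, \eqref{ex-Inequaltiy} then follow by taking the minimum over $\beta$, since $E_r=\min_\beta E^r_\beta$, $E_T=\min_\beta E^T_\beta$, $E_{ex}=\min_\beta E^{ex}_\beta$ (Theorems~\ref{randomcodingthm}, \ref{Typrandomcodingthm}, \ref{expurgatedthm}) and $E^{Liu}_r=\min_\beta E^{Liu}_{r\beta}$.

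Next I would reformulate the feasibility constraint for the minimum-equivocation decoder. Substituting $\alpha(V_{UXYZ})=H_V(XY|ZU)$ into the defining inequality of $\V^r_X$ and using that the marginal $V_{UYZ}$ is shared, the term $H_V(Y|ZU)$ cancels from both sides and $\alpha(V_{UXYZ})\ge\alpha(V_{U\tilde X YZ})$ collapses to $H_V(X|YZU)\ge H_V(\tilde X|YZU)$. Since the marginal constraints $V_{UX}=V_{U\tilde X}=P_{UX}$ give $H_V(X|U)=H_V(\tilde X|U)$, this is exactly $I_V(\tilde X\wedge YZ|U)\ge I_V(X\wedge YZ|U)$: under minimum-equivocation decoding the competing codeword is at least as correlated with the pair $(Y,Z)$ as the transmitted one.

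With this in hand, the core of the $\beta=X$ case is a chain-rule comparison of information quantities. For any feasible $V\in\V^r_X$ I would write $I(\tilde X\wedge XYZ|U)=I(\tilde X\wedge YZ|U)+I(\tilde X\wedge X|YZU)\ge I(\tilde X\wedge YZ|U)\ge I_V(X\wedge YZ|U)$, the first step by nonnegativity of conditional mutual information and the second by the decoder inequality just derived. Because $|\cdot|^+$ is nondecreasing and the extra term $I_V(X\wedge Y|U)$ in \eqref{EX-def} is nonnegative, the objective $E_X(V)$ dominates the corresponding Liu--Hughes objective evaluated at the marginal $V_{UXYZ}$; passing to minima yields $E^r_X\ge E^{Liu}_{rX}$, and $\beta=Y$ is symmetric.

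The step I expect to be the main obstacle is $\beta=XY$. There the marginal entropies $H_V(XY|U)$ and $H_V(\tilde X\tilde Y|U)$ need not agree, so the constraint $H_V(XY|ZU)\ge H_V(\tilde X\tilde Y|ZU)$ does not reduce to a clean single comparison of the form above. My intention is to absorb the discrepancy using precisely the cross term $I_V(\tilde X\wedge\tilde Y|U)$ that the objective defining $E_{XY}$ already carries: expanding $I(\tilde X\tilde Y\wedge XYZ|U)+I_V(\tilde X\wedge\tilde Y|U)$ and regrouping should reproduce $I_V(\tilde X\tilde Y\wedge Z|U)$ plus nonnegative remainders, after which the decoder constraint again delivers domination over the Liu--Hughes joint term. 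The remaining delicate point is to match the divergence term $D(V_{Z|XYU}\,\|\,W|V_{XYU})$ together with $I_V(X\wedge Y|U)$ against the product-form reference underlying $E^{Liu}_{r\beta}$, exploiting the Markov structure $X-U-Y$ of $P_{XYU}$; once these identifications are made, the rest is bookkeeping with $\delta$-free information identities.
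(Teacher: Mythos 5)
Your proposal is correct and follows essentially the same route as the paper: for minimum-equivocation decoding you cancel $H_V(Y|ZU)$ from the constraint, use the equal marginals $V_{UX}=V_{U\tilde X}=P_{UX}$ to turn it into $I_V(\tilde X\wedge YZ|U)\ge I_V(X\wedge YZ|U)$, then dominate the Liu--Hughes rate term via the chain rule, monotonicity of $|\cdot|^+$, and the inclusion of constraint sets, which is exactly the paper's argument, and your $XY$-case regrouping closes as planned since $I_V(\tilde X\tilde Y\wedge Z|U)+I_V(\tilde X\wedge\tilde Y|U)=H_V(\tilde X|U)+H_V(\tilde Y|U)-H_V(\tilde X\tilde Y|ZU)$, after which the matching marginals and the decoder constraint yield precisely the inequality the paper states for $\V^r_{XY}$. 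Your only organizational difference is deriving the $T$ and $ex$ cases from the $r$ case through the inclusions $\V^{ex}_\beta\subseteq\V^T_\beta\subseteq\V^r_\beta$ and monotonicity of minimization, a clean and explicit packaging of what the paper handles with ``similarly.''
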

\begin{proof}
The proof is provided in the Appendix.
\end{proof}
We expect our typical random coding and expurgated bound to be
strictly better than the one in~\cite{Liu-RandomCoding} at low
rates. This is so, because all inequalities
in~\eqref{VT-X-def}-\eqref{VT-XY-def}
and~\eqref{VEX-X-def}-\eqref{VEX-XY-def} will be active at zero
rates, and thus (due to continuity) at sufficiently low rates.
Although we have not been able to prove this fact rigorously, in the
next section, we show that this is true by numerically evaluating
the expurgated bound for different rate pairs.

\section{Numerical result}
\label{sec:num} In this section, we calculate the exponent derived
in Theorem~\ref{expurgatedthm}
for a multiple-access channel very similar to the one used
in~\cite{Liu-RandomCoding}. This example shows that strict
inequality can hold in~\eqref{betterLiu}. Consider a discrete
memoryless MAC with $\X = \Y = \Z = \{ 0 , 1 \}$ and the transition
probability given in the following table.

\begin{center}
  \begin{tabular}{ l  c  c  || r  }
  \hline
    $x$ & $y$ & $z$ & $W(z|xy)$ \\ \hline
    0 & 0 & 0 & 0.99\;\;\;\\
    0 & 0 & 1 & 0.01\;\;\;\\
    0 & 1 & 0 & 0.01\;\;\;\\
    0 & 1 & 1 & 0.99\;\;\;\\
    1 & 0 & 0 & 0.01\;\;\;\\
    1 & 0 & 1 & 0.99\;\;\;\\
    1 & 1 & 0 & 0.50\;\;\;\\
    1 & 1 & 1 & 0.50\;\;\;\\ \hline
    \end{tabular}
\end{center}
First, we choose some time-sharing alphabet $\U$ of size $|\U|=4$.
Then some channel input distribution $P_UP_{X|U}P_{Y|U}$ is chosen
randomly. The following table gives numerical values of the random
coding exponent of~\cite{Liu-RandomCoding}, and the expurgated
exponent we have obtained for selected rate pairs.
\begin{center}
  \begin{tabular}{ l  c  || c   r  }
  \hline
    $R_X$ & $R_Y$ & $E_{ex}(R_X,R_Y,W,P_{UXY})$ & $E_{r}^{Liu}(R_X,R_Y,W,P_{UXY})$ \\ \hline
    0.01 & 0.01 & 0.2672 & 0.2330\;\;\;\\
    0.01 & 0.02 & 0.2671 & 0.2330\;\;\;\\
    0.01 & 0.03 & 0.2671 & 0.2330\;\;\;\\
    0.02 & 0.01 & 0.2458 & 0.2230\;\;\;\\
    0.02 & 0.02 & 0.2379 & 0.2230\;\;\;\\
    0.02 & 0.05 & 0.2379 & 0.2230\;\;\;\\
    0.03 & 0.01 & 0.2279 & 0.2130\;\;\;\\
    0.03 & 0.03 & 0.2183 & 0.2130\;\;\;\\
    0.04 & 0.01 & 0.2123 & 0.2030\;\;\;\\
    0.04 & 0.04 & 0.2040 & 0.2030\;\;\;\\
    0.05 & 0.05 & 0.1930 & 0.1930\;\;\;\\
    0.06 & 0.01 & 0.1856 & 0.1830\;\;\;\\
    0.06 & 0.06 & 0.1830 & 0.1830\;\;\;\\
    0.07 & 0.01 & 0.1740 & 0.1730\;\;\;\\
    0.07 & 0.07 & 0.1730 & 0.1730\;\;\;\\ \hline
  \end{tabular}
\end{center}
As we see in the table, in the low rate regime, we have strictly
better results in comparison with the results
of~\cite{Liu-RandomCoding}. For larger rate pairs, the inequalities
containing $\min\{R_X,R_Y\}$ will not be active anymore, thus, we
will end up with result similar to~\cite{Liu-RandomCoding}.


\section{Conclusions}
\label{sec:conclusion} We studied a unified framework to obtain all
known lower bounds (random coding, typical random coding and
expurgated bound) on the reliability function of a point-to-point
discrete memoryless channel. We showed that the typical random
coding bound is the typical performance of the constant composition
code ensemble. By using a similar idea with a two-user discrete
memoryless multiple-access channel, we derived three lower bounds on
the reliability function. The first one (random coding) is identical
to the best known lower bound on the reliability function of DM-MAC.
We also showed that the random coding bound is the average
performance of the constant composition code ensemble. The second
bound (typical random coding) is the typical performance of the
constant composition code ensemble. To derive the third bound
(expurgated), we eliminated some of the codewords from the codebook
with a larger rate. This is the first bound of its type that
explicitly uses the method of expurgation in a multi-user
transmission system. We showed that the exponent of the typical
random coding and expurgated bounds are greater than or equal to the
exponent of the known random coding bounds for all rate pairs. By
numerical evaluation of the random coding and the expurgated bounds
for a simple symmetric MAC, we showed that, at low rates, the
expurgated bound is strictly larger. All these bounds can be
universally obtained for all discrete memoryless MACs with given
input and output alphabets.

\appendix


\subsection*{1. Point to Point Proofs} This section contains the proof
of all lemmas and theorems related to point to point result.
\begin{proof}\textbf{(Lemma~\ref{P2P-random-Packing})}
We use the method of random selection. Define $M$ such that
\begin{eqnarray*}
2^{n(R-\delta)} \leq M \leq 2^{nR}.
\end{eqnarray*}
In the following, we obtain the expectation of the packing function
over the constant composition code ensemble. The expectation of
$\pi(X^M,V_{X\tilde{X}})$ can be obtained as follows:
\begin{align}
\mathbb{E} \left(\pi(X^M,V_{X\tilde{X}})\right) &=  \frac{1}{M}
\sum_{i=1}^{M} \sum_{j \neq i} \mathbb{E}\left(1_{T_{V_{X
\mathcal{X}}}}(X_i,X_j)\right) =  \frac{1}{M} \sum_{i=1}^{M} \sum_{j
\neq i} \mathbb{P}\left(X_j
\in T_{V_{\tilde{X}|X}}(X_i) \right)\nonumber\\
&= (M-1) \mathbb{P}\left( X_2 \in T_{V_{\tilde{X}|X}}(X_1)\right)\
\leq 2^{n(R-I_V(X \wedge \tilde{X})+\delta)}.\label{p2p:
random-packing1}
\end{align}
Similarly, it can be shown that for sufficiently large $n$,
\begin{align}
\mathbb{E} \left(\pi(X^M,V_{X\tilde{X}})\right) \geq 2^{n(R-I_V(X
\wedge \tilde{X})-\delta)}.
\end{align}
The expectation of $\lambda $ over the ensemble can be written as
\begin{align}
\mathbb{E} \left(\lambda (X^M, V_{X\tilde{X}\hat{X}})\right) =
\frac{1}{M} \sum_{i=1}^{M}\sum_{j \neq i} \sum_{k \neq i,j}
\mathbb{P} \left((X_i,X_j, X_k) \in
T_{V_{X\tilde{X}\hat{X}}}\right).
\end{align}
Since
\begin{align}
\frac{2^{n[H(\tilde{X}\hat{X}|X)-\delta]}}{2^{nH(\tilde{X})}2^{nH(\hat{X})}}
\leq \mathbb{P} \left((X_i,X_j, X_k) \in
T_{V_{X\tilde{X}\hat{X}}}\right) \leq
\frac{2^{nH(\tilde{X}\hat{X}|X)}}{2^{n[H(\tilde{X})-\delta]}2^{n[H(\hat{X})-\delta]}},
\end{align}
it can be concluded that
\begin{align}
2^{n[E_S(R,V_{X\tilde{X}\hat{X}}) - 2 \delta]} \leq \mathbb{E}
\left(\lambda (X^M, V_{X\tilde{X}\hat{X}})\right) \leq
2^{n[E_S(R,V_{X\tilde{X}\hat{X}}) +
2\delta]},\label{p2p:upper-expectation-lambda}
\end{align}
where
\begin{align}
E_S(R,V_{X\tilde{X}\hat{X}}) \triangleq 2R - I(X \wedge \tilde{X}) -
I(\hat{X} \wedge \tilde{X}X).
\end{align}
By using~\eqref{p2p: random-packing1} and markov inequality, it can
be concluded that
\begin{align}
\mathbb{P} \left( \pi(X^M,V_{X\tilde{X}}) \geq 2^{n(R-I_V(X \wedge
\tilde{X})+2\delta)} \text{ for some } V_{X\tilde{X}}\right) \leq
\sum_{V_{X\tilde{X}}} \frac{\mathbb{E}
\left(\pi(X^{M},V_{X\tilde{X}})\right)}{2^{n(R-I_V(X \wedge
\tilde{X})+2\delta)}} \leq
2^{-n\frac{\delta}{2}},\label{Random-Packing-Proof-1}
\end{align}
therefore, there exists at least one code, $C^r$, with $M$ codewords
satisfying
\begin{align}
\pi(C^r,V_{X\tilde{X}}) \leq 2^{n(R-I_V(X \wedge
\tilde{X})+2\delta)}. \label{Random-Packing-Proof-2}
\end{align}
\end{proof}

\begin{proof}\textbf{(Lemma~\ref{lm:p2p_typical})}
To prove that a specific property holds for almost all codes, with
certain number of codewords, in the constant composition code
ensemble, we use a second-order argument method. We already have
obtained upper and lower bounds on the expectation of the desired
function over the entire ensemble. In the following, we derive an
upper bound on the variance of the packing function. Finally, by
using the Chebychev's inequality, we prove that the desired property
holds for almost all codes in the ensemble.

To find the variance of the packing function, let us define $U_{ij}
\triangleq 1_{T_{V_{X \tilde{X}}}}(X_i,X_j)$, and $Y_{ij} \triangleq
U_{ij} + U_{ji} $. We can rewrite $\pi(X^M,V_{X\tilde{X}})$ as
\begin{align}
\pi(X^M,V_{X\tilde{X}}) = \frac{1}{M} \sum_{i=1}^{M} \sum_{j \neq i}
U_{ij} = \frac{1}{M} \sum_{i=1}^{M} \sum_{j < i} \left( U_{ij} +
U_{ji}\right) = \frac{1}{M} \sum_{i=1}^{M} \sum_{j < i} Y_{ij}.
\end{align}
It is easy to check that $Y_{ij}$'s are identically distributed
pairwise independent random variables. Therefore, the variance of
$\pi(X^M,V_{X\tilde{X}})$ can be written as
\begin{align}
Var \left( \pi(X^M,V_{X\tilde{X}}) \right) = \frac{1}{M^2}
\sum_{i=1}^{M} \sum_{j < i} Var (Y_{ij} ) = \frac{1}{M^2} {M \choose
2} Var (Y_{21}).
\end{align}
To find the variance of $Y_{21}$, let us consider the following two
cases for $V_{X\tilde{X}}$:
\begin{itemize}
\item  $V_{X\tilde{X}}$ is a symmetric distribution. In this case $U_{12}=U_{21}$, therefore,

$$
Y_{21}= \left\{
\begin{array}{lc}
2 & \text{with probability   } p \leq  2^{-n[I(X \wedge \tilde{X})-\delta]} \\
0 & \;\;\;\;\;\;\text{with probability   } 1-p
\end{array},\right.
$$
and the variance is upper bounded by
\begin{align}
Var(Y_{21}) \leq E(Y_{21}^2) = 4 \times 2^{-n[I(X \wedge \tilde{X})-
\delta]} \label{var1}
\end{align}
\item  $V_{X\tilde{X}}$ is not a symmetric distribution. In this case, if $U_{ij} = 1\Rightarrow U_{ji}=0$. Therefore,
\begin{align}
\mathbb{P} \left(Y_{12} =1\right) = \mathbb{P} \left( U_{12} =1
\text{ or } U_{21} =1\right) = \mathbb{P} \left( U_{12} =1\right) +
\mathbb{P} \left( U_{21} =1\right) \leq 2 \times 2^{-n[I(X \wedge
\tilde{X})- \delta]},
\end{align}
therefore,
\begin{align}
Var(Y_{21}) \leq E(Y_{21}^2) = 2 \times 2^{-n[I(X \wedge \tilde{X})-
\delta]}.\label{var2}
\end{align}
\end{itemize}
By using~\eqref{var1}, and~\eqref{var2}, we have
\begin{align}
Var \left( \pi(X^M,V_{X\tilde{X}}) \right) \leq \frac{1}{M^2} {M
\choose 2} 4 \times 2^{-n[I(X \wedge \tilde{X})-\delta]} \leq 2
\times 2^{-n[I(X \wedge \tilde{X})-\delta]},
\end{align}
for any $V_{X\tilde{X}} \in \P(\X \times \X)$. Now, by using
Chebychev's inequality,
\begin{align}
&\mathbb{P}\left( \left|\pi(X^M,V_{X\tilde{X}}) - \mathbb{E}\left(\pi(X^M,V_{X\tilde{X}})\right)\right|  \geq 2^{n\delta} \text{ for some } V_{X\tilde{X}}\right)\nonumber\\
& \;\;\;\;\;\; \leq \sum_{V_{X\tilde{X}}} \mathbb{P}\left( \left|\pi(X^M,V_{X\tilde{X}}) - \mathbb{E}\left(\pi(X^M,V_{X\tilde{X}})\right)\right|  \geq 2^{n\delta} \right) \nonumber\\
& \;\;\;\;\;\; \leq \sum_{V_{X\tilde{X}}} \frac{Var\left(\pi(X^M,V_{X\tilde{X}}) \right)}{2^{2n\delta}} \leq \sum_{V_{X\tilde{X}}} \frac{2 \times 2^{-n [I(X \wedge \tilde{X})- \delta]}}{2^{2n\delta}} \nonumber\\
& \;\;\;\;\;\; =\sum_{V_{X\tilde{X}}} 2 \times 2^{-n (I(X \wedge
\tilde{X}) + \delta)} \leq 2^{-n\frac{\delta}{2}}, \;\;\;\;\;\;\;
\text{ for sufficiently large $n$.}\label{typical-con1}
\end{align}
Moreover, by using~\eqref{p2p:upper-expectation-lambda} and
Markov's inequality, it can be concluded that
\begin{align}
&\mathbb{P}\left( \lambda (X^M, V_{X\tilde{X}\hat{X}}) \geq
2^{n[E_S(R,V_{X\tilde{X}\hat{X}}) +4 \delta]} \text{ for some }
V_{X\tilde{X}\hat{X}}\right) \leq \sum_{V_{X\tilde{X}\hat{X}}}
 \frac{\mathbb{E} \lambda (X^M,
V_{X\tilde{X}\hat{X}})}{2^{n[E_S(R,V_{X\tilde{X}\hat{X}}) +4
\delta]}} \leq 2^{-n\delta}.\label{typical-con2}
\end{align}
Now, by combining~\eqref{typical-con1} and~\eqref{typical-con2} and using the bound on $\mathbb{E} \left(\pi(X^M, V_{X\tilde{X}})\right)$, we
conclude that for any $V_{X\tilde{X}} \in \P(\X \times \X)$, any
$V_{X\tilde{X}\hat{X}} \in \P(\X \times \X \times \X)$, for
sufficiently large $n$
\begin{align}
2^{n (R - I(X \wedge \tilde{X})- \delta)} \leq
\pi(X^M,V_{X\tilde{X}}) \leq 2^{n (R - I(X \wedge \tilde{X}) +
\delta)}, \label{p2p:packing-T-cond}\nonumber\\
 \lambda (X^M, V_{X\tilde{X}\hat{X}}) \leq 2^{n[E_S(R,V_{X\tilde{X}\hat{X}}) +4
 \delta]},\quad \quad\quad
\end{align}
with probability $> 1 - 2 \times 2^{-n\frac{\delta}{2}}$. We put all
the codebooks satisfying~\eqref{p2p:packing-T-cond} in a set called
$\C^T$.
\end{proof}

\begin{proof}\textbf{(Lemma~\ref{lm:p2p_expurgated})}
Consider the code $C^r \triangleq \left\{\mathbf{x}_1,
\mathbf{x}_2,..., \mathbf{x}_M\right\}$ whose existence is asserted
in random coding packing lemma.
 Let
us define
\begin{align}
\Pi(C^r)\triangleq \sum_{V_{X\tilde{X}}}
 2^{-n(R -I_V(X \wedge
\tilde{X}) +3 \delta)} \pi(C^r,V_{X\tilde{X}}).
\end{align}
Note that using Lemma~\ref{P2P-random-Packing} and using the fact
that $\Pi(C^r)= \frac{1}{M} \sum_{i=1}^{M}
\left\{\sum_{V_{X\tilde{X}}} |T_{V_{\tilde{X}|X}}(\mathbf{x}_i) \cap
C^r| 2^{-n(R -I_V(X \wedge \tilde{X}) +3 \delta)} \right\}$, it can
be concluded that
\begin{align}
\Pi(C^r) \leq \sum_{V_{X\tilde{X}}} 2^{-n(R -I_V(X \wedge \tilde{X})
+3 \delta)} 2^{n(R -I_V(X \wedge \tilde{X}) +2 \delta)} <
\frac{1}{2}.
\end{align}
As a result, it can be concluded that there exists $M^{*} \geq
\frac{M}{2}$ codewords in $C^r$ satisfying
\begin{align}
\sum_{V_{X\tilde{X}}} |T_{V_{\tilde{X}|X}}(\mathbf{x}_i) \cap C^r|
2^{-n(R -I_V(X \wedge \tilde{X}) +3 \delta)} < 1.
\label{P2P-Ex-proof1}
\end{align}
Let us call this subset of the code as $C^{ex}$. Without loss of
generality, we assume $C^{ex}$ contains the first $M^*$ sequences of
$C^r$, i.e., $C^{ex}=\{\mathbf{x}_1,
\mathbf{x}_2,...,\mathbf{x}_{M^{*}}\}$. Since
\begin{align}
|T_{V_{\tilde{X}|X}}(\mathbf{x}_i) \cap C^{ex}| \leq
|T_{V_{\tilde{X}|X}}(\mathbf{x}_i) \cap C^r| \;\;\; \forall
\mathbf{x}_i \in C^{ex},
\end{align}
it can be concluded that for all $\mathbf{x}_i \in C^{ex}$,
\begin{align}
\sum_{V_{X\tilde{X}}} |T_{V_{\tilde{X}|X}}(\mathbf{x}_i) \cap
C^{ex}| 2^{-n(R -I_V(X \wedge \tilde{X}) +3 \delta)} < 1.
\end{align}
Since all the terms in the summation are non-negative terms, we
conclude that
\begin{align}\label{P2P-proof-Ex1}
|T_{V_{\tilde{X}|X}}(\mathbf{x}_i) \cap C^{ex}|  < 2^{n(R -I_V(X
\wedge \tilde{X}) +3 \delta)},
\end{align}
for all $V_{X\tilde{X}} \in \P (\X \times \X)$, and all
$\mathbf{x}_i \in C^{ex}$. Also, by~\eqref{P2P-proof-Ex1}, it can be
concluded that for all $V_{X\tilde{X}} \in \P (\X \times \X)$
\begin{align}
\pi(C^{ex},V_{X\tilde{X}}) = \frac{1}{M^*} \sum_{i=1}^{M^*}
|T_{V_{\tilde{X}|X}}(\mathbf{x}_i) \cap C^{ex}| \leq 2^{n(R -I_V(X
\wedge \tilde{X}) +3 \delta)}.
\end{align}
\end{proof}
\begin{proof}\textbf{(Fact~\ref{fc:p2p_random_coding})} We will use
the result of Lemma~\ref{P2P-random-Packing} and the relation
between
 the probability of error and the packing functions.
Let $X^{M} \triangleq \Big(X_1,X_2,...,X_{M} \Big)$
be independent sequences of independent random variable, where
$X_i$s are uniformly distributed on $T_{P}$.

\textbf{(Upper Bound):}
Taking expectation on both sides of (\ref{upperboundA}), using
Lemma~\ref{P2P-random-Packing}
and using the continuity of
information measures, it can be concluded that
\begin{align}
\mathbb{E}\left(e(X^M,W)\right) &\leq
\sum_{\substack{V_{X\tilde{X}Y} \in \P_n^r}} 2^{-n[D(V_{Y|X}||W|P)+
|I(\tilde{X} \wedge XY) - R|^+  - \delta]}\nonumber\\
&\leq 2^{-n[E_r(R,P,W)- 2\delta]}\label{PP-avg-rand-upper}
\end{align}
whenever $n \geq n_1(|\X|,|\Y|,\delta)$, where
\begin{equation}
E_r(R,P,W) \triangleq \min_{\substack{V_{X\tilde{X}Y}} \in \P^r}
D(V_{Y|X} || W |P) + |I_V(XY \wedge \tilde{X})-R|^+,
\label{P2P-Er-Def}
\end{equation}
and $\P^r$ is defined in~\eqref{P^r-def}.\\ \textbf{(Lower Bound):}
Taking expectation on both sides of (\ref{lowerboundA}), and using
Lemma~\ref{P2P-random-Packing} we have

\begin{align}
\bar{P_e} &= \mathbb{E}e(X^M,W) \geq \sum_{\substack{V_{X\tilde{X}Y}
\in \P_n^r}} 2^{-n[D(V_{Y|X}||W|P)+I_V(\tilde{X} \wedge
Y|X)+\delta]}
\left| 2^{n(R-I(X \wedge \tilde{X}) -\delta)}  - \right. \hspace{1in} \nonumber\\
&  \hspace{2.5in} \sum_{\substack{V_{X\tilde{X}\hat{X}Y}:\\V_{X\hat{X}Y}
=V_{X\tilde{X}Y}}} 2^{-n[I_V(\hat{X} \wedge Y|X \tilde{X})]}
 2^{n(2R-I(X \wedge \tilde{X})-I(\hat{X} \wedge  X \tilde{X}) +4 \delta)}  \Big|^+  \\
&= \sum_{\substack{V_{X\tilde{X}Y} \in \P_n^r}}
2^{-n[D(V_{Y|X}||W|P)+I_V(\tilde{X} \wedge XY)-R+2\delta]} \left| 1
- \sum_{\substack{V_{X\tilde{X}\hat{X}Y}:\\V_{X\hat{X}Y}
=V_{X\tilde{X}Y}}} 2^{-n[I_V(\hat{X} \wedge X \tilde{X} Y)-R
+3\delta ]}   \right|^+ \label{beforelemma}
\end{align}
Toward further simplification of this expression, we have the
following lemma.
\begin{lemma}
\label{intermmediatelemma}
\begin{align}
\min_{\substack{V_{X\tilde{X}\hat{X}Y}:\\V_{X\hat{X}Y}
=V_{X\tilde{X}Y} }} I(\hat{X} \wedge X\tilde{X}Y)= I(\tilde{X}
\wedge XY).
\end{align}
\end{lemma}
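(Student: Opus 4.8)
The plan is to prove the equality by a matching lower bound and an explicit achievability construction, with all information quantities understood with respect to the feasible joint distribution $V_{X\tilde{X}\hat{X}Y}$. The starting point is the chain rule for mutual information, grouping the conditioning so as to isolate the term that the marginal constraint controls:
\begin{equation*}
I(\hat{X} \wedge X\tilde{X}Y) = I(\hat{X} \wedge XY) + I(\hat{X} \wedge \tilde{X} | XY).
\end{equation*}
The first term on the right depends only on the $(X,\hat{X},Y)$-marginal of $V_{X\tilde{X}\hat{X}Y}$. The feasibility constraint $V_{X\hat{X}Y}=V_{X\tilde{X}Y}$ forces this marginal to coincide with the prescribed $(X,\tilde{X},Y)$-marginal, so $I(\hat{X} \wedge XY) = I(\tilde{X} \wedge XY)$ for every feasible $V$. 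Since the second term is a conditional mutual information and hence nonnegative, every feasible $V$ obeys $I(\hat{X} \wedge X\tilde{X}Y) \ge I(\tilde{X} \wedge XY)$, which yields the lower bound on the minimum.

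For achievability I would exhibit a feasible $V$ under which $\hat{X}$ and $\tilde{X}$ are conditionally independent given $(X,Y)$, so that $I(\hat{X} \wedge \tilde{X} | XY)=0$. The natural candidate is the conditional-product coupling
\begin{equation*}
V_{X\tilde{X}\hat{X}Y}(x,\tilde{x},\hat{x},y) \triangleq V_{XY}(x,y)\, V_{\tilde{X}|XY}(\tilde{x}|x,y)\, V_{\tilde{X}|XY}(\hat{x}|x,y),
\end{equation*}
obtained by drawing $\hat{X}$ from the very same conditional kernel $V_{\tilde{X}|XY}$ as $\tilde{X}$, independently given $(X,Y)$. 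I would then verify the two marginal requirements directly: summing over $\hat{x}$ returns the prescribed $V_{X\tilde{X}Y}$, while summing over $\tilde{x}$ returns $V_{X\tilde{X}Y}$ evaluated at $\hat{x}$, so that $V_{X\hat{X}Y}=V_{X\tilde{X}Y}$. Hence this $V$ is feasible, its conditional term vanishes by construction, and it attains the lower bound with equality, giving the claimed minimum.

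I do not anticipate a substantive obstacle; the only points requiring care are bookkeeping. First, one must apply the chain rule with the conditioning grouped on $(X,Y)$ rather than on $X$ alone, since it is precisely this grouping that makes the surviving term a marginal-controlled quantity. Second, if the minimization in~\eqref{beforelemma} is read over types in $\P_n$ rather than over all of $\P$, the conditional-product distribution need not be an exact $n$-type; however, the surrounding argument in Fact~\ref{fc:p2p_random_coding} already invokes continuity of the information measures, so one may minimize over $\P$ and approximate the optimizing kernel within $\P_n$ at the cost of an $o(1)$ perturbation absorbed into the $\delta$ terms. With these caveats handled, the lower bound and the achievability together establish the equality.
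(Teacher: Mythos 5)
Your proof is correct and follows essentially the same route as the paper's: the chain rule decomposition $I(\hat{X} \wedge X\tilde{X}Y) = I(\hat{X} \wedge XY) + I(\hat{X} \wedge \tilde{X}|XY)$ plus nonnegativity for the lower bound, and the identical conditional-product coupling $V_{XY}V_{\tilde{X}|XY}V_{\tilde{X}|XY}$ for achievability. Your additional remark about approximating the optimizer within $\P_n$ via continuity is a sensible piece of bookkeeping that the paper leaves implicit.
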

\begin{proof}
Note that, for any $V_{X\tilde{X}\hat{X}Y}$,
\begin{align}
I(\hat{X} \wedge X\tilde{X}Y) = I(\hat{X} \wedge XY) + I(\hat{X}
\wedge \tilde{X}|XY) \geq I(\hat{X} \wedge XY),
\end{align}
therefore,
\begin{align}
\min_{\substack{V_{X\tilde{X}\hat{X}Y}:\\V_{X\hat{X}Y} =
V_{X\tilde{X}Y} }} I(\hat{X} \wedge X\tilde{X}Y) \geq  I(\hat{X}
\wedge XY) = I(\tilde{X} \wedge XY).\label{XXXY1>}
\end{align}
Now, consider $V^*_{X\tilde{X}\hat{X}Y}$ defined as
\begin{align}
V^*_{X\tilde{X}\hat{X}Y}(x, \tilde{x}, \hat{x}, y) =
V_{\tilde{X}|XY}(\tilde{x}|x,y) V_{\tilde{X}|XY}(\hat{x}|x,y)
V_{XY}(x,y).
\end{align}
Note that $V^*_{X\hat{X}Y} = V^*_{X\tilde{X}Y}$, and $\tilde{X} -
(X,Y) - \hat{X}$. Therefore,
\begin{align}
I_{V^*}(\hat{X} \wedge X\tilde{X}Y) = I_V(\hat{X} \wedge XY) =
I_V(\tilde{X} \wedge XY).\label{lemma-XXXY1}
\end{align}
By combining~\eqref{XXXY1>} and~\eqref{lemma-XXXY1}, the proof is
complete.
\end{proof}

Therefore, using the above lemma, ~\eqref{beforelemma} can be rewritten as
\begin{align}
\bar{P_e} &\geq
\sum_{\substack{V_{X\tilde{X}Y} \in \P_n^r \\
I(\tilde{X} \wedge XY)
> R + 3\delta}} 2^{-n[D(V_{Y|X}||W|P)+I_V(\tilde{X} \wedge XY) - R +3\delta
]}.
\end{align}
By using the continuity of information measures, it can be concluded
that
\begin{align} & \mathbb{E}\left(e(X^M,W)\right) \geq
2^{-n[E_L(R,P,W)+4 \delta]},\qquad \qquad \text{for sufficient large
 }\;n
\end{align}
where
\begin{equation}
E_L(R,P,W) \triangleq \min_{\substack{V_{X\tilde{X}Y} \in \P^r
\\ I(\tilde{X} \wedge XY)
\geq  R}} D(V_{Y|X} || W |P) + I_V(XY \wedge \tilde{X})-R.
\end{equation}
Now, by using Markov inequality and~\eqref{PP-avg-rand-upper}, we
conclude that
\begin{align}
\mathbb{P} \left(e(X^M,W) \geq 2^{-n[E_r(R,P,W)- 3\delta]}\right)
\leq \frac{\mathbb{E}\left(e(X^M,W)\right)}{2^{-n[E_r(R,P,W)-
3\delta]}} \leq 2^{-n\delta}.
\end{align}
Therefore, with probability greater than $1- 2^{-n\delta}$, any
selected code with $M$ codewords form the constant composition code
ensemble satisfies the desired property. Let us call one of these
codebooks as $C^r$.
\end{proof}

\begin{proof}\textbf{(Corollary~\ref{cor:erl_eq_er})}
Consider the input distribution $P^* \in \P(\X)$ maximizing the
random coding bound, i.e.,
\begin{align}
P^* \triangleq \arg\max_{P \in \P(\X)} E_r(R,P,W).
\end{align}
Let us define
\begin{align}
V^*_{X\tilde{X}Y} \triangleq \arg\min_{V_{X\tilde{X}Y}}
E_r(R,P^*,W).
\end{align}
For any $R \leq R_{crit}$, the random coding bound is a straight
line with slope $-1$, and the term in $|\cdot|^+$ is active.
Therefore,
\begin{align}
E_r(R,P^*,W) = D(V^*_{Y|X} || W |P^*) + I_{V^*}(\tilde{X} \wedge
XY)-R.
\end{align}
Here, $ I_{V^*}(\tilde{X} \wedge XY) \geq R$. It is clear that
$V^*_{X\tilde{X}Y}$ is the minimizing distribution in
$E_{rL}(R,P^*,W)$, and as a result
\begin{align}
E_{rL}(R,P^*,W) = E_r(R,P^*,W).
\end{align}
\end{proof}

\begin{proof}\textbf{(Theorem~\ref{th:p2p_typical})}
In the proof of Fact~\ref{fc:p2p_random_coding}, we used the lower
and upper bounds on the expected value of he first-order packing
functions and an upper bound on the expected value of the
second-order packing functions. In the following, we use similar
techniques on the packing function of almost every codebook in the
ensemble by using the bounds obtained in Lemma~\ref{lm:p2p_typical}.
Consider the code $C$ whose existence is asserted in the typical
random coding packing lemma. For all $V_{X\tilde{X}} \in \P (\X
\times \X)$, we have
\begin{align}\label{lemma2-formula1}
\frac{1}{M} \sum_{i=1}^{M} |T_{V_{\tilde{X}|X}}(\mathbf{x}_i) \cap
C| \leq 2^{n(R-I_V(X \wedge \tilde{X})+2 \delta)}.
\end{align}
By multiplying both sides of inequality~\eqref{lemma2-formula1} by
$M$, and using the proper upper bound on the number of sequences in
$C$, we conclude that
\begin{equation}\label{lemma2-formula2}
|T_{V_{\tilde{X}|X}}(\mathbf{x}_i) \cap C| \leq 2^{n(2R-I_V(X \wedge
\tilde{X})+2\delta)}\;\;\;\;\;\;\;\;\;\forall i=1,...,M,
\end{equation}
for all $V_{X\tilde{X}} \in \P (\X \times \X)$. We will obtain a
higher error exponent for almost all codes by removing certain types
from the constraint set $\P_n^r$. Consider any $V_{X\tilde{X}} \in
\P (\X \times \X)$ satisfying $I_V(X \wedge \tilde{X})
> 2(R +\delta)$. By~\eqref{lemma2-formula2},
\begin{align}\label{typical-condition2}
|T_{V_{\tilde{X}|X}}(\mathbf{x}_i) \cap C| = 0 \text{  for all $i$
}\Rightarrow \pi(C,V_{X\tilde{X}})= 0.
\end{align}

\noindent \textbf{Upper bound:} Hence, by using (\ref{upperboundA})
on $C$, and by using the result of Lemma \ref{lm:p2p_typical}, we
have
\begin{align}
e(C,W) &\leq \sum_{\substack{V_{X\tilde{X}Y} \in \P_n^T(\delta)}}
2^{-n[D(V_{Y|X}||W|P) + |I_V(XY \wedge \tilde{X}) - R|^+ -
2\delta]}\nonumber
\end{align}
where
\begin{eqnarray}
\P_n^T(\delta) \triangleq \big\{V_{X\tilde{X}Y} \in \P_n(\X \times \X \times
\Y): \;V_X=V_{\tilde{X}}=P,\; I_V(X \wedge \tilde{X}) \leq 2R+2 \delta
\;,\;\alpha(P ,V_{Y|\tilde{X}}) \leq \alpha(P ,V_{Y|X})\big\}.
\end{eqnarray}
Using the continuity of
information measures, the upper bound as given by the theorem
follows.

\noindent \textbf{Lower bound:} Using (\ref{lowerboundA}) on $C$ and
using Lemma \ref{lm:p2p_typical}, we have
\begin{align}
e(C,W) &\geq \sum_{\substack{V_{X\tilde{X}Y} \in \P_n^r}}
2^{-n[D(V_{Y|X}||W|P)+I_V(\tilde{X} \wedge Y|X)+\delta]} \left|
\pi(C,V_{X\tilde{X}}) -
\sum_{\substack{V_{X\tilde{X}\hat{X}Y}:\\V_{X\hat{X}Y}
=V_{X\tilde{X}Y}}} 2^{-n[I_V(\hat{X} \wedge Y|X \tilde{X})]}
 \lambda(C,V_{X\tilde{X}\hat{X}}) \right|^+  \nonumber \\
&\geq \sum_{\substack{V_{X\tilde{X}Y} \in \P_n^T(\delta)}}
2^{-n[D(V_{Y|X}||W|P)+I_V(\tilde{X} \wedge Y|X)+\delta]}
\left| 2^{n(R-I(X \wedge \tilde{X}) -\delta)}  - \right. \hspace{1in} \nonumber\\
&  \hspace{2.5in} \sum_{\substack{V_{X\tilde{X}\hat{X}Y}:\\V_{X\hat{X}Y}
=V_{X\tilde{X}Y}}} 2^{-n[I_V(\hat{X} \wedge Y|X \tilde{X})]}
 2^{n(2R-I(X \wedge \tilde{X})-I(\hat{X} \wedge  X \tilde{X}) +2 \delta)}  \Big|^+  \\
&= \sum_{\substack{V_{X\tilde{X}Y} \in \P_n^T(\delta)}}
2^{-n[D(V_{Y|X}||W|P)+I_V(\tilde{X} \wedge XY)-R+2\delta]} \left| 1
- \sum_{\substack{V_{X\tilde{X}\hat{X}Y}:\\V_{X\hat{X}Y}
=V_{X\tilde{X}Y}}} 2^{-n[I_V(\hat{X} \wedge X \tilde{X} Y)-R -3
\delta
]}   \right|^+  \\
&\geq \sum_{\substack{V_{X\tilde{X}Y} \in \P_n^T(\delta) \\
I(\tilde{X} \wedge XY)
> R + 5\delta}} 2^{-n[D(V_{Y|X}||W|P)+I_V(\tilde{X} \wedge XY) - R +3\delta
]},
\end{align}
Here, the last inequality follows
 from Lemma \ref{intermmediatelemma}.

By using the continuity argument, and for sufficient large $n$,
\begin{align}
& e(C,W) \geq 2^{-n[E_{LT}(R,P,W)+4 \delta]},
\end{align}
where
\begin{equation}
E_{LT}(R,P,W) \triangleq \min_{\substack{V_{X\tilde{X}Y} \in \P^T
\\ I(\tilde{X} \wedge XY)
\geq R}} D(V_{Y|X} || W |P) + I_V(XY \wedge \tilde{X})-R.
\end{equation}
\end{proof}

\begin{proof}(\textbf{Corollary~\ref{P2P-typical-Cor1}})
Fix $R \geq 0$, $\delta > 0$. By the result of
Theorem~\ref{th:p2p_typical} and for sufficiently large $n$, there
exists a collection of codes, $\C^*$, with length $n$ and rate $R$,
such that
\begin{itemize}
\item $\mathbb{P} \left( \C^* \right) \geq 1 - \delta$,
\item $2^{-n[E_{TL}(R,P,W)+ 4\delta]} \leq e(C,W) \leq 2^{-n[E_T(R,P,W)-
3\delta]} \text{    for all } C \in \C^*$.
\end{itemize}
Note that
\begin{align}
\max_{ \tilde{\C}: \mathbb{P}(\tilde{\C} ) > 1- \delta}\;\; \min_{
\substack{ C \in \tilde{\C}}} -\frac{1}{n} \log{e(C,W)}  \geq \min_{
\substack{ C \in \C^*}} -\frac{1}{n} \log{e(C,W)} \geq E_T(R,P,W)-
3\delta.\label{P2P-typical-Cor1-proof-Ineq1}
\end{align}
Now, consider any high probability collection of codes with length
$n$ and rate $R$. Let us call this collection as $\hat{\C}$. Note
that
\begin{align}
 \left.
\begin{array}{lc}
&\mathbb{P} \left( \C^* \right) \geq 1 - \delta \\
& \mathbb{P}( \hat{\C} ) \geq 1 - \delta
\end{array}\right\} \Rightarrow \mathbb{P} ( \C^*  \cap  \hat{\C} ) \geq 1
-2 \delta \Rightarrow  \C^*  \cap  \hat{\C} \neq \phi.
\end{align}
Consider a code $C(\hat{\C}) \in \C^*  \cap  \hat{\C}$. It can be
concluded that
\begin{align}
\max_{ \tilde{\C}: \mathbb{P}(\tilde{\C} ) > 1- \delta}\;\; \min_{
\substack{ C \in \tilde{\C}}} -\frac{1}{n} \log{e(C,W)}  \leq \max_{
\tilde{\C}: \mathbb{P}(\tilde{\C} ) > 1- \delta}\;\; -\frac{1}{n}
\log{e(C(\tilde{\C}),W)} \leq E_{LT}(R,P,W)+ 4\delta.
\label{P2P-typical-Cor1-proof-Ineq2}
\end{align}
The last inequality follows from the fact that $C(\hat{\C}) \in
\C^*$. By combining~\eqref{P2P-typical-Cor1-proof-Ineq1}
and~\eqref{P2P-typical-Cor1-proof-Ineq2}, and by letting $\delta$
goes to zero and $n$ goes to infinity, it can be concluded that
\begin{align}
E_T(R,P,W) \leq E_{av}^T (R) \leq E_{TL}(R,P,W).
\end{align}
\end{proof}


\begin{proof}\textbf{(Fact~\ref{fc:p2p_expurgated})} First, we prove
the following lemma.
\begin{lemma}
Let $C^{ex}$ be the collection of the codewords whose existence is
asserted in Lemma~\ref{lm:p2p_expurgated}. For any distribution
$V_{X\tilde{X}} \in \P_n(\X \times \X)$, satisfying $I_V(X \wedge
\tilde{X}) > R+\delta$, the following holds:
\begin{equation}
\pi(C^{ex},V_{X\tilde{X}} )= 0.
\end{equation}
\end{lemma}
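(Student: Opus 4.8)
The plan is to combine the pointwise shell bound~\eqref{condition3-2} from the expurgated packing lemma with the elementary fact that a shell intersection count is integer-valued. First I would rewrite the packing function of $C^{ex}$ in shell form. Starting from the definition
\[
\pi(C^{ex},V_{X\tilde{X}}) = \frac{1}{M^*}\sum_{i=1}^{M^*}\sum_{j\neq i} 1_{T_{V_{X\tilde{X}}}}(\mathbf{x}_i,\mathbf{x}_j),
\]
I would observe that, for each fixed $i$, the inner sum counts those codewords $\mathbf{x}_j$ with $j\neq i$ lying in the conditional shell $T_{V_{\tilde{X}|X}}(\mathbf{x}_i)$, and is therefore bounded above by the full shell count:
\[
\pi(C^{ex},V_{X\tilde{X}}) \leq \frac{1}{M^*}\sum_{i=1}^{M^*} \big|T_{V_{\tilde{X}|X}}(\mathbf{x}_i)\cap C^{ex}\big|.
\]
Using $\leq$ rather than equality here is deliberate: it lets me ignore whether $\mathbf{x}_i$ itself belongs to its own shell (the only delicate case, namely the diagonal type $\tilde{X}=X$), since I only need an upper bound to force $\pi$ to zero.

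Next I would apply~\eqref{condition3-2} to bound each summand. Because the slack parameter in Lemma~\ref{lm:p2p_expurgated} is an arbitrary positive number, I would invoke that lemma with slack $\delta/3$, so that every shell count obeys $|T_{V_{\tilde{X}|X}}(\mathbf{x}_i)\cap C^{ex}| \leq 2^{n(R-I_V(X\wedge\tilde{X})+2\delta/3)}$. The decisive point — and essentially the entire content of the argument — is that each such cardinality is a non-negative integer. Under the hypothesis $I_V(X\wedge\tilde{X}) > R+\delta$ the exponent is at most $-\delta/3<0$, so the bound is strictly below $1$, and a non-negative integer strictly smaller than $1$ must equal $0$. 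Hence every shell count vanishes, the right-hand side of the displayed inequality is $0$, and since $\pi(C^{ex},V_{X\tilde{X}})\geq 0$ always, we conclude $\pi(C^{ex},V_{X\tilde{X}})=0$.

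I do not anticipate a genuine obstacle: the proof rests entirely on the integrality step above, the remainder being bookkeeping. The only point demanding mild care is the matching of the $\delta$-slack to the gap $I_V(X\wedge\tilde{X})-R$, which is why I would apply Lemma~\ref{lm:p2p_expurgated} with its slack parameter reduced to $\delta/3$, turning the $+2\delta$ term of~\eqref{condition3-2} into $+2\delta/3<\delta$; this is harmless since $\delta>0$ is arbitrary and all exponents here are eventually taken in the limit $\delta\to0$.
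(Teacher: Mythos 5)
Your proof is correct and follows essentially the same route as the paper: apply the per-codeword shell bound~\eqref{condition3-2}, note that the bound falls strictly below $1$ under the mutual-information hypothesis, and invoke integrality of the shell counts to conclude they all vanish, hence $\pi(C^{ex},V_{X\tilde{X}})=0$. Your explicit handling of the slack (re-invoking Lemma~\ref{lm:p2p_expurgated} with parameter $\delta/3$) actually repairs a small inconsistency the paper glosses over, since its proof body silently strengthens the hypothesis to $I_V(X \wedge \tilde{X}) > R+2\delta$ while the lemma statement only assumes $I_V(X \wedge \tilde{X}) > R+\delta$.
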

\begin{proof}
By~\eqref{condition3-2},
\begin{equation}\label{P2P-expurgated-random-bound-1}
|T_{V_{\tilde{X}|X}}(\mathbf{x}_i) \cap C^{ex}| \leq 2^{n(R-I_V(X
\wedge \tilde{X})+2\delta)},
\end{equation}
for every $\mathbf{x}_i \in C^{ex}$. Since $I_V(X \wedge \tilde{X})
> R+2\delta$, it can be concluded that
\begin{align}
|T_{V_{\tilde{X}|X}}(\mathbf{x}_i) \cap C^{ex}| =0 \text{ for every
} \mathbf{x}_i \in C^{ex} \Rightarrow \pi(C^{ex},V_{X\tilde{X}} )= 0
\end{align}
\end{proof}
The rest of the proof is identical to the proof of random coding
bound.
\end{proof}


\subsection*{2. MAC Proofs}
\begin{proof} \textbf{(Lemma~\ref{packing1})}
In this proof, we use a similar random coding argument that Pokorny
and Wallmeier used in \cite{Pokorney}. The main difference is that
our lemma uses a different code ensemble which results in a tighter
bound. Instead of choosing our sequences from $T_{P_X}$ and
$T_{P_Y}$,  we choose our random sequences uniformly from
$T_{P_{X|U}}(\mathbf{u})$, and $T_{P_{Y|U}}(\mathbf{u})$ for a given
$\mathbf{u} \in T_{P_U}$. In \cite{Liu-RandomCoding}, we see a
similar random code ensemble, however, their packing lemma
incorporates the channel output $\mathbf{z}$ into the packing
inequalities. One can easily show that, by using this packing lemma
and considering the minimum equivocation decoding rule, we would end
up with the random coding bound derived in~\cite{Liu-RandomCoding}.

Fix any $\U$, $P_{XYU} \in \P_n(\U \times \X \times \Y)$ such that
$X-U-Y$, $R_X \geq 0$, $R_Y \geq 0$ , $\delta > 0$, and $\mathbf{u}
\in T_{P_U}$. Define $M_X$, $M_Y$ such that
\begin{eqnarray*}
2^{n(R_X-\delta)} \leq M_X \leq 2^{nR_X},\qquad 2^{n(R_Y-\delta)}
\leq M_Y \leq 2^{nR_Y}.
\end{eqnarray*}

First, we find upper bounds on the expectations of packing functions
for a fixed $\alpha$ and $V_{UXY\tilde{X}\tilde{Y}}$, with respect
to the random variables $X_i$ and $Y_j$.
Since $X_i$s and $Y_j$s are i.i.d random sequences, we have
\begin{align}
\mathbb{E}\Big[N_{U} (X^{M_X} \times Y^{M_Y},V_{UXY})\Big]
&\triangleq \mathbb{E}\Big[ \frac{1}{M_X M_Y} \sum_{i,j}
1_{T_{V_{UXY}}} (\mathbf{u},
X_i,Y_j)\Big]\nonumber\\
&= \mathbb{E} \big[ 1_{T_{V_{UXY}}} (\mathbf{u}, X_1,Y_1)\big]\nonumber\\
&= \sum_{\mathbf{x},\mathbf{y}}
1_{T_{V_{XY|U}}}(\mathbf{x},\mathbf{y}|\mathbf{u})
\mathbb{P}(X_1=\mathbf{x}|\mathbf{u}) \mathbb{P}(Y_1=\mathbf{y}|\mathbf{u})\nonumber\\
& \leq \sum_{\substack{(\mathbf{x},\mathbf{y})\in T_{V_{XY|U}}}
(\mathbf{u})} 2^{-n[ H_V(X|U)-\delta]} 2^{-n[ H_V(Y|U)-\delta]}\nonumber\\
& \leq 2^{nH_V(XY|U)} 2^{-n[ H_V(X|U)-\delta]} 2^{-n[ H_V(Y|U)-\delta]}\nonumber\\
&=  2^{-n[I_V(X \wedge Y|U) - 2 \delta]}= 2^{-n[F_U(V_{UXY})- 2
\delta]}.\label{alpha1}
\end{align}
On the other hand,
\begin{align}
\mathbb{E}\Big[N_{U} (X^{M_X} \times Y^{M_Y},V_{UXY})\Big] &=
\sum_{\mathbf{x},\mathbf{y}}
1_{T_{V_{XY|U}}}(\mathbf{x},\mathbf{y}|\mathbf{u})
\mathbb{P}(X_1=\mathbf{x}|\mathbf{u}) \mathbb{P}(Y_1=\mathbf{y}|\mathbf{u})\nonumber\\
& \geq \sum_{\substack{(\mathbf{x},\mathbf{y})\in T_{V_{XY|U}}}
(\mathbf{u})} 2^{-n H_V(X|U)} 2^{-n H_V(Y|U)}\nonumber\\
& \geq 2^{n[H_V(XY|U)-\delta]} 2^{-n H_V(X|U)} 2^{-nH_V(Y|U)}\nonumber\\
& = 2^{-n[I_V(X \wedge Y|U) + \delta]}= 2^{-n[F_U(V_{UXY})+
\delta]}.\label{alpha5}
\end{align}
Therefore, by~\eqref{alpha1} and~\eqref{alpha5},
\begin{align}
 2^{-n[F_U(V_{UXY})+
\delta]} \leq \mathbb{E}\Big[N_{U} (X^{M_X} \times
Y^{M_Y},V_{UXY})\Big] \leq 2^{-n[F_U(V_{UXY})- 2 \delta]}.
\end{align}
By using a similar argument,
\begin{align}
\mathbb{E}\Big[N_{X} (X^{M_X} \times Y^{M_Y},V_{UXY\tilde{X}})\Big]
\geq 2^{-n[F_X(V_{UXY\tilde{X}})-4 \delta]}.\label{alpha2}
\end{align}
On the other hand,
\begin{align}
\mathbb{E}\Big[N_{X} (X^{M_X} \times Y^{M_Y},V_{UXY\tilde{X}})\Big]
& \geq \left(M_X-1\right) \mathbb{E}\Big[1_{T_{V_{UXY}}}
(\mathbf{u}, X_1,Y_1)
1_{T_{V_{UXY\tilde{X}}}}(\mathbf{u}, X_1,Y_1,X_2)\Big]\nonumber\\
& = \left(M_X-1\right) \sum_{\mathbf{x},\mathbf{y}}
\mathbb{P}(X_1=\mathbf{x}|\mathbf{u})
\mathbb{P}(Y_1=\mathbf{y}|\mathbf{u})1_{T_{V_{UXY}}} (\mathbf{u},
\mathbf{x},\mathbf{y})\nonumber\\
& \;\;\;\;\;\;\;\;\;\;\;\;\;\;\;\;\;\;\; \cdot
\sum_{\tilde{\mathbf{x}}} \mathbb{P}(X_2=\tilde{\mathbf{x}}|\mathbf{u})
1_{T_{V_{UXY\tilde{X}}}}(\mathbf{u}, \mathbf{x},\mathbf{y},
\tilde{\mathbf{x}})\nonumber\\
& \geq \left(M_X-1\right) \sum_{\mathbf{x},\mathbf{y} \in
T_{V_{XY|U}}(\mathbf{u})}
2^{-nH_V(X|U)} 2^{-nH_V(Y|U)}\nonumber\\
&\;\;\;\;\;\;\;\; \sum_{\tilde{\mathbf{x}} \in
T_{V_{\tilde{X}|UXY}}(\mathbf{u},\mathbf{x},\mathbf{y})} 2^{-n
H_V(\tilde{X}|U)}\nonumber\\
& \geq \left(M_X-1\right) 2^{n[H(XY|U)-\delta]} 2^{-nH_V(X|U)} 2^{-n
H_V(Y|U)}\nonumber\\
&\;\;\;\;\;\;\;\;\;\;\;\;\;\;\;\;\;\; \cdot 2^{n
[H_V(\tilde{X}|UXY)-\delta]}
2^{-n H_V(\tilde{X}|U)}\nonumber\\
& \geq 2^{-n[I_V(X \wedge Y|U) + I_V(\tilde{X} \wedge Y|U)
+I_V(\tilde{X} \wedge X|UY)-R_X + 3 \delta] }\nonumber\\
& = 2^{-n[F_X(V_{UXY\tilde{X}})+3 \delta]}.\label{alpha6}
\end{align}
Therefore, by~\eqref{alpha2} and~\eqref{alpha6},
\begin{align}
 2^{-n[F_X(V_{UXY\tilde{X}})+ 3
\delta]} \leq \mathbb{E}\Big[N_{X} (X^{M_X} \times
Y^{M_Y},V_{UXY\tilde{X}})\Big] \leq 2^{-n[F_X(V_{UXY\tilde{X}})- 4
\delta]}.
\end{align}
By using a similar argument for $ N_{Y} (X^{M_X} \times
Y^{M_Y},V_{UXY\tilde{X}}) $ and $ N_{XY} (X^{M_X} \times
Y^{M_Y},V_{UXY\tilde{X}\tilde{Y}}) $, we can show that
\begin{align}
 2^{-n[F_Y(V_{UXY\tilde{Y}})+ 3
\delta]} \leq \mathbb{E}\Big[N_{Y} (X^{M_X} \times
Y^{M_Y},V_{UXY\tilde{Y}})\Big] \leq
2^{-n[F_Y(V_{UXY\tilde{Y}})-4 \delta]},\qquad\label{alpha3}\\
2^{-n[F_{XY}(V_{UXY\tilde{X}\tilde{Y}}) + 4 \delta]} \leq
\mathbb{E}\Big[N_{XY} (X^{M_X} \times
Y^{M_Y},V_{UXY\tilde{X}\tilde{Y}})\Big] \leq
2^{-n[F_{XY}(V_{UXY\tilde{X}\tilde{Y}})- 4 \delta]}.
\end{align}
We can obtain an upper bound for $\mathbb{E} \left[ \Lambda_{XY}
(X^{M_X} \times
Y^{M_Y},V_{UXY\tilde{X}\tilde{Y}\hat{X}\hat{Y}})\right] $ as follows
\begin{align}
&\mathbb{E} \left[ \Lambda_{XY}
(X^{M_X},Y^{M_Y},V_{UXY\tilde{X}\tilde{Y}\hat{X}\hat{Y}}) \right]\quad\quad\quad\quad\quad\quad\quad\quad\quad\quad\quad\quad\quad\quad\quad\quad\quad\quad\quad\quad\quad\quad\quad\quad\quad\quad\quad\quad\quad\quad\quad\quad\quad\quad\quad\quad\quad\quad\quad\quad\quad\quad\quad\quad\quad\quad\quad\quad\quad\quad\quad\quad\quad\quad\quad\quad\quad\quad\quad\quad \nonumber\\
&\quad\quad\quad\quad\quad= \mathbb{E}
\left[\frac{1}{M_X M_Y} \sum_{i,j} \sum_{\substack{k \neq i\\
l \neq j}}\sum_{\substack{k' \neq i,k\\ l' \neq j,l}}
1_{T_{V_{UXY\tilde{X}\tilde{Y}\hat{X}\hat{Y}}}}(\mathbf{u},X_i,Y_j,X_k,Y_l,
X_{k'},Y_{l'})\right]\nonumber\\
& \quad\quad\quad\quad\quad\leq M^2_X M^2_Y
\mathbb{E}\Big[1_{T_{V_{UXY}}} (\mathbf{u}, X_1,Y_1)
1_{T_{V_{UXY\tilde{X}\tilde{Y}\hat{X}\hat{Y}}}}(\mathbf{u},X_1,Y_1,X_2,Y_2,
X_{3},Y_{3})\Big]\nonumber\\
& \quad\quad\quad\quad\quad= M^2_X M^2_Y
\sum_{\mathbf{x},\mathbf{y},\tilde{\mathbf{x}},\tilde{\mathbf{y}},\hat{\mathbf{x}},\hat{\mathbf{y}}}
\mathbb{P}(X_1=\mathbf{x},Y_1= \mathbf{y}, X_2=
\tilde{\mathbf{x}},Y_2= \tilde{\mathbf{y}},X_3=
\hat{\mathbf{x}},Y_3=
\hat{\mathbf{y}}|\mathbf{u})\nonumber\\
&
\quad\quad\quad\quad\quad\quad\quad\quad\quad\quad\quad\quad\quad\quad\quad\;\;\;\;\;\;\;\;\;\;\;\;\;\;\;\;\;\;\;
\cdot 1_{T_{V_{UXY}}} (\mathbf{u}, \mathbf{x},\mathbf{y}).
1_{T_{V_{UXY\tilde{X}\tilde{Y}\hat{X}\hat{Y}}}}(\mathbf{u},
\mathbf{x},\mathbf{y},
\tilde{\mathbf{x}},\tilde{\mathbf{y}},\hat{\mathbf{x}},\hat{\mathbf{x}})\nonumber\\
& \quad\quad\quad\quad\quad= M^2_X M^2_Y\sum_{\mathbf{x},\mathbf{y}}
\mathbb{P}(X_1=\mathbf{x}|\mathbf{u}) Pr(Y_1=\mathbf{y}|\mathbf{u})
\cdot 1_{T_{V_{UXY}}} (\mathbf{u},
\mathbf{x},\mathbf{y})\nonumber\\
& \;\;\;\;\;\;\;\;\;\;\;\;\;\;\;\;\;\;\; \quad
\cdot\sum_{\tilde{\mathbf{x}}}
\mathbb{P}(X_2=\tilde{\mathbf{x}}|\mathbf{u})
1_{T_{V_{UXY\tilde{X}}}}(\mathbf{u}, \mathbf{x},\mathbf{y},
\tilde{\mathbf{x}})  \sum_{\tilde{\mathbf{y}}}
\mathbb{P}(Y_2=\tilde{\mathbf{y}}|\mathbf{u})
1_{T_{V_{UXY\tilde{X}\tilde{Y}}}}(\mathbf{u}, \mathbf{x},\mathbf{y},
\tilde{\mathbf{x}},\tilde{\mathbf{y}})\nonumber\\
& \;\;\;\;\;\;\;\;\;\;\;\;\;\;\;\;\;\;\; \quad
\cdot\sum_{\hat{\mathbf{x}}}
\mathbb{P}(X_3=\hat{\mathbf{x}}|\mathbf{u}) 1_{T_{V_{UXY\tilde{X}
\tilde{Y}\hat{X}}}}(\mathbf{u}, \mathbf{x},\mathbf{y},
\tilde{\mathbf{x}},\tilde{\mathbf{y}},\hat{\mathbf{x}})
\sum_{\hat{\mathbf{y}}} \mathbb{P}(Y_3=\hat{\mathbf{y}}|\mathbf{u})
1_{T_{V_{UXY\tilde{X} \tilde{Y}\hat{X}\hat{Y}}}}(\mathbf{u},
\mathbf{x},\mathbf{y},
\tilde{\mathbf{x}},\tilde{\mathbf{y}},\hat{\mathbf{x}},\hat{\mathbf{y}})  \nonumber\\
& \quad\quad\quad\quad\quad\leq M^2_X M^2_Y
\sum_{\mathbf{x},\mathbf{y} \in T_{V_{XY|U}}(\mathbf{u})} 2^{-n[
H_V(X|U)-\delta]} 2^{-n[ H_V(Y|U)-\delta]} \sum_{\tilde{\mathbf{x}}
\in T_{V_{\tilde{X}|UXY}}(\mathbf{u},\mathbf{x},\mathbf{y})} 2^{-n[
H_V(\tilde{X}|U)-\delta]}\nonumber\\
&\quad\quad\quad\quad\quad\quad\quad \cdot \sum_{\tilde{\mathbf{y}}
\in
T_{V_{\tilde{Y}|UXY\tilde{X}}}(\mathbf{u},\mathbf{x},\mathbf{y},\tilde{\mathbf{x}})}
2^{-n[ H_V(\tilde{Y}|U)-\delta]}  \sum_{\hat{\mathbf{x}} \in
T_{V_{\hat{X}|UXY\tilde{X}\tilde{Y}}}(\mathbf{u},\mathbf{x},\mathbf{y},\tilde{\mathbf{x}},\tilde{\mathbf{y}})}
2^{-n[
H_V(\hat{X}|U)-\delta]}\nonumber\\
&\quad\quad\quad\quad\quad\quad\quad \cdot  \sum_{\hat{\mathbf{y}}
\in
T_{V_{\hat{Y}|UXY\tilde{X}\tilde{Y}\hat{X}}}}(\mathbf{u},\mathbf{x},\mathbf{y},\tilde{\mathbf{x}},\tilde{\mathbf{y}},\hat{\mathbf{x}})
2^{-n[
H_V(\hat{Y}|U)-\delta]}  \nonumber\\
&\quad\quad\quad\quad\quad \leq M^2_X M^2_Y \cdot 2^{nH(XY|U)}
2^{-n[ H_V(X|U)-\delta]}  2^{-n[ H_V(Y|U)-\delta]} 2^{n
H_V(\tilde{X}|UXY)} 2^{-n[ H_V(\tilde{X}|U)-\delta]}
2^{n H_V(\tilde{Y}|UXY\tilde{X})}\nonumber\\
&\quad\quad\quad\quad\quad\;\;\;\;\;\;\;\;\;\;\;\;\;\;\;\;\;\; \cdot
2^{-n[ H_V(\tilde{Y}|U)-\delta]} 2^{n
H_V(\hat{X}|UXY\tilde{X}\tilde{Y})} 2^{-n[ H_V(\hat{X}|U)-\delta]}
2^{n H_V(\hat{Y}|UXY\tilde{X}\tilde{Y}\hat{X})}
2^{-n[ H_V(\hat{Y}|U)-\delta]}\nonumber\\
& \quad\quad\quad\quad\quad\leq 2^{-n[I(\tilde{X}\tilde{Y} \wedge
XY|U) + I(\hat{X}\hat{Y} \wedge XY\tilde{X}\tilde{Y}|U) +
 I(X \wedge Y|U) +I(\tilde{X}\wedge \tilde{Y}|U) + I(\hat{X}\wedge \hat{Y}|U)-2 R_X -2R_Y- 6 \delta]
 }\nonumber\\
&
\quad\quad\quad\quad\quad=2^{-n[E^{XY}_S(V_{UXY\tilde{X}\tilde{Y}\hat{X}\hat{Y}})-
6 \delta] }.
\end{align}
By using a similar argument, we can obtain the following bounds
\begin{align}
\mathbb{E} \left[ \Lambda_X (X^{M_X} \times
Y^{M_Y},V_{UXY\tilde{X}\hat{X}}) \right] \leq
2^{-n[E^{X}_S(V_{UXY\tilde{X}\hat{X}})- 4 \delta] }\\
\mathbb{E} \left[ \Lambda_Y (X^{M_X} \times
Y^{M_Y},V_{UXY\tilde{Y}\hat{Y}}) \right] \leq
2^{-n[E^{Y}_S(V_{UXY\tilde{Y}\hat{Y}})- 4 \delta] }
\end{align}
Here, $E^{X}_S$, $E^{Y}_S$ and $E^{XY}_S$ are defined
in~\eqref{ESX-def}-\eqref{ESXY-def}.

 By
using Markov inequality, it can be concluded that
\begin{align}
&\mathbb{P} \left( N_{U} (X^{M_X} \times Y^{M_Y},V_{UXY}) \geq
2^{-n[F_U(V_{UXY})- 3 \delta]} \text{ for some } V_{UXY} \right)
\nonumber\\
&\quad\quad\quad\quad\leq
\sum_{\substack{V_{UXY}:\\V_{UX}=P_{UX}\\V_{UY}=P_{UY}}}
\frac{\mathbb{E}\left( N_{U} (X^{M_X} \times Y^{M_Y},V_{UXY})
\right)}{2^{-n[F_U(V_{UXY})- 3 \delta]}} \leq
\sum_{\substack{V_{UXY}:\\V_{UX}=P_{UX}\\V_{UY}=P_{UY}}}
2^{-n\delta} \leq 2^{-n \frac{\delta}{2}}\label{formula1}
\end{align}
Similarly, it can be shown that
\begin{align}
&\quad\;\mathbb{P} \left( N_{X} (X^{M_X} \times
Y^{M_Y},V_{UXY\tilde{X}}) \geq 2^{-n[F_X(V_{UXY\tilde{X}})- 5
\delta]} \text{ for some } V_{UXY\tilde{X}} \right)
\leq 2^{-n \frac{\delta}{2}},\label{formula2}\\
&\quad\;\mathbb{P} \left( N_{Y} (X^{M_X} \times
Y^{M_Y},V_{UXY\tilde{Y}}) \geq 2^{-n[F_Y(V_{UXY\tilde{Y}})- 5
\delta]} \text{ for some } V_{UXY\tilde{Y}} \right)
\leq 2^{-n \frac{\delta}{2}},\label{formula3}\\
&\mathbb{P} \left( N_{XY} (X^{M_X} \times
 Y^{M_Y},V_{UXY\tilde{X}\tilde{Y}}) \geq
2^{-n[F_{XY}(V_{UXY\tilde{X}\tilde{Y}})- 5 \delta]} \text{ for some
} V_{UXY\tilde{X}\tilde{Y}} \right) \leq 2^{-n
\frac{\delta}{2}}.\label{formula4}
\end{align}
Now, by combining~\eqref{formula1}-\eqref{formula4}, and using the
union bound, it can be concluded that
\begin{align}
\mathbb{P} \Big( &N_{U} (X^{M_X} \times Y^{M_Y},V_{UXY}) \geq
2^{-n[F_U(V_{UXY})- 3 \delta]} \text{ for some } V_{UXY} \text{ or }
\nonumber\\&N_{X} (X^{M_X} \times Y^{M_Y},V_{UXY\tilde{X}}) \geq
2^{-n[F_X(V_{UXY\tilde{X}})- 5 \delta]} \text{ for some }
V_{UXY\tilde{X}}
\text{ or }\nonumber\\
&N_{Y} (X^{M_X} \times Y^{M_Y},V_{UXY\tilde{Y}}) \geq
2^{-n[F_Y(V_{UXY\tilde{Y}})- 5
\delta]} \text{ for some } V_{UXY\tilde{Y}} \text{ or }\nonumber\\
& N_{XY} (X^{M_X} \times Y^{M_Y},V_{UXY\tilde{X}\tilde{Y}}) \geq
2^{-n[F_{XY}(V_{UXY\tilde{X}\tilde{Y}})- 5 \delta]} \text{ for some
} V_{UXY\tilde{X}\tilde{Y}}
 \Big) \leq 4 \times 2^{-n \frac{\delta}{2}},
\end{align}
therefore, there exists at least a multi-user code with the desired
properties mentioned in~\eqref{77}-\eqref{78}.

\end{proof}

\begin{proof} \textbf{(Lemma~\ref{packing3})}
To prove that a specific property holds for almost all codes, with
certain number of codewords, in the constant composition code
ensemble, we use a second order argument method. We already have
obtained upper and lower bounds on the expectation of the desired
function over the entire ensemble. In the following, we derive an
upper bound on the variance of the packing function. Finally, by
using the Chebychev's inequality, we prove that the desired property
holds for almost all codes in the ensemble. To find the variance of
$N_{U} (X^{M_X} \times Y^{M_Y},V_{UXY})$, let us define $W_{ij}
\triangleq 1_{T_{V_{UXY}}} (\mathbf{u}, X_i,Y_j)$. Therefore, the
variance of $N_{U} (X^{M_X} \times Y^{M_Y},V_{UXY})$ can be written
as
\begin{align}
Var\left(N_{U} (X^{M_X} \times Y^{M_Y},V_{UXY})\right) &= Var \left(
\frac{1}{M_X M_Y} \sum_{i,j} 1_{T_{V_{UXY}}} (\mathbf{u}, X_i,Y_j)
\right)\nonumber\\
& = \frac{1}{M_X^2 M_Y^2}  Var \left( \sum_{i,j}
W_{ij}\right).\label{Var-N_U1}
\end{align}
Since $W_{ij}$'s are pairwise independent random
variables,~\eqref{Var-N_U1} can be written as
\begin{align}
Var\left(N_{U} (X^{M_X} \times Y^{M_Y},V_{UXY})\right) &=
\frac{1}{M_X^2 M_Y^2} \sum_{i,j} Var \left( W_{ij}
\right)\nonumber\\
& \leq \frac{1}{M_X^2 M_Y^2} \sum_{i,j} \mathbb{E} \left( W_{ij}
\right)\nonumber\\
& \leq \frac{1}{M_X M_Y} \cdot 2^{-n[F_U(V_{UXY}) - 2\delta]} \leq
2^{-n[F_U(V_{UXY}) + R_X +R_Y -2\delta]}.
\end{align}
By defining $Q^j_{ik} \triangleq1_{T_{V_{UXY\tilde{X}}}}(\mathbf{u},
X_i,Y_j,X_k)$, the variance of $N_{X} (X^{M_X} \times
Y^{M_Y},V_{UXY\tilde{X}})$ can be upper-bounded as follows
\begin{align}
Var \left( N_{X} (X^{M_X} \times Y^{M_Y},V_{UXY\tilde{X}}) \right)
&=Var \left( \frac{1}{M_X M_Y} \sum_{i,j} \sum_{k \neq i}
1_{T_{V_{UXY\tilde{X}}}}(\mathbf{u}, X_i,Y_j,X_k)\right) \nonumber\\
& = \frac{1}{M^2_X M^2_Y} Var \left( \sum_{i,j} \sum_{k \neq i}
1_{T_{V_{UXY\tilde{X}}}}(\mathbf{u}, X_i,Y_j,X_k)\right)\nonumber\\
& = \frac{1}{M^2_X M^2_Y} Var \left( \sum_{j} \sum_{i}\sum_{k \neq
i} Q^j_{ik}\right)\nonumber\\
&= \frac{1}{M^2_X M^2_Y} Var \left( \sum_{j} \sum_{i}\sum_{k < i}
Q^j_{ik} + Q^j_{ki}\right) = \frac{1}{M^2_X M^2_Y} Var \left(
\sum_{j} \sum_{i}\sum_{k < i} J^j_{i,k} \right),\label{Var-N_X1}
\end{align}
where $J^j_{i,k} \triangleq Q^j_{ik} + Q^j_{ki}$,  $k < i$. One can
show that $J^j_{i,k}$'s are identically pairwise independent random
variables. Therefore, the $Var \left( N_{X} (X^{M_X} \times
Y^{M_Y},V_{UXY\tilde{X}}) \right)$ can be written as
\begin{align}
Var \left( N_{X} (X^{M_X} \times Y^{M_Y},V_{UXY\tilde{X}}) \right) =
\frac{1}{M^2_X M^2_Y} \sum_{j} \sum_{i}\sum_{k < i} Var
\left(J^j_{i,k} \right) \leq \frac{1}{2 M_Y} Var  \left(J^1_{2,1}
\right). \label{Var-N_X4}
\end{align}
To find the variance of $J^1_{2,1}$, let us consider the following
two cases for $V_{UXY\tilde{X}}$:
\begin{itemize}
\item $V_{UXY\tilde{X}}$ is a symmetric distribution, i.e., $V_{UXY\tilde{X}} = V_{U\tilde{X}YX}$. In this case $Q^1_{12}=Q^1_{21}$, therefore,

$$
J^1_{2,1}= \left\{
\begin{array}{lc}
2 & \text{with probability  } p \approx 2^{-n[I_V(X \wedge Y|U) +
I_V(\tilde{X} \wedge XY|U)
]}\\
0 & \;\;\;\;\;\;\text{with probability  } 1 - p
\end{array},\right.
$$
and the variance is upper bounded by
\begin{align}
Var(J^1_{2,1}) \leq E({J^1_{2,1}}^2) = 4 \times 2^{-n[I_V(X \wedge
Y|U) + I_V(\tilde{X} \wedge XY|Y)]},\label{Var-N_X2}
\end{align}
\item $V_{UXY\tilde{X}}$ is not a symmetric distribution. In this case, if $Q^j_{ik} = 1\Rightarrow Q^j_{ki}=0$. Therefore,
\begin{align}
\mathbb{P} \left(J^1_{2,1} =1\right) = \mathbb{P} \left( Q^1_{12} =1
\text{ or } Q^1_{21} =1\right) &= \mathbb{P} \left( Q^1_{12}
=1\right) + \mathbb{P} \left( Q^1_{21} =1\right) \nonumber\\&\leq 2
\times
 2^{-n[I_V(X \wedge
Y|U) + I_V(\tilde{X} \wedge XY|U)]}, \label{Var-N_X3}
\end{align}
therefore,
\begin{align}
Var(J^1_{2,1}) \leq E({J^1_{2,1}}^2) = 2 \times 2^{-n[I_V(X \wedge
Y|U) + I_V(\tilde{X} \wedge XY|U)]}.
\end{align}
\end{itemize}
By combining the results in~\eqref{Var-N_X4}-\eqref{Var-N_X3}, it
can be concluded that
\begin{align}
Var \left( N_{X} (X^{M_X} \times Y^{M_Y},V_{UXY\tilde{X}} \right)
\leq 2^{-n[I_V(X \wedge Y|U) +  I_V(\tilde{X} \wedge XY|U) + R_Y - 3
\delta ]}.
\end{align}
Similarly, it can be shown that
\begin{align}
Var \left( N_{Y} (X^{M_X} \times Y^{M_Y},V_{UXY\tilde{Y}}) \right)
\leq 2^{-n[I_V(X \wedge Y|U) +I_V(\tilde{Y} \wedge YX|U)+ R_X - 3
\delta ]}.
\end{align}
By defining $R^{jl}_{ik}
\triangleq1_{T_{V_{UXY\tilde{X}}}}(\mathbf{u}, X_i,Y_j,X_k,Y_l)$,
the variance of $N_{XY} (X^{M_X} \times
Y^{M_Y},V_{UXY\tilde{X}\tilde{Y}})$ can be upper-bounded as follows
\begin{align}
Var \left( N_{XY} (X^{M_X} \times Y^{M_Y},V_{UXY\tilde{X}\tilde{Y}})
\right)
&=Var \left( \frac{1}{M_X M_Y} \sum_{i,j} \sum_{\substack{k \neq i\\
l \neq j}}
1_{T_{V_{UXY\tilde{X}\tilde{Y}}}}(\mathbf{u}, X_i,Y_j,X_k,Y_l)\right) \nonumber\\
& = \frac{1}{M^2_X M^2_Y} Var \left( \sum_{i,j} \sum_{\substack{k
\neq i\\ l \neq j}}
1_{T_{V_{UXY\tilde{X}\tilde{Y}}}}(\mathbf{u}, X_i,Y_j,X_k,Y_l)\right)\nonumber\\
& = \frac{1}{M^2_X M^2_Y} Var \left( \sum_{i} \sum_{j}\sum_{\substack{k \neq i}} \sum_{l \neq j} R^{jl}_{ik}\right)\nonumber\\
&= \frac{1}{M^2_X M^2_Y} Var \left( \sum_{i} \sum_{j}\sum_{k < i}
\sum_{l \neq j} \left\{R^{jl}_{ik} + R^{jl}_{ki} + R^{lj}_{ik}
+R^{lj}_{ki}\right\}\right) \nonumber\\&= \frac{1}{M^2_X M^2_Y} Var
\left( \sum_{i} \sum_{j}\sum_{k < i} \sum_{j <l}S^{j,l}_{i,k}
\right),\label{Var-N_X1}
\end{align}
where $S^{j,l}_{i,k} \triangleq R^{jl}_{ik} + R^{jl}_{ki} +
R^{lj}_{ik} +R^{lj}_{ki}$,  $k < i$, $l < j$. It is easy to check
that $S^{j,l}_{i,k}$'s are identically pairwise independent random
variables. Therefore, the $Var \left( N_{XY} (X^{M_X} \times
Y^{M_Y},V_{UXY\tilde{X}\tilde{Y}}) \right)$ can be written as
\begin{align}
Var \left( N_{XY} (X^{M_X} \times Y^{M_Y},V_{UXY\tilde{X}\tilde{Y}})
\right) = \frac{1}{M^2_X M^2_Y} \sum_{i} \sum_{j}\sum_{k < i}
\sum_{l < j} Var \left(S^{j,l}_{i,k} \right) \leq \frac{1}{4} Var
\left(S^{1,2}_{1,2} \right). \label{Var-N_XY4}
\end{align}
By using a similar argument to~\eqref{Var-N_X2}-\eqref{Var-N_X3},
the variance of To find the variance of $S^{1,2}_{1,2}$,can be upper
bounded by
\begin{align}
Var \left(S^{1,2}_{1,2}\right) \leq 16 \times 2^{-n[I_V(X \wedge
Y|U) + I_V(\tilde{X} \wedge \tilde{Y}|U) +I_V(\tilde{X}\tilde{Y}
\wedge XY|U)- 4 \delta] },
\end{align}
and therefore,
\begin{align}
Var \left(N_{XY} (X^{M_X} \times
Y^{M_Y},V_{UXY\tilde{X}\tilde{Y}})\right) \leq 4 \times 2^{-n[I_V(X
\wedge Y|U) + I_V(\tilde{X} \wedge \tilde{Y}|U)
+I_V(\tilde{X}\tilde{Y} \wedge XY|U)- 4 \delta] }.
\end{align}
Now, by using the Chebychev's inequality, we can obtain the
following
\begin{align}
&\mathbb{P} \left( \left|N_{U} (X^{M_X} \times Y^{M_Y},V_{UXY})-
\mathbb{E} \left(N_{U} (X^{M_X} \times Y^{M_Y},V_{UXY})\right)
\right| \geq 2^{2n\delta}
\text{ for some } V_{UXY}\right) \nonumber \\
&\quad \quad \quad \quad \quad \quad \quad \quad \quad \quad \quad
\quad \leq \sum_{V_{UXY}} \mathbb{P} \left( \left|N_{U} (X^{M_X}
\times Y^{M_Y},V_{UXY})- \mathbb{E} \left(N_{U} (X^{M_X} \times
Y^{M_Y},V_{UXY})\right) \right| \geq
2^{2n\delta}\right)\nonumber\\
& \quad \quad \quad \quad \quad \quad \quad \quad \quad \quad \quad
\quad \leq \sum_{V_{UXY}} \frac{Var\left( N_{U}
(X^{M_X} \times  Y^{M_Y},V_{UXY})\right)}{2^{4n\delta}}\nonumber\\
& \quad \quad \quad \quad \quad \quad \quad \quad \quad \quad \quad
\quad \leq \sum_{V} 2^{-n[F_U(V) + R_X +R_Y +  2\delta]} \leq
2^{-n\delta}.\label{NU}
\end{align}
Similarly, it can be shown that
\begin{align}
&\mathbb{P} \left( \left|N_{X} (X^{M_X} \times
Y^{M_Y},V_{UXY\tilde{X}})- \mathbb{E} \left(N_{X} (X^{M_X} \times
Y^{M_Y},V_{UXY\tilde{X}})\right) \right| \geq 2^{2n\delta}
\text{ for some } V_{UXY\tilde{X}}\right) \leq 2^{-n\delta}\label{NX}\\
&\mathbb{P} \left( \left|N_{Y} (X^{M_X} \times
Y^{M_Y},V_{UXY\tilde{Y}})- \mathbb{E} \left(N_{Y} (X^{M_X} \times
Y^{M_Y},V_{UXY\tilde{Y}})\right) \right| \geq 2^{2n\delta}
\text{ for some } V_{UXY\tilde{Y}}\right) \leq 2^{-n\delta}\label{NY}\\
&\mathbb{P} \left( \left|N_{XY} (X^{M_X} \times
Y^{M_Y},V_{UXY\tilde{X}\tilde{Y}})- \mathbb{E} \left(N_{XY} (X^{M_X}
\times Y^{M_Y},V_{UXY\tilde{X}\tilde{Y}})\right) \right| \geq
2^{2n\delta} \text{ for some } V_{UXY\tilde{X}\tilde{Y}}\right) \leq
2^{-n\delta}.\label{NXY}
\end{align}
Now, by using the result of Lemma~\ref{packing1} and Markov's
inequality, it can be concluded that
\begin{align}
&\mathbb{P} \left(\Lambda_X (X^{M_X} \times
^{M_Y},V_{UXY\tilde{X}\hat{X}}) \geq
2^{-n\left(E^{X}_S(V_{UXY\tilde{X}\hat{X}}) - 5 \delta \right)
}\text{
for some } V_{UXY\tilde{X}\hat{X}}  \right) \quad \quad \quad \quad \quad\quad \quad \quad \quad \quad\quad \quad \quad \quad \quad\quad \quad \quad \quad \quad\nonumber\\
&  \quad \quad \quad \quad \quad\quad \quad \quad \quad \quad \leq
\sum_{V_{UXY\tilde{X}\hat{X}}} \mathbb{P} \left(\Lambda_X (X^{M_X}
\times ^{M_Y},V_{UXY\tilde{X}\hat{X}})  \geq
2^{-n\left(E^{X}_S(V_{UXY\tilde{Y}\hat{Y}}) - 5 \delta \right)}
\right)\nonumber\\
&  \quad \quad \quad \quad \quad\quad \quad \quad \quad \quad \leq
\sum_{V_{UXY\tilde{X}\hat{X}}} \frac{\mathbb{E}\left(\Lambda_X
(X^{M_X} \times
^{M_Y},V_{UXY\tilde{X}\hat{X}})\right)}{2^{-n\left(E^{X}_S(V_{UXY\tilde{X}\hat{X}})
- 5 \delta \right)}} \leq
\sum_{V_{UXY\tilde{X}\tilde{Y}\hat{X}\hat{Y}}} 2^{-n \delta} \leq
2^{-n\frac{\delta}{2}}.\label{LambdaX}
\end{align}
Similarly,
\begin{align}
&\mathbb{P} \left(\Lambda_Y (X^{M_X} \times
Y^{M_Y},V_{UXY\tilde{Y}\hat{Y}}) \geq
2^{-n\left(E^{Y}_S(V_{UXY\tilde{Y}\hat{Y}}) - 5 \delta \right)
}\text{ for some } V_{UXY\tilde{Y}\hat{Y}}  \right) \leq
2^{-n\frac{\delta}{2}},\label{LambdaY}
\end{align}
and
\begin{align}
&\mathbb{P} \left(\Lambda_{XY} (X^{M_X} \times
Y^{M_Y},V_{UXY\tilde{X}\tilde{Y}\hat{X}\hat{Y}}) \geq
2^{-n\left(E^{XY}_S(V_{UXY\tilde{X}\tilde{Y}\hat{X}\hat{Y}}) - 7
\delta \right) }\text{ for some }
V_{UXY\tilde{X}\tilde{Y}\hat{X}\hat{Y}}  \right) \leq
2^{-n\frac{\delta}{2}}.\label{LambdaXY}
\end{align}
Therefore, with probability $> 1- 7 \times 2^{-n\frac{\delta}{2}}$,
a code $C =C_X \times C_Y$ from random code ensemble satisfies the
conditions given in the lemma.
\end{proof}

\begin{proof} \textbf{(Lemma~\ref{packing2})}
Let  $C^r_X=\{\mathbf{x}_1,\mathbf{x}_2,...,\mathbf{x}_{M_X}\}$ and
$C^r_Y=\{\mathbf{y}_1,\mathbf{y}_2,...,\mathbf{y}_{M_Y}\}$ be the
collections of codewords whose existence is asserted in
Lemma~\ref{packing1}. Let us define
\begin{align}
\Pi(C^r_X \times C^r_Y) &\triangleq
\sum_{\substack{V_{UXY\tilde{X}\tilde{Y}}}} \Big\{ N_{U} (C^r_X
\times C^r_Y,V_{UXY}) 2^{n[F_{U}(V_{UXY})-6\delta]}
\nonumber\\
&\quad \quad \quad \quad \;+ N_{X} (C^r_X \times
C^r_Y,V_{UXY\tilde{X}})
2^{n[F_{X}(V_{UXY\tilde{X}})-6\delta]} \nonumber\\
&\quad \quad \quad \quad \;+ N_{Y} (C^r_X \times
C^r_Y,V_{UXY\tilde{Y}})
2^{n[F_{Y}(V_{UXY\tilde{Y}})-6\delta]} \nonumber\\
&\quad \quad \quad \quad \;+ N_{XY} (C^r_X \times
C^r_Y,V_{UXY\tilde{X}\tilde{Y}})
2^{n[F_{XY}(V_{UXY\tilde{X}\tilde{Y}})-6\delta]}\Big\}\label{120}\\
&\leq \sum_{\substack{V_{UXY\tilde{X}\tilde{Y}}}} 4 \times
2^{-n\delta} <
\frac{1}{2}\quad\quad\quad\quad\quad\quad\quad\;\;\;\label{121}
\end{align}
%
%
For $C^r= C^r_X \times C^r_Y$, and the sequence $\mathbf{u}$ defined
in random coding packing lemma, we define
\begin{eqnarray}
 L_U(C^r,V_{UXY}, i,j) &\triangleq& 1_{T_{V_{UXY}}}(\mathbf{u},
\mathbf{x}_i,
\mathbf{y}_j),\\
 L_X(C^r,V_{UXY\tilde{X}}, i,j) &\triangleq& \sum_{k \neq i}
1_{T_{V_{UXY\tilde{X}}}}(\mathbf{u}, \mathbf{x}_i, \mathbf{y}_j,
\mathbf{x}_k), \\
 L_Y(C^r,V_{UXY\tilde{Y}}, i,j) &\triangleq& \sum_{l \neq j}
1_{T_{V_{UXY\tilde{Y}}}}(\mathbf{u}, \mathbf{x}_i, \mathbf{y}_j,
\mathbf{y}_l),\\
 L_{XY} (C^r,V_{UXY\tilde{X}\tilde{Y}}, i,j) &\triangleq& \sum_{k
\neq i} \sum_{l \neq j}
1_{T_{V_{UXY\tilde{X}\tilde{Y}}}}(\mathbf{u}, \mathbf{x}_i,
\mathbf{y}_j, \mathbf{x}_k, \mathbf{y}_l).
\end{eqnarray}
By definition of $N_{\alpha}$,~\eqref{120} can be written as
\begin{align}
\Pi(C^r) = \frac{1}{M_X} \sum_{i=1}^{M_X} G(i) , \quad\quad\quad
\text{ for } \alpha= U,X,Y,XY,
\end{align}
where $G(i)$ is defined as follows:
\begin{align}
G(i) \triangleq \frac{1}{M_Y} \sum_{j=1}^{M_Y} \sum_{V_{UXY\tilde{X}\tilde{Y}}} \Big\{& L_U(C^r,V_{UXY}, i,j) 2^{n[F_{U}(V_{UXY})- 6 \delta]}\nonumber\\
+&L_X(C^r,V_{UXY\tilde{X}}, i,j) 2^{n[F_{X}(V_{UXY\tilde{X}})- 6 \delta]} \nonumber\\
+&L_X(C^r,V_{UXY\tilde{Y}}, i,j) 2^{n[F_{Y}(V_{UXY\tilde{Y}})- 6 \delta]} \nonumber\\
+&L_{XY}(C^r,V_{UXY\tilde{X}\tilde{Y}}, i,j)
2^{n[F_{XY}(V_{UXY\tilde{X}\tilde{Y}})- 6 \delta]} \Big\}.
\end{align}
By using~\eqref{121}, we see that the average of $G(i)$ over $C_X^r$
is upper bounded by $\frac{1}{2}$, therefore, there must exist
$\hat{M}_X \geq \frac{M_X}{2}$ codewords, $\mathbf{x}_i \in C_X^r$,
for which
\begin{align}
G(i) < 1.\label{MAC-proof-ex2}
\end{align}
Let us call this set of codewords as $C_X^{ex}$. Without loss of
generality, we assume $C^{ex}_X$ contains the first $\hat{M}_X$
sequences of $C^r_X$, i.e., $C^{ex}_X=\{\mathbf{x}_1,
\mathbf{x}_2,...,\mathbf{x}_{\hat{M}_X}\}$.  Consider the multiuser
code $C^{ex}_1 \triangleq C_X^{ex} \times C_Y$. By definition of
$L_{\alpha}$, $\alpha=U,X,Y,XY$,
\begin{align}
L_{\alpha} (C^{ex}_1,V,i,j) \leq L_{\alpha} (C^r,V,i,j)\;\;\;\;
\forall\;(\mathbf{x}_i.\mathbf{y}_j) \in
C^{ex}_1.\label{MAC-proof-ex3}
\end{align}
By combining~\eqref{MAC-proof-ex2} and~\eqref{MAC-proof-ex3}, we
conclude that for all $i \in \{1,2,...,\hat{M}_X\}$
\begin{align}
\frac{1}{M_Y} \sum_{j=1}^{M_Y} \sum_{V_{UXY\tilde{X}\tilde{Y}}} \Big\{& L_U(C_1^{ex},V_{UXY}, i,j) 2^{n[F_{U}(V_{UXY})- 6 \delta]}\nonumber\\
+&L_X(C_1^{ex},V_{UXY\tilde{X}}, i,j) 2^{n[F_{X}(V_{UXY\tilde{X}})- 6 \delta]} \nonumber\\
+&L_X(C_1^{ex},V_{UXY\tilde{Y}}, i,j) 2^{n[F_{Y}(V_{UXY\tilde{Y}})- 6 \delta]} \nonumber\\
+&L_{XY}(C_1^{ex},V_{UXY\tilde{X}\tilde{Y}}, i,j)
2^{n[F_{XY}(V_{UXY\tilde{X}\tilde{Y}})- 6 \delta]} \Big\} < 1,
\end{align}
which results in
\begin{align}
\sum_{j=1}^{M_Y} \sum_{V_{UXY\tilde{X}\tilde{Y}}} \Big\{& L_U(C_1^{ex},V_{UXY}, i,j) 2^{n[F_{U}(V_{UXY})- R_Y-6 \delta]}\nonumber\\
+&L_X(C_1^{ex},V_{UXY\tilde{X}}, i,j) 2^{n[F_{X}(V_{UXY\tilde{X}})- R_Y- 6 \delta]} \nonumber\\
+&L_X(C_1^{ex},V_{UXY\tilde{Y}}, i,j) 2^{n[F_{Y}(V_{UXY\tilde{Y}})- R_Y- 6 \delta]} \nonumber\\
+&L_{XY}(C_1^{ex},V_{UXY\tilde{X}\tilde{Y}}, i,j)
2^{n[F_{XY}(V_{UXY\tilde{X}\tilde{Y}})- R_Y- 6 \delta]} \Big\} < 1.
\end{align}
Since all terms in the summation are non-negative, we conclude that
\begin{align}
L_{\alpha} (C^{ex}_1,V,i,j) 2^{-n[F_{\alpha}(V)- R_Y - 6\delta]} < 1
\end{align}
for all $i \in \{1,2,...,\hat{M}_X\}$, $j \in \{1,2,...,M_Y\}$, all
$V \in \P(\U \times \X \times \Y  \times \X  \times \Y )$, and all
$\alpha =U,X,Y,XY$. Therefore,
\begin{align}
L_{\alpha} (C^{ex}_1,V,i,j) < 2^{-n[F_{\alpha}(V)- R_Y -
6\delta]}.\label{MAC-ex-final1}
\end{align}
On the other hand,~\eqref{121} can also be written as
\begin{align}
\Pi(C^r) = \frac{1}{M_Y} \sum_{j=1}^{M_Y} H(j) , \quad\quad\quad
\text{ for } \alpha= U,X,Y,XY,
\end{align}
where $H(j)$ is defined as
\begin{align}
H(j) \triangleq \frac{1}{M_X} \sum_{i=1}^{M_X} \sum_{V_{UXY\tilde{X}\tilde{Y}}} \Big\{& L_U(C^r,V_{UXY}, i,j) 2^{n[F_{U}(V_{UXY})- 6 \delta]}\nonumber\\
+&L_X(C^r,V_{UXY\tilde{X}}, i,j) 2^{n[F_{X}(V_{UXY\tilde{X}})- 6 \delta]} \nonumber\\
+&L_X(C^r,V_{UXY\tilde{Y}}, i,j) 2^{n[F_{Y}(V_{UXY\tilde{Y}})- 6 \delta]} \nonumber\\
+&L_{XY}(C^r,V_{UXY\tilde{X}\tilde{Y}}, i,j)
2^{n[F_{XY}(V_{UXY\tilde{X}\tilde{Y}})- 6 \delta]} \Big\}.
\end{align}
By a similar argument as we did before, we can show that there exist
$\hat{M}_Y \geq \frac{M_Y}{2}$ codewords,  $\mathbf{y}_j \in C_Y^r$,
for which
\begin{align}
H(j) < 1.\label{MAC-proof-ex5}
\end{align}
Let us call this set of codewords as $C_Y^{ex}$. Without loss of
generality, we assume $C^{ex}_Y$ contains the first $\hat{M}_Y$
sequences of $C^r_Y$, i.e., $C^{ex}_Y=\{\mathbf{y}_1,
\mathbf{y}_2,...,\mathbf{y}_{\hat{M}_Y}\}$.  Consider the multiuser
code $C^{ex}_2 \triangleq C_X \times C_Y^{ex}$. By definition of
$L_{\alpha}$, $\alpha=U,X,Y,XY$, we have
\begin{align}
L_{\alpha} (C^{ex}_2,V,i,j) \leq L_{\alpha} (C^r,V,i,j)\;\;\;\;
\forall\;(\mathbf{x}_i.\mathbf{y}_j) \in
C^{ex}_2.\label{MAC-proof-ex6}
\end{align}
By a similar argument as we did before, we can show that
\begin{align}
L_{\alpha} (C^{ex}_2,V,i,j) < 2^{-n[F_{\alpha}(V)- R_X -
6\delta]}.\label{MAC-ex-final2}
\end{align}
for all $i \in \{1,2,...,{M}_X\}$, $j \in \{1,2,...,\hat{M}_Y\}$,
all $V \in \P(\U \times \X  \times \Y  \times \X  \times \Y )$, and
all $\alpha =U,X,Y,XY$.

By combining~\eqref{MAC-ex-final1} and~\eqref{MAC-ex-final2}, we
conclude that, there exists a multiuser code $C^{ex}=C_X^* \times
C_Y^*$ with $M_X^* \times M_Y^*$ messages
\begin{align}
 M_X^* \geq \frac{2^{n(R_X- \delta)}}{2},\qquad
 M_Y^* \geq \frac{2^{n(R_Y- \delta)}}{2},\qquad
 M_X^* \times M_Y^* \geq \frac{2^{n(R_X+R_Y- 2\delta)}}{2}
\end{align}
such that for any pair of messages $(\mathbf{x}_i ,\mathbf{y}_j) \in
C^{ex}$, all $V \in \P(\U \times \X  \times \Y  \times \X  \times \Y
)$, and all $\alpha =U,X,Y,XY$,
\begin{align}
L_{\alpha} (C^{ex},V,i,j) < 2^{-n[F_{\alpha}(V)- \min\{R_X,R_Y\} -
6\delta]}\label{MAC-ex-final3}.
\end{align}
It is easy to check that
\begin{align}
\Pi (C^{ex}) \leq 2 \times \Pi(C^r) < 1,
\end{align}
therefore, $C^{ex}$, satisfies all the constraints
in~\eqref{expur-lem1}-\eqref{expur-lem4}.
Here, by method of expurgation, we end up with a code with a similar
average bound as we had for the original code. However, all pairs of
codewords in the new code also satisfy
\eqref{randomnew1}-\eqref{randomnew4}. Therefore, we did not lose
anything in terms of average performance, however, as we will see in
Theorem~\ref{randomcodingthm}, we would end up with a tighter bound
since we have more constraints on any particular pair of codewords
in our codebook pair.
\end{proof}

\begin{proof}\textbf{(Theorem~\ref{randomcodingthm})}
Let us do random coding. Fix any $\U$, $P_{XYU} \in \P_n(\X \times
\Y \times \U)$ such that $X-U-Y$, $R_X \geq 0$, $R_Y \geq 0$ ,
$\delta
> 0$, and $\mathbf{u} \in T_{P_U}$. Define $M_X$, $M_Y$ such that
\begin{eqnarray*}
2^{n(R_X-\delta)} \leq M_X \leq 2^{nR_X}\qquad\qquad
2^{n(R_Y-\delta)} \leq M_Y \leq 2^{nR_Y}
\end{eqnarray*}
Let $X^{M_X} \triangleq \Big(X_1,X_2,...,X_{M_X} \Big)$ and $
Y^{M_Y} \triangleq \Big(Y_1,Y_2,...,Y_{M_Y} \Big)$ be independent
random variables, where $X_i$s are uniformly distributed on
$T_{P_{X|U}}(\mathbf{u})$, and $Y_j$s are uniformly distributed on
$T_{P_{Y|U}}(\mathbf{u})$.

\noindent \textbf{Upper bound:} By taking expectation over
(\ref{upperboundMAC}), applying Lemma \ref{packing1}, and using the
continuity of information measures, we get the desired upper bound.

\noindent \textbf{Lower bound:} By taking expectation over
(\ref{lowerboundMAC}), applying Lemma \ref{packing1}, we get

\begin{align}
\mathbb{E}e(C,W) &\geq
\sum_{\substack{V_{UXY\tilde{X}Z} \\ \in \V^r_{X,n}}}
2^{-n(E_X^L+4\delta)}
\left[1- \sum_{\substack{V_{UXY\tilde{X}\hat{X}Z} :\\
V_{UXY\hat{X}Z}= V_{UXY\tilde{X}Z}}}
2^{-n(I_V(\hat{X} \wedge XY \tilde{X}Z|U)-R_x -7 \delta)}  \right]
\nonumber\\
&  +\sum_{\substack{V_{UXY\tilde{Y}Z} \\ \in \V^r_{Y,n}}}
2^{-n(E_Y^L+4\delta)}
\left[1   - \sum_{\substack{V_{UXY\tilde{Y}\hat{Y}Z} :\\
V_{UXY\hat{Y}Z}= V_{UXY\tilde{Y}Z}}}
2^{-n(I_V(\hat{Y} \wedge XY \tilde{Y}Z|U)-R_Y -7 \delta)}  \right]
\nonumber\\
& +\sum_{\substack{V_{UXY\tilde{X}\tilde{Y}Z} \\ \in \V^r_{XY,n}}}
2^{-n(E_{XY}^L+4\delta)}
\left[1   -
\sum_{\substack{V_{UXY\tilde{X}\hat{X}\tilde{Y}\hat{Y}Z} :\\
V_{UXY\hat{X}\hat{Y}Z}= V_{UXY\tilde{X}\tilde{Y}Z}}}
2^{-n(I_V(\hat{X}\hat{Y} \wedge XY \tilde{X}\tilde{Y}Z|U)-R_X-R_Y -7
\delta)}  \right]
\end{align}

Toward further simplification of this expression, we use the
following lemma.
\begin{lemma}\label{Opt-Lemma}
\begin{align}
\min_{\substack{V_{UXY\tilde{X}\hat{X}Z} :\\
V_{UXY\hat{X}Z}= V_{UXY\tilde{X}Z}}}  I_V(\hat{X} \wedge
XY\tilde{X}Z|U) = I_V(\tilde{X} \wedge XYZ|U)
\end{align}
\end{lemma}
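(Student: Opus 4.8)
The plan is to follow verbatim the strategy of Lemma~\ref{intermmediatelemma} from the point-to-point section, now carrying along the time-sharing variable $U$ (on which every information quantity is conditioned) and the channel output $Z$. Throughout, the outer distribution $V_{UXY\tilde{X}Z}$ is regarded as fixed, and the minimization ranges over its extensions $V_{UXY\tilde{X}\hat{X}Z}$ subject to the symmetry constraint that the marginal $V_{UXY\hat{X}Z}$ coincides with $V_{UXY\tilde{X}Z}$. The identity then says that the cheapest such extension is the one making $\hat{X}$ and $\tilde{X}$ conditionally independent.

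For the lower bound ($\geq$), I would apply the chain rule for conditional mutual information to split
\[
I_V(\hat{X} \wedge XY\tilde{X}Z|U) = I_V(\hat{X} \wedge XYZ|U) + I_V(\hat{X} \wedge \tilde{X}|UXYZ).
\]
The second term is nonnegative, so $I_V(\hat{X} \wedge XY\tilde{X}Z|U) \geq I_V(\hat{X} \wedge XYZ|U)$. Since $I_V(\hat{X} \wedge XYZ|U)$ is a functional of the five-variable marginal $V_{UXY\hat{X}Z}$ alone, the feasibility constraint $V_{UXY\hat{X}Z}=V_{UXY\tilde{X}Z}$ forces $I_V(\hat{X} \wedge XYZ|U) = I_V(\tilde{X} \wedge XYZ|U)$. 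Hence every feasible $V$ obeys $I_V(\hat{X} \wedge XY\tilde{X}Z|U) \geq I_V(\tilde{X} \wedge XYZ|U)$, and the minimum is at least the right-hand side.

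For achievability ($\leq$), I would exhibit a feasible distribution attaining the bound, namely
\[
V^{*}_{UXY\tilde{X}\hat{X}Z}(u,x,y,\tilde{x},\hat{x},z) \triangleq V_{\tilde{X}|UXYZ}(\tilde{x}|u,x,y,z)\, V_{\tilde{X}|UXYZ}(\hat{x}|u,x,y,z)\, V_{UXYZ}(u,x,y,z),
\]
i.e.\ one makes $\tilde{X}$ and $\hat{X}$ conditionally independent given $(U,X,Y,Z)$, each governed by the fixed conditional $V_{\tilde{X}|UXYZ}$. By the symmetry of this product form, both $V^{*}_{UXY\tilde{X}Z}$ and $V^{*}_{UXY\hat{X}Z}$ equal the fixed $V_{UXY\tilde{X}Z}$, so $V^{*}$ lies in the feasible set. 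The construction also produces the Markov chain $\hat{X} - (U,X,Y,Z) - \tilde{X}$, whence $I_{V^{*}}(\hat{X} \wedge \tilde{X}|UXYZ)=0$ and therefore $I_{V^{*}}(\hat{X} \wedge XY\tilde{X}Z|U) = I_{V^{*}}(\hat{X} \wedge XYZ|U) = I_V(\tilde{X} \wedge XYZ|U)$. Combining the two directions completes the argument.

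The step that needs the most care is checking that this single $V^{*}$ satisfies both marginal constraints simultaneously; this is precisely where the product-of-identical-conditionals form is essential, since it renders the roles of $\tilde{X}$ and $\hat{X}$ interchangeable. No genuine difficulty arises otherwise: the statement is the direct $U$-conditional, output-augmented analogue of Lemma~\ref{intermmediatelemma}, and, as there, it is phrased for distributions, so no type-granularity issues intervene.
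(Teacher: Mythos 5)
Your proposal is correct and follows essentially the same route as the paper's proof: the same chain-rule decomposition $I_V(\hat{X}\wedge XY\tilde{X}Z|U)=I_V(\hat{X}\wedge XYZ|U)+I_V(\hat{X}\wedge\tilde{X}|UXYZ)$ for the lower bound, and the same achieving distribution $V^{*}$ built from the product of two copies of $V_{\tilde{X}|UXYZ}$, which enforces the Markov chain $\tilde{X}-(U,X,Y,Z)-\hat{X}$ and satisfies both marginal constraints by symmetry. Nothing further is needed.
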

\begin{proof}
Note that, for any $V_{UXY\tilde{X}\hat{X}Z}$,
\begin{align}
 I_V(\hat{X} \wedge
XY\tilde{X}Z|U) = I_V(\hat{X} \wedge XYZ|U) + I_V(\tilde{X} \wedge
\hat{X}| UXYZ),
\end{align}
therefore,
\begin{align}
\min_{\substack{V_{UXY\tilde{X}\hat{X}Z} :\\
V_{UXY\hat{X}Z}= V_{UXY\tilde{X}Z}}}  I_V(\hat{X} \wedge
XY\tilde{X}Z|U) \geq I_V(\hat{X} \wedge XYZ|U) = I_V(\tilde{X}
\wedge XYZ|U) .\label{XXXY>}
\end{align}
Now, consider $V^*_{UX\tilde{X}\hat{X}YZ}$ defined as
\begin{align}
V^*_{UXY\tilde{X}\hat{X}Z}(u,x,y, \tilde{x}, \hat{x}, z) =
V_{\tilde{X}|UXYZ}(\tilde{x}|u,x,y,z)
V_{\tilde{X}|UXYZ}(\hat{x}|u,x,y,z) V_{UXYZ}(u,x,y,z).
\end{align}
Note that $V^*_{UX\hat{X}YZ} = V^*_{UX\tilde{X}YZ}$, and $\tilde{X}
- (U,X,Y,Z) - \hat{X}$. Therefore,
\begin{align}
I_{V^*}(\hat{X} \wedge XY\tilde{X}Z|U) = I_V(\hat{X} \wedge XYZ|U) =
I_V(\tilde{X} \wedge XYZ|U).\label{lemma-XXXY}
\end{align}
By combining~\eqref{XXXY>} and~\eqref{lemma-XXXY}, the proof is
complete.
\end{proof}

Using the above lemma, the average probability of error can be bounded
from below as
\begin{align}
\bar{P_e} &\geq  \sum_{\substack{V_{UXY\tilde{X}Z} \in \V^r_{X,n} \\
 I(\tilde{X}
\wedge XYZ |U) > R_X + 12 \delta}} 2^{-nE_X^L} +
\sum_{\substack{V_{UXY\tilde{Y}Z} \in \V^r_{Y,n}\\ I(\tilde{Y}
\wedge XYZ |U) > R_Y  + 12 \delta}} 2^{-nE_Y^L}
+\sum_{\substack{V_{UXY\tilde{X}\tilde{Y}Z} \in
\V^r_{XY,n}\\I_V(\tilde{X}\tilde{Y} \wedge XY|U)+I_V(\tilde{X}
\wedge \tilde{Y}| U)> \\R_X + R_Y + 14 \delta}} 2^{-nE_{XY}^L}
\nonumber
\end{align}
Using the continuity argument, the lower bound on the average error
probability follows.
\end{proof}
\begin{proof}\textbf{(Theorem~\ref{Typrandomcodingthm})}
As was done in Theorem~\ref{th:p2p_typical} for the point-to-point
case, here, we will obtain higher error exponents for almost all
codes by removing certain types from the constraint sets $\V_X^r$,
$\V_Y^r$ and $\V_{XY}^r$. Let us define the sets of $n$-types
$\V^t_X$, $\V^t_X$ and $\V^t_{XY}$ as follows:
\begin{align}
&\V^t_{X,n} \triangleq \left\{
\begin{array}{lc}
V_{UXY\tilde{X}}: & V_{XU}=V_{\tilde{X}U}=P_{XU}, V_{YU}=P_{YU} \\
 & F_U(V_{UXY}) , F_U(V_{U\tilde{X}Y}) \leq R_X +R_Y\\
 & F_{X}(V_{UXY\tilde{X}}) \leq R_X+R_Y
\end{array}\right\}\\
&\V^t_{Y,n} \triangleq \left\{
\begin{array}{lc}
V_{UXY\tilde{Y}}: & V_{XU}=P_{XU}, V_{YU}=V_{\tilde{Y}U}=P_{YU} \\
 & F_U(V_{UXY}) , F_U(V_{UX\tilde{Y}}) \leq R_X +R_Y\\
 & F_{Y}(V_{UXY\tilde{Y}}) \leq R_X+R_Y
\end{array}\right\}\\
\V^t_{XY,n} \triangleq &\left\{
\begin{array}{lc}
V_{UXY\tilde{X}\tilde{Y}}: & V_{UXY\tilde{X}}, V_{UX\tilde{Y}\tilde{X}} \in \V^t_X,\;\;\; V_{UXY\tilde{Y}}, V_{U\tilde{X}Y\tilde{Y}} \in \V^t_Y \\
  & F_{XY}(V_{UXY\tilde{X}\tilde{Y}}), F_{XY}(V_{U\tilde{X}YX\tilde{Y}}) \leq R_X+R_Y
\end{array}\right\}
\end{align}
\begin{lemma}\label{typical-N-Prop-lem}
Let $C=C_X \times C_Y$ be one of the multiuser codes whose existence
is asserted in the Typical random coding packing lemma. The
following hold:
\begin{align}
\text{If } V_{UXY\tilde{X}} \in (\V^t_{X,n})^c &\Rightarrow N_X(C,V_{UXY\tilde{X}})= 0,\\
\text{If } V_{UXY\tilde{Y}} \in (\V^t_{Y,n})^c &\Rightarrow N_Y(C,V_{UXY\tilde{Y}})= 0,\\
\text{If } V_{UXY\tilde{X}\tilde{Y}} \in (\V^t_{XY,n})^c
&\Rightarrow N_{XY}(C,V_{UXY\tilde{X}\tilde{Y}})= 0.
\end{align}
\end{lemma}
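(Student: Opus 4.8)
The plan is to exploit a single elementary observation, exactly as in the point-to-point argument behind Theorem~\ref{th:p2p_typical}: each packing function $N_U,N_X,N_Y,N_{XY}$ is a sum of indicator functions divided by $M_X M_Y$, so its numerator is a non-negative integer. Consequently every such function takes values in $\{0\}\cup[(M_X M_Y)^{-1},\infty)$, and since $M_X M_Y\le 2^{n(R_X+R_Y)}$ its smallest strictly positive value is at least $2^{-n(R_X+R_Y)}$. Thus, to force a packing function to vanish it suffices to show that the upper bound supplied by Lemma~\ref{packing3} drops strictly below $2^{-n(R_X+R_Y)}$; the count, being an integer multiple of $(M_X M_Y)^{-1}$, is then pinned to zero.

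First I would prove the implication for $N_X$ by splitting $(\V^t_{X,n})^c$ according to which defining constraint of $\V^t_{X,n}$ fails. If a composition constraint $V_{XU}=V_{\tilde{X}U}=P_{XU}$ or $V_{YU}=P_{YU}$ fails, then $N_X=0$ is immediate, since every codeword satisfies $\mathbf{x}_i,\mathbf{x}_k\in T_{P_{X|U}}(\mathbf{u})$ and $\mathbf{y}_j\in T_{P_{Y|U}}(\mathbf{u})$, so no codeword tuple can realize $V_{UXY\tilde{X}}$. If $F_U(V_{UXY})>R_X+R_Y$ (or, symmetrically, $F_U(V_{U\tilde{X}Y})>R_X+R_Y$), I apply the observation to the marginal count: by Lemma~\ref{packing3} the number of codeword pairs of $(U,X,Y)$-type $V_{UXY}$ is $M_X M_Y\, N_U(C,V_{UXY})\le 2^{-n[F_U(V_{UXY})-R_X-R_Y-3\delta]}$, which is below $1$ once $F_U$ exceeds $R_X+R_Y$ (up to the $O(\delta)$ slack discussed below); hence no such pair exists, and since every tuple counted by $N_X$ has $(U,X,Y)$-marginal $V_{UXY}$, we get $N_X=0$. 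Finally, if $F_X(V_{UXY\tilde{X}})>R_X+R_Y$, the same argument applied directly to $N_X(C,V_{UXY\tilde{X}})\le 2^{-n[F_X-5\delta]}$ forces $N_X=0$. The implication for $N_Y$ follows by exchanging the roles of $X$ and $Y$.

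For $N_{XY}$ I would reduce to the cases already settled. The set $\V^t_{XY,n}$ is defined by requiring that the relevant $(U,X,Y,\tilde{X})$- and $(U,X,Y,\tilde{Y})$-marginals of $V_{UXY\tilde{X}\tilde{Y}}$ lie in $\V^t_X$ and $\V^t_Y$, together with two $F_{XY}$-constraints. If any $\V^t_X$- or $\V^t_Y$-membership fails, then by the first two implications the corresponding marginal packing function $N_X$ or $N_Y$ vanishes; since every tuple counted by $N_{XY}$ projects onto a tuple counted by that marginal packing function, $N_{XY}=0$. If instead an $F_{XY}$-constraint fails, I apply the integer-count observation directly to $N_{XY}(C,\cdot)\le 2^{-n[F_{XY}-5\delta]}$ from Lemma~\ref{packing3}. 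The only bookkeeping needed is to verify that each defining inequality of $\V^t_{XY,n}$ is the constraint attached to a genuine marginal of $V_{UXY\tilde{X}\tilde{Y}}$ corresponding to an actual sub-collection of codewords (e.g. $V_{U\tilde{X}YX\tilde{Y}}$ to the sub-tuple obtained by treating $\mathbf{x}_k$ as the reference codeword), so that the vanishing of the associated count propagates to $N_{XY}$.

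The step I expect to be the main obstacle is this matching of marginals to packing functions together with the $O(\delta)$ slack. Because the packing-lemma bounds carry additive $\delta$-terms, the count-below-one argument strictly yields vanishing only when the relevant $F$ exceeds $R_X+R_Y$ by $O(\delta)$, rather than at the exact threshold written in the definitions of $\V^t_{X,n},\V^t_{Y,n},\V^t_{XY,n}$. As in the point-to-point proof, where $\P^T_n(\delta)$ carries a $2\delta$ enlargement, this slack is harmless: the effective constraint sets are the $O(\delta)$-enlargements of those displayed, and the enlargement is removed in the continuity limit $\delta\to 0$ used to pass from the $n$-type sets to $\V^T_X,\V^T_Y,\V^T_{XY}$ in the proof of Theorem~\ref{Typrandomcodingthm}.
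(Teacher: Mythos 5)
Your proposal is correct and follows essentially the same route as the paper's proof: the paper likewise splits $(\V^t_{X,n})^c$ by which constraint fails, uses the fact that $M_X M_Y N_U(C,V_{UXY})$ is a non-negative integer forced below $1$ by the Lemma~\ref{packing3} upper bound to conclude $N_U=0$ when $F_U$ is too large, propagates this to $N_X$ via the bound $N_X \leq 2^{nR_X} N_U$, handles the $F_X$ (and $F_Y$, $F_{XY}$) constraints by the same integer-count argument applied directly, and disposes of failed composition constraints trivially. Your explicit treatment of the $O(\delta)$ slack (the paper's proof in fact assumes $F_U > R_X+R_Y+3\delta$ and $F_X > R_X+R_Y+5\delta$, deferring the discrepancy with the stated thresholds to the downstream continuity argument) and of the marginal-matching for $N_{XY}$ (which the paper dismisses with ``a similar argument'') only makes the same proof more careful.
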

\begin{proof} Consider $V_{UXY\tilde{X}} \in (\V^t_{X,n})^c$. If $V_{XU}
\neq P_{XU}$ or $V_{\tilde{X}U} \neq P_{XU}$ or $V_{YU} \neq
P_{YU}$, it is clear that
\begin{align}
N_X(C,V_{UXY\tilde{X}})= 0.
\end{align}
Now, let us assume $F_U(V_{UXY}) > R_X + R_Y + 3 \delta$. In this
case, by using \eqref{Fu}, we conclude that
\begin{align}
N_U(C,V_{UXY}) < 2^{-n(R_X +R_Y)} \Rightarrow \sum_{i=1}^{M_X}
\sum_{j=1}^{M_Y} 1_{T_{V_{UXY}}}(\mathbf{u}, \mathbf{x}_i,
\mathbf{y}_j) <1 \Rightarrow \sum_{i=1}^{M_X} \sum_{j=1}^{M_Y}
1_{T_{V_{UXY}}}(\mathbf{u}, \mathbf{x}_i, \mathbf{y}_j)=0,
\end{align}
and as a result, $N_U(C,V_{UXY})=0$. Now, note that
\begin{align}
N_X(C,V_{UXY\tilde{X}}) &= \frac{1}{M_X M_Y} \sum_{i=1}^{M_X} \sum_{j=1}^{M_Y} \sum_{k \neq i} 1_{T_{V_{UXY\tilde{X}}}}(\mathbf{u},\mathbf{x}_i,\mathbf{y}_j,\mathbf{x}_k)\nonumber\\
& \leq \frac{1}{M_X M_Y} \sum_{i=1}^{M_X} \sum_{j=1}^{M_Y} \sum_{k \neq i} 1_{T_{V_{UXY}}}(\mathbf{u},\mathbf{x}_i,\mathbf{y}_j)\nonumber\\
& = 2^{nR_X} N_U(C,V_{UXY}) =0,
\end{align}
therefore, $N_X(C,V_{UXY\tilde{X}})= 0$. Similarly, if
$F_U(V_{U\tilde{X}Y}) > R_X + R_Y + 3 \delta$,
\begin{align}
N_U(C,V_{U\tilde{X}Y}) < 2^{-n(R_X +R_Y)} \Rightarrow
\sum_{i=1}^{M_X} \sum_{j=1}^{M_Y}
1_{T_{V_{U\tilde{X}Y}}}(\mathbf{u}, \mathbf{x}_i, \mathbf{y}_j) <1
\Rightarrow \sum_{i=1}^{M_X} \sum_{j=1}^{M_Y}
1_{T_{V_{U\tilde{X}Y}}}(\mathbf{u}, \mathbf{x}_i, \mathbf{y}_j)=0,
\end{align}
and as a result, $N_U(C_X,C_Y,V_{U\tilde{X}Y})=0$. Also, note that
\begin{align}
N_X(C,V_{UXY\tilde{X}}) &= \frac{1}{M_X M_Y} \sum_{i=1}^{M_X}
\sum_{j=1}^{M_Y} \sum_{k \neq i}
1_{T_{V_{UXY\tilde{X}}}}(\mathbf{u},\mathbf{x}_i,\mathbf{y}_j,\mathbf{x}_k)\nonumber\\
& \leq \frac{1}{M_X M_Y} \sum_{i=1}^{M_X} \sum_{j=1}^{M_Y} \sum_{k
\neq i}
1_{T_{V_{U\tilde{X}Y}}}(\mathbf{u},\mathbf{x}_k,\mathbf{y}_j) =0,
\end{align}
therefore, $N_X(C,V_{UXY\tilde{X}})= 0$. If $F_X(V_{UXY\tilde{X}}) >
R_X + R_Y + 5 \delta$, by the property of the code derived in
Lemma~\ref{packing3}, we observe that
$N_X(C_X,C_Y,V_{UXY\tilde{X}})= 0$. Similarly, by doing a similar
argument, it can be concluded that
\begin{align}
\text{If } V_{UXY\tilde{Y}} \in (\V^t_{Y,n})^c &\Rightarrow
N_Y(C,V_{UXY\tilde{Y}})= 0,
\end{align}
and
\begin{align}
\text{If } V_{UXY\tilde{X}\tilde{Y}} \in (\V^t_{XY,n})^c
&\Rightarrow N_{XY}(C,V_{UXY\tilde{X}\tilde{Y}})= 0.
\end{align}
\end{proof}

\noindent \textbf{Upper bound:} We will follow the techniques used
in Theorem~\ref{randomcodingthm} to provide lower and upper bounds
on the average probability of error of almost all codes in the
random coding ensemble. For this, we will use the results of
Lemma~\ref{packing2}. Consider any typical two-user code $C= C_X
\times C_Y$ whose existence was established in Lemma~\ref{packing3}.
Applying (\ref{upperboundMAC}) on $C $, and using the continuity
argument, we conclude that
\begin{align}
e(C,W) &\leq \sum_{\substack{V_{UXY\tilde{X}Z} \in \V^r_{X,n} \cap
\V^t_{X,n}}} 2^{-n[D(V_{Z|XYU} || W|V_{XYU}) +  I_V(X \wedge Y|U) +
|I_V(\tilde{X} \wedge XYZ|U) -R_X|^+ - 5 \delta]} \nonumber\\
&  \quad + \sum_{\substack{V_{UXY\tilde{Y}Z} \in \V^r_{Y,n} \cap
\V^t_{Y,n}}}
2^{-n[D(V_{Z|XYU} || W|V_{XYU}) + I_V(X \wedge Y|U) +
|I_V(\tilde{Y} \wedge XYZ|U) -R_Y|^+ - 5 \delta]} \nonumber\\
&  \quad + \sum_{\substack{V_{UXY\tilde{X}\tilde{Y}Z}\\ \in
\V^r_{XY,n} \cap \V^t_{XY,n}}} 2^{-n[D(V_{Z|XYU} || W|V_{XYU}) +I_V(X
\wedge Y|U) + |I_V(\tilde{X} \wedge \tilde{Y}|U)
+I_V(\tilde{X}\tilde{Y} \wedge XYZ|U)-R_X -R_Y|^+ -5\delta
]}\nonumber\\
&\leq 2^{-n\left[E_T(R_X,R_Y,W,P_{UXY})-6 \delta\right]}
\end{align}
whenever $n \geq n_1(|\Z|,|\X|,|\Y|,|\U|, \delta)$, where
$E_{T}(R_X,R_Y,W,P_{XYU})$ is defined in the statement of the
theorem.

\noindent \textbf{Lower bound:} In the following, we obtain a lower
bound on the average error probability of code $C=C_X \times C_Y$.
Applying (\ref{lowerboundMAC}) on $C$, then using (a) Lemma
\ref{packing3} and (b) the fact that for $V \notin V^t_{X,n}$, we
have $A^X_{i,j} \geq 0$, and similar such facts about $A^Y$ and
$A^{XY}$, we get
\begin{align}
e(C,W) &\geq
\sum_{\substack{V_{UXY\tilde{X}Z} \\ \in \V^r_{X,n} \cap V^t_{X,n}}}
2^{-n(E_X^L+4\delta)}
\left|1- \sum_{\substack{V_{UXY\tilde{X}\hat{X}Z} :\\
V_{UXY\hat{X}Z}= V_{UXY\tilde{X}Z}}} 2^{-n(I_V(\hat{X} \wedge XY
\tilde{X}Z|U)-R_x -7 \delta)}  \right|^+
\nonumber\\
&  +\sum_{\substack{V_{UXY\tilde{Y}Z} \\ \in \V^r_{Y,n} \cap V^t_{Y,n}}}
2^{-n(E_Y^L+4\delta)}
\left|1   - \sum_{\substack{V_{UXY\tilde{Y}\hat{Y}Z} :\\
V_{UXY\hat{Y}Z}= V_{UXY\tilde{Y}Z}}} 2^{-n(I_V(\hat{Y} \wedge XY
\tilde{Y}Z|U)-R_Y -7 \delta)}  \right|^+
\nonumber\\
& +\sum_{\substack{V_{UXY\tilde{X}\tilde{Y}Z} \\ \in \V^r_{XY,n}
\cap V^t_{XY,n}}} 2^{-n(E_{XY}^L+4\delta)} \left|1   -
\sum_{\substack{V_{UXY\tilde{X}\hat{X}\tilde{Y}\hat{Y}Z} :\\
V_{UXY\hat{X}\hat{Y}Z}= V_{UXY\tilde{X}\tilde{Y}Z}}}
2^{-n(I_V(\hat{X}\hat{Y} \wedge XY \tilde{X}\tilde{Y}Z|U)-R_X-R_Y -7
\delta)}  \right|^+
\end{align}
This expression can be simplified as follows.
\begin{align}
e(C,W) &\geq  \sum_{\substack{V_{UXY\tilde{X}Z} \in \V^r_{X,n}  \cap
 V^t_{X,n} \\
 I(\tilde{X}
\wedge XYZ |U) > R_X + 12 \delta}} 2^{-nE_X^L}
+ \sum_{\substack{V_{UXY\tilde{Y}Z} \in \V^r_{Y,n} \cap
 V^t_{Y,n} \\ I(\tilde{Y}
\wedge XYZ |U) > R_Y  + 12 \delta}} 2^{-nE_Y^L}
+\sum_{\substack{V_{UXY\tilde{X}\tilde{Y}Z} \in
\V^r_{XY,n} \cap
 V^t_{XY,n} \\ I_V(\tilde{X}\tilde{Y} \wedge XY|U)+I_V(\tilde{X} \wedge \tilde{Y}| U)> \\R_X + R_Y + 14
\delta}} 2^{-nE_{XY}^L} \nonumber
\end{align}
Using the continuity argument, the lower bound on the average error
probability follows.
\end{proof}

\begin{proof}\textbf{(Theorem~\ref{expurgatedthm})}
Fix $\U$, $\P_{XYU} \in \P_n(\X \times \Y \times \U)$ with $X-U-Y$,
$R_X \geq 0$, $R_Y \geq 0$, $\delta > 0$, and $\mathbf{u} \in
T_{P_U}$. Let $C^*= C^*_X \times C^*_Y$ be the multiuser code whose
existence is asserted in Lemma~\ref{packing2}.
Taking into account the given $\mathbf{u}$, the $\alpha$-decoding
yields the decoding sets
\begin{eqnarray}
D_{ij}= \{\mathbf{z}: \alpha(\mathbf{u},\mathbf{x}_i,\mathbf{y}_j,
\mathbf{z}) \leq \alpha(\mathbf{u},\mathbf{x}_k,\mathbf{y}_l,
\mathbf{z}) \text{ for all } (k,l) \neq (i,j)\}\nonumber.
\end{eqnarray}
Let us define the collection of $n$-types $\V^x_{X,n}$, $\V^x_{Y,n}$
and $\V^x_{XY,n}$ as follows:
\begin{align}
&\V^x_{X,n} \triangleq \left\{
\begin{array}{lc}
V_{UXY\tilde{X}}: & V_{XU}=V_{\tilde{X}U}=P_{XU}, V_{YU}=P_{YU} \\
 & F_U(V_{UXY}) , F_U(V_{U\tilde{X}Y}) \leq \min\{R_X, R_Y\}\\
 & F_{X}(V_{UXY\tilde{X}}) \leq \min\{R_X, R_Y\}
\end{array}\right\}\\
&\V^x_{Y,n} \triangleq \left\{
\begin{array}{lc}
V_{UXY\tilde{Y}}: & V_{XU}=P_{XU}, V_{YU}=V_{\tilde{Y}U}=P_{YU} \\
 & F_U(V_{UXY}) , F_U(V_{UX\tilde{Y}}) \leq \min\{R_X, R_Y\}\\
 & F_{Y}(V_{UXY\tilde{Y}}) \leq \min\{R_X, R_Y\}
\end{array}\right\}\\
\V^x_{XY,n} \triangleq &\left\{
\begin{array}{lc}
V_{UXY\tilde{X}\tilde{Y}}: & V_{UXY\tilde{X}}, V_{UX\tilde{Y}\tilde{X}} \in \V^x_X,\;\;\; V_{UXY\tilde{Y}}, V_{U\tilde{X}Y\tilde{Y}} \in \V^x_Y \\
  & F_{XY}(V_{UXY\tilde{X}\tilde{Y}}), F_{XY}(V_{U\tilde{X}YX\tilde{Y}}) \leq \min\{R_X, R_Y\}
\end{array}\right\}
\end{align}
\begin{lemma}\label{EXP-N-Prop-lem}
For the multiuser code $C^*=C^*_X \times C^*_Y$, the following
holds:
\begin{align}
\text{If } V_{UXY\tilde{X}} \in (\V^{x}_{X,n})^c &\Rightarrow N_X(C^*,V_{UXY\tilde{X}})= 0,\\
\text{If } V_{UXY\tilde{Y}} \in (\V^{x}_{Y,n})^c &\Rightarrow N_Y(C^*,V_{UXY\tilde{Y}})= 0,\\
\text{If } V_{UXY\tilde{X}\tilde{Y}} \in (\V^{x}_{XY,n})^c
&\Rightarrow N_{XY}(C^*,V_{UXY\tilde{X}\tilde{Y}})= 0.
\end{align}
\end{lemma}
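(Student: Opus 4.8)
The plan is to follow the proof of Lemma~\ref{typical-N-Prop-lem} almost verbatim, with two substitutions: the averaged packing bounds of Lemma~\ref{packing3} are replaced by the stronger \emph{per-pair} bounds \eqref{randomnew1}--\eqref{randomnew4} satisfied by the expurgated code $C^*$ of Lemma~\ref{packing2}, and the threshold $R_X+R_Y$ is replaced by $\min\{R_X,R_Y\}$. Everything hinges on one elementary fact: each left-hand side of \eqref{randomnew1}--\eqref{randomnew4} is a finite sum of indicator functions, hence a non-negative integer, so once the corresponding right-hand side drops strictly below $1$ the quantity is forced to vanish. This drop occurs exactly when the relevant information function exceeds $\min\{R_X,R_Y\}+6\delta$, i.e. on the complement of the constraint defining $\V^x_{X,n}$, $\V^x_{Y,n}$, $\V^x_{XY,n}$. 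It is precisely the availability of these per-pair bounds, with the smaller threshold $\min\{R_X,R_Y\}$, that the partial expurgation of Lemma~\ref{packing2} buys us, and no step below involves the channel $W$, so the conclusion is universal.

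First I would treat $N_X$. Fix $V_{UXY\tilde X}\in(\V^x_{X,n})^c$ and split on which defining inequality fails. If a composition condition fails ($V_{XU}\neq P_{XU}$, $V_{\tilde X U}\neq P_{XU}$, or $V_{YU}\neq P_{YU}$), then since every $\mathbf{x}_i,\mathbf{x}_k\in T_{P_{X|U}}(\mathbf{u})$ and $\mathbf{y}_j\in T_{P_{Y|U}}(\mathbf{u})$ no triple can carry the type $V_{UXY\tilde X}$, so $N_X(C^*,V_{UXY\tilde X})=0$ trivially. If $F_U(V_{UXY})>\min\{R_X,R_Y\}+6\delta$, apply \eqref{randomnew1} to get $1_{T_{V_{UXY}}}(\mathbf{u},\mathbf{x}_i,\mathbf{y}_j)=0$ for all $i,j$; the domination $1_{T_{V_{UXY\tilde X}}}(\mathbf{u},\mathbf{x}_i,\mathbf{y}_j,\mathbf{x}_k)\le 1_{T_{V_{UXY}}}(\mathbf{u},\mathbf{x}_i,\mathbf{y}_j)$, obtained by projecting away $\mathbf{x}_k$, then gives $N_X=0$. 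The case $F_U(V_{U\tilde X Y})>\min\{R_X,R_Y\}+6\delta$ is identical after projecting onto $(\mathbf{u},\mathbf{x}_k,\mathbf{y}_j)$, whose type is the marginal $V_{U\tilde X Y}$, and applying \eqref{randomnew1} to the product-code pair $(\mathbf{x}_k,\mathbf{y}_j)$. Finally, if $F_X(V_{UXY\tilde X})>\min\{R_X,R_Y\}+6\delta$, apply \eqref{randomnew2} directly: the inner sum $\sum_{k\neq i}1_{T_{V_{UXY\tilde X}}}(\mathbf{u},\mathbf{x}_i,\mathbf{y}_j,\mathbf{x}_k)$ is a non-negative integer bounded by a quantity $<1$, hence $0$ for every $(i,j)$, so $N_X=0$. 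The claim for $N_Y$ follows after interchanging $X$ and $Y$ and using \eqref{randomnew1} and \eqref{randomnew3}.

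Next I would reduce every case of $N_{XY}$ to the two already settled or to a single per-pair bound. Unfolding the definition of $\V^x_{XY,n}$, a type $V_{UXY\tilde X\tilde Y}\in(\V^x_{XY,n})^c$ violates one of: $V_{UXY\tilde X}\in\V^x_{X,n}$, $V_{UX\tilde Y\tilde X}\in\V^x_{X,n}$, $V_{UXY\tilde Y}\in\V^x_{Y,n}$, $V_{U\tilde X Y\tilde Y}\in\V^x_{Y,n}$, or one of the two $F_{XY}$ inequalities. When a sub-type condition fails I marginalize: dropping $\mathbf{y}_l$ gives $1_{T_{V_{UXY\tilde X\tilde Y}}}\le 1_{T_{V_{UXY\tilde X}}}$, and summing the freed index yields $N_{XY}(C^*,V_{UXY\tilde X\tilde Y})\le 2^{nR_Y}N_X(C^*,V_{UXY\tilde X})=0$ by the $N_X$ part; the cross sub-type $V_{UX\tilde Y\tilde X}$ is handled the same way after projecting onto $(\mathbf{u},\mathbf{x}_i,\mathbf{y}_l,\mathbf{x}_k)$, and the two $\tilde Y$ sub-types by the $N_Y$ part. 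When $F_{XY}(V_{UXY\tilde X\tilde Y})>\min\{R_X,R_Y\}+6\delta$, apply \eqref{randomnew4} so that $\sum_{k\neq i,\,l\neq j}1_{T_{V_{UXY\tilde X\tilde Y}}}(\mathbf{u},\mathbf{x}_i,\mathbf{y}_j,\mathbf{x}_k,\mathbf{y}_l)$, a non-negative integer $<1$, vanishes for each $(i,j)$. The swapped function $F_{XY}(V_{U\tilde X Y X\tilde Y})$ is the only case needing a relabeling: I would first record the identity $N_{XY}(C^*,V_{UXY\tilde X\tilde Y})=N_{XY}(C^*,V_{U\tilde X Y X\tilde Y})$, valid because swapping the summation indices $i\leftrightarrow k$ (the constraint $k\neq i$ being symmetric) sends one joint type to the other, and then apply \eqref{randomnew4} to the swapped type.

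The main obstacle I anticipate is organizational rather than conceptual: keeping the projections and index relabelings straight for the cross-types $V_{UX\tilde Y\tilde X}$ and $V_{U\tilde X Y X\tilde Y}$, and checking in each case that the projected indicator is genuinely dominated by the indicator of the correct marginal (or equal to it after a swap), so that the reduction to $N_X$, $N_Y$, or to a single application of \eqref{randomnew2} or \eqref{randomnew4} is legitimate. A secondary point worth stating cleanly is the $6\delta$ slack: the inequalities defining $\V^x_{X,n}$, $\V^x_{Y,n}$, $\V^x_{XY,n}$ carry the bare threshold $\min\{R_X,R_Y\}$, whereas the per-pair bounds force vanishing only above $\min\{R_X,R_Y\}+6\delta$; exactly as in Lemma~\ref{typical-N-Prop-lem}, this gap is harmless and is absorbed by the continuity argument invoked in the remainder of the proof of Theorem~\ref{expurgatedthm}.
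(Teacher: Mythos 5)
Your proof is correct, and it makes explicit precisely what the paper leaves implicit. The paper's own proof of this lemma is a one-line reference to Lemma~\ref{typical-N-Prop-lem}, whose argument forces each packing function to vanish by combining the \emph{code-averaged} bounds of Lemma~\ref{packing3} (e.g.\ \eqref{Fu}, \eqref{Fx}) with the integrality of $M_X M_Y N_U$, $M_X M_Y N_X$, etc. A verbatim transplant of that argument using the averaged expurgated bounds \eqref{expur-lem1}--\eqref{expur-lem4} would actually fail here: with the threshold $\min\{R_X,R_Y\}$ one only gets, say, $N_U(C^*,V_{UXY}) < 2^{-n\min\{R_X,R_Y\}}$, which does not fall below $1/(M_X^* M_Y^*)\approx 2^{-n(R_X+R_Y)}$, so integrality of the count yields nothing. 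Your substitution of the \emph{per-pair} bounds \eqref{randomnew1}--\eqref{randomnew4} --- each left-hand side a non-negative integer, hence forced to zero once the relevant information function exceeds $\min\{R_X,R_Y\}+6\delta$ --- is exactly the right repair, and it is evidently what the partial expurgation of Lemma~\ref{packing2} was designed to deliver; this is the one genuinely different (and necessary) ingredient relative to the referenced proof. The remainder of your argument --- the case split on composition failures and on the $F_U$, $F_X$, $F_Y$, $F_{XY}$ constraints via projection/domination of indicators, the reduction of the cross types $V_{UX\tilde Y\tilde X}$ and $V_{U\tilde XY\tilde Y}$ to the already-settled $N_X$ and $N_Y$ cases, and the $i\leftrightarrow k$ relabeling identity $N_{XY}(C^*,V_{UXY\tilde X\tilde Y})=N_{XY}(C^*,V_{U\tilde XYX\tilde Y})$ used for the swapped $F_{XY}$ constraint --- is sound and considerably more detailed than anything written in the paper. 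The $6\delta$ slack you flag (the sets $\V^x_{X,n}$, $\V^x_{Y,n}$, $\V^x_{XY,n}$ carry bare thresholds while vanishing is only forced above $\min\{R_X,R_Y\}+6\delta$) is a gap the paper itself has already in Lemma~\ref{typical-N-Prop-lem}, and it is absorbed in the same way by the continuity step in the proof of Theorem~\ref{expurgatedthm}, so your treatment matches the paper's level of rigor.
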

\begin{proof}
The proof is very similar to the proof of
lemma~\ref{typical-N-Prop-lem}.
\end{proof}
The average error probability of $C^*$ can be obtained as follows in
a similar way that used in the proof of
Theorem~\ref{randomcodingthm} and Theorem~\ref{Typrandomcodingthm}.
\begin{align}
e(C^*,W) &\leq \sum_{\substack{V_{UXY\tilde{X}Z} \in \V^r_{X,n} \cap
\V^x_{X,n}}} 2^{-n[D(V_{Z|XYU}
|| W|V_{XYU}) +  I_V(X \wedge Y|U)  - 3 \delta]} \nonumber\\
&  \quad\quad\quad + \sum_{\substack{V_{UXY\tilde{Y}Z} \in
\V^r_{Y,n} \cap \V^x_{Y,n}}}
2^{-n[D(V_{Z|XYU} || W|V_{XYU}) + I_V(X \wedge Y|U) - 3 \delta]} \nonumber\\
&  \quad\quad\quad + \sum_{\substack{V_{UXY\tilde{X}\tilde{Y}Z}\\
\in \V^r_{XY,n} \cap \V^x_{XY,n} }} 2^{-n[D(V_{Z|XYU} || W|V_{XYU})
+I_V(X \wedge Y|U)
 -3\delta ]}. \label{TYP-W}
\end{align}
Now using the continuity argument  the statement of the theorem follows.
\end{proof}

\begin{proof}\textbf{(Theorem~\ref{finalthm})}
For any $V_{UXY\tilde{X}Z} \in \V^r_X$,
\begin{eqnarray}
H_V(XY|ZU) \geq H_V(\tilde{X}Y|ZU),\label{equivocation-rule}
\end{eqnarray}
therefore, by subtracting $H_{V}(Y|ZU)$ form both sides of
\eqref{equivocation-rule}, we can conclude that
\begin{eqnarray}
H_V(X|U)- I_V(X \wedge YZ|U) \geq H_V(\tilde{X}|U)- I_V(\tilde{X}
\wedge YZ|U), \label{Gen-Compare-Liu-X1}
\end{eqnarray}
Since $V_{XU}=V_{\tilde{X}U}=P_{XU}$, the last inequality is
equivalent to
\begin{equation}
I_V(X \wedge YZ|U) \leq I_V(\tilde{X} \wedge YZ|U).
\end{equation}
Since $I_V(\tilde{X} \wedge XYZ|U) \geq I_V(\tilde{X} \wedge YZ|U)$,
it can be seen that for any $V_{UXY\tilde{X}Z} \in \V^r_X$
\begin{equation}
I_V(\tilde{X} \wedge XYZ|U) \geq I_V(X \wedge YZ|U).
\end{equation}
Moreover, since
\begin{eqnarray}
\V^r_X \subseteq \left\{V_{UXY\tilde{X}Z}: V_{UXYZ} \in \V(P_{UXY})
\right\}
\end{eqnarray}
 it can be easily concluded that
\begin{equation*}
E^r_{X}(R_X,R_Y,W,P_{XYU}) \geq
E^{Liu}_{rX}(R_X,R_Y,W,P_{XYU}).\label{result1-X}
\end{equation*}
Similarly, for any $V_{UXY\tilde{Y}Z} \in \V^r_Y$,
\begin{eqnarray}
H_V(XY|ZU) \geq H_V(X\tilde{Y}|ZU).
\end{eqnarray}
By using the fact that, $V_{YU}=V_{\tilde{Y}U}=P_{YU}$, it can be
concluded that
\begin{equation}
I_V(\tilde{Y} \wedge XYZ|U) \geq I_V(Y \wedge XZ|U).
\end{equation}
Since
\begin{eqnarray}
\V^r_Y \subseteq \left\{V_{UXY\tilde{Y}Z}: V_{UXYZ} \in \V(P_{UXY})
\right\},
\end{eqnarray}
we conclude that
\begin{equation}
E^r_{Y}(R_X,R_Y,W,P_{XYU}) \geq
E^{Liu}_{rY}(R_X,R_Y,W,P_{XYU}).\label{result1-Y}
\end{equation}
Similarly, we can conclude that, for any $V_{UXY\tilde{X}\tilde{Y}Z}
\in \V^r_{XY}$,
\begin{equation}
I_V(\tilde{X}\tilde{Y} \wedge XYZ|U) + I(\tilde{X} \wedge
\tilde{Y}|U) \geq I_V(XY \wedge Z|U) + I(X \wedge Y|U).
\end{equation}
Since
\begin{eqnarray}
\V^r_{XY} \subseteq \left\{V_{UXY\tilde{X}\tilde{Y}Z}: V_{UXYZ} \in
\V(P_{UXY})  \right\},
\end{eqnarray}
it can be concluded that
\begin{equation}
E^r_{XY}(R_X,R_Y,W,P_{XYU}) \geq
E^{Liu}_{rXY}(R_X,R_Y,W,P_{XYU}).\label{result1-XY}
\end{equation}
By combining~\eqref{result1-X},~\eqref{result1-Y}
and~\eqref{result1-XY}, we conclude that~\eqref{r-Inequaltiy} holds.
Similarly, we can prove that~\eqref{t-Inequaltiy}
and~\eqref{ex-Inequaltiy} hold.
\end{proof}

\bibliographystyle{plain}
\bibliography{ali}

\end{document}


\section{A theorem}

\stmt{thrm}{sample}{Socrates is mortal.}

\refstmt{sample} can either be proven using data (the fact that
Socrates is dead), or by the proof which is provided in the appendix.

\section{Appendix}
In this appendix, the reader will find proofs of theorems not given in the text.

\stmtproof{sample}{Socrates is a man. All men are mortal.}

\rptstmtwithproof{sample}

\comment{Notice that you can put the \stmtproof{sample}{...} anywhere
you want, including right after your statement, just before using
\rptstmtwithproof, or a separate file (then use \input{proofs.tex}). That
way, neither your finished paper nor your source will be cluttered with
proofs.}